\documentclass[reqno,12pt,letterpaper]{amsart}
\usepackage{amsmath,amssymb,amsthm,graphicx,mathrsfs,url}
\usepackage[usenames,dvipsnames]{color}
\usepackage[colorlinks=true,linkcolor=Red,citecolor=Green]{hyperref}
\usepackage{amsxtra}
\usepackage{placeins}
\usepackage{wasysym} 
\usepackage{graphicx}
\usepackage{subcaption}
\usepackage{soul}
\def\arXiv#1{\href{http://arxiv.org/abs/#1}{arXiv:#1}}
\usepackage{array}
\newcolumntype{P}[1]{>{\centering\arraybackslash}m{#1}}

\let\Im=\Imag

\DeclareMathOperator{\Real}{Re}
\DeclareMathOperator{\Spec}{Spec}
\DeclareMathOperator{\tr}{tr}
\let\Re=\Real

\let\Im=\Imag 
\let\Re=\Real 

\def\?[#1]{\textbf{[#1]}\marginpar{\Large{\textbf{??}}}}

\def\smallsection#1{\smallskip\noindent\textbf{#1}.}
\let\epsilon=\varepsilon 

\newcommand{\RR}{{\mathbb R}}
\newcommand{\NN}{{\mathbb N}}
\newcommand{\CC}{{\mathbb C}}

\newcommand{\ZZ}{{\mathbb Z}}

\def\indic{\operatorname{1\hskip-2.75pt\relax l}}

\newtheorem{theo}{Theorem}
\newtheorem{prop}{Proposition}

\newtheorem{lemm}[prop]{Lemma}

\newtheorem{rem}{Remark}
\numberwithin{equation}{section}

\newcommand{\ad}{\operatorname{ad}}
\newcommand{\supp}{\operatorname{supp}}

\title{Localised Davies generators for (pseudo)differential operators}

\author{Jeffrey Galkowski}
\email{j.galkowski@ucl.ac.uk}
\address{Department of Mathematics, University College London, WC1H 0AY, UK}

\author{Maciej Zworski} 
\email{zworski@berkeley.edu}
\address{Department of Mathematics, University of California, Berkeley, CA 94720}

\begin{document}

\maketitle

\begin{abstract}
A classical Davies generator provides a Lindbladian for which the Gibbs state is stationary. Its construction involves precise knowledge of the Bohr spectrum or equivalently state evolution for all times. Recently Chen, Kastoryano and Gily\'en \cite{chen} proposed a construction involving localisation in time and carried out in the case of finite dimensional Hilbert spaces. The resulting generators are called quantum Gibbs samplers as the corresponding Lindblad evolution is expected to settle to the Gibbs state. 
In this paper, we show that the localised Davies construction also works
for natural classes of unbounded operators, including
pseudodifferential operators used in the study of classical/quantum
correspondence in Lindblad evolution.  Our emphasis is microlocal: we
prove that the localised jump operators are themselves
pseudodifferential, and hence pseudolocal.  The proof involves a novel version
of Egorov's theorem valid for all times.
\end{abstract}

\section{Introduction}
\label{s:int}

For a quantum system at a non-zero temperature $ 1/\beta $  the role of a ground state of a self-adjoint Hamiltonian, $P $, is taken by the Gibbs state $ \rho_\beta := e^{-\beta P}/{\rm{tr}} e^{-\beta P }  $. A quantum expectation value of an observable $ A $ (an operator on a Hilbert space) at temperature $ 1/\beta $ is then defined as $ \langle A \rangle_\beta := {\rm{tr}} A \rho_\beta $ and its efficient evaluation is of interest in different situations (see the references below). One possible {way to make this evaluation is to design} an interaction with an open system {so that states rapidly decay to the Gibbs state}. {That interaction} can be modelled by a Lindblad evolution generated by a super-operator  $\mathcal L $ --  see \eqref{eq:Lind} below for an example 
 and \cite{gaz5} for an introduction from a PDE perspective -- for which the Gibbs state is an equilibrium state. In that case, {the desired property is}
\begin{equation}
\label{e:settleDown} \langle A \rangle_\beta = \lim_{ t \to +\infty } {\rm{tr}}   A \, e^{ t \mathcal L} B , 
\end{equation}
for a class of operators $ B$ with $ {\rm{tr}} B = 1 $. 
A necessary condition for this to hold is 
\begin{equation}
\label{eq:Lbeta} \mathcal L \rho_\beta =0 .
\end{equation}
{Although~\eqref{eq:Lbeta} is not sufficient to imply~\eqref{e:settleDown} in general}, if we have \eqref{eq:Lbeta}, one can hope that the dissipative term in $ \mathcal L $ produces convergence to the Gibbs state. Here, {we work in the setting of differential and pseudo-differential operators as Hamiltonians and only consider the question of} finding $ \mathcal L $ so that  \eqref{eq:Lbeta} holds. 
However, for completeness, Appendix \ref{a:b} illustrates how \eqref{e:settleDown} follows for the harmonic oscillator
 -- see \eqref{eq:decayQ} for a 
quantitative version of \eqref{e:settleDown} in that case.

In various works the importance of locality of the coherent part of the Lindbladian 
(see \eqref{eq:defB0}) and of the jump operators (see $ \mathcal D_f $ in \eqref{eq:Lind})
is stressed. From the PDE point of view strict locality of linear operators acting on 
smooth functions on $ \mathbb R^n $ ($ \supp P u \subset \supp u$) holds only for differential operators. Their generalisation, pseudodifferential operators (see \eqref{eq:Weylq}) are {\em pseudolocal}. For the class
of operators \eqref{eq:Weylq} with $ a $ satisfying $ \partial^\alpha a = \mathcal O ( 1 ) $ for  $ |\alpha| \geq k $, 
we can write 
\[  A u ( x) = \int_{\mathbb R^n } K_A ( x, y ) u (y ) dy , \]
(an informal expression which should be understood in the sense of distributions -- see \cite[\S 5.2]{H1}).
Then
for every $ N \gg 1 $,  
\begin{equation}
\label{eq:umaps}  u \mapsto \int_{\mathbb R^n } |x -y|^{2N} K_A ( x, y ) u ( y ) d y \ \ 
\text{ is bounded on $ L^2 $. } 
\end{equation} 
{In this article, we construct a large family Lindbladians with pseudodifferential, and hence pseudolocal, jump operators.}
The proof of {this property} involves a novel version of Egorov's theorem on the classical/quantum correspondence -- see Theorem \ref{t:egorov} for the statement and \cite{gaz5} for an introduction to the classical/quantum correspondence in the context of Lindbladians.

We remark that for $ a $'s with better decay properties, we have a stronger statement (see 
\cite[Theorem 9.6]{z12}) 
\begin{equation}
\label{eq:pseudolo} \exists \, C_0 \, \forall \, N ,  \alpha\in \mathbb N^{2n}  \, \, \exists \, C_N \ \  | \partial_{x,y}^\alpha K_A ( x , y ) | \leq C_N  |x-y|^{-N} . \end{equation}

\subsection{Localised generators}
A now classical construction of $ \mathcal L $ satisfying \eqref{eq:Lbeta} was given by Davies \cite{dav1},\cite{dav2} and we review it in the case of matrices in \S \ref{s:Davies} following \cite[\S 3.3]{brp}. It is constructed using an essentially arbitrary family of operators, $ \mathcal A $, but it involves their evolution under $ e^{ - i t P } $ for all times -- see \S \ref{s:davlim}.

Here we are motivated by a recent paper by Chen--Kastoryano--Gily\'en \cite{chen}  where
a construction involving localisation in time was proposed and carried out in the case of 
finite dimensional Hilbert spaces. (See also Ramkumar--Soleimanifar \cite{ras} and 
Ding--Li--Lin \cite{dll} for further developments and references).
We show that under certain conditions (natural in the case of 
differential and pseudo-differential operators) the construction works in the case of unbounded operators. 
A different approach to the infinite dimensional case has also been recently proposed by Becker--Rouz\'e--Salzmann \cite{beck}.
We refer to that paper and to 
\cite{chen} for references and background in the context of quantum information. 

Compared with the finite-dimensional quantum Gibbs-sampling literature,
our emphasis is not on implementation or mixing, but on the analytic and
microlocal well-posedness of the localised Davies construction for
unbounded operators.  Compared with the recent infinite-dimensional
Dirichlet-form approach of \cite{beck}, our results are more
specialised in the direction of differential and pseudodifferential
operators: we prove stationarity of the Gibbs state for the localised
construction and show that the localised jump operators are
pseudodifferential, hence pseudolocal.  
This pseudodifferential structure is the main new feature of the present
work relative to the existing Gibbs-sampling literature.

The motivating differential example is given by,  
\begin{equation}
\label{eq:Schr}   \begin{gathered}  P = - \Delta + V ( x ) , \ \ \ V \in C^\infty ( \mathbb R^n, \mathbb R )  , \\  \partial^\alpha V ( x ) = \mathcal O ( 1 ), \ \ 
| \alpha | \geq 2 , \ \  V ( x ) \geq |x|^2 /C - C, \end{gathered} \end{equation}
and (in the construction of the Lindbladian below)
\begin{equation} 
\label{eq:diffA}   
\begin{gathered}  A = \langle a_1  , \partial_ x \rangle +  \langle a_2, x \rangle   ,  \ \ \  a_j \in \mathbb C^n .
\end{gathered}  \end{equation}

To describe the construction in \cite[Theorem I.1]{chen}, we recall the
{\em operator Fourier transform}: for a self-adjoint operator $ P $ and for $ A $ (in a class described below) we define 
\begin{equation}
\label{eq:OFT} \hat A_f ( \omega ) :=  \frac{1}{\sqrt{2 \pi} } 
\int_{\mathbb R } e^{ i P t } A e^{-i P t } e^{-i \omega t } f ( t ) dt, 
\end{equation}
where, following \cite{chen}, we will (mostly) take 
\begin{equation}
\label{eq:defsigma}  f( t ) = f_\sigma ( t ) := \sigma^{\frac12} \pi^{\frac14} e^{ - t^2 \sigma^2/2 } , \ \ \ \hat f_\sigma ( \tau ) = f_{1/\sigma } ( \tau) , \ \ 
\| f_\sigma \|_{L^2} = 1 , \end{equation}
where $ \hat g ( \tau ) := (1/\sqrt{2 \pi } ) \int g ( t ) e^{-i t \tau}dt $. 
For a self-adjoint Hamiltonian $ P $ and a set of operators (with properties specified below), $ \mathcal A $, we define
the following Lindbladian  
\begin{equation}
\label{eq:Lind}  
\begin{gathered} \mathcal L_f T  := - i [ \beta^{-1} P + B  , T ] + \mathcal D_f ( T)  \\
 \mathcal D_f ( T ) :=  \int_{\mathbb R } 
\gamma( \omega )  \sum_{A \in \mathcal A  } \left( \hat A_f ( \omega ) T \hat A_f  ( \omega )^* - \tfrac12  \left\{ \hat A_f  (\omega )^* \hat A_f  ( \omega ) , T \right\}
\right)
 d \omega \end{gathered}  \end{equation}
The additional {\em coherent term} in Lindblad evolution is defined as 
\begin{equation}
\label{eq:defB0} B:= \sum_{ A \in \mathcal A } \int_{\mathbb R } b_1 ( t ) e^{-i  P t } \left( \int_{\mathbb R } e^{ i  P s } 
A^* e^{-2 i  Ps } A e^{ i s  P }b_2(s) ds \right) e^{ i  P t } dt , \end{equation}
where $ b_1 , b_2 $ satisfy
\begin{equation}
b_1,b_2\in L^1(\mathbb{R};\mathbb{R}).
\end{equation}

As we indicate in the case of matrices (see \ref{s:davlim}), the Davies generator is the delocalisation limit, $ \sigma 
\to 0 $ (the Gaussian becomes flat) and hence we refer to the \cite{chen} generators as {\em localised 
Davies generators}. They are more commonly known as {\em quantum Gibbs samplers} but that implicitly assumes convergence to the Gibbs state in Lindblad evolution which is a question not addressed here.

\noindent
{\bf Remark.} The Davies generator appears naturally in the coupling of a system to a thermal bath:
it is the Fermi Golden Rule perturbative term -- see Derezi\'nski--Jak\v{s}i\'c \cite{deja} for the case of 
a finite dimensional system. The localised generator of Chen--Kastoryano--Gily\'en \cite{chen} can also be derived from that framework by stopping the Davies
argument before the final infinite-time secular projection \cite{Jap}.

\subsection{Statements of the results}

We now specify our assumptions on $P $ and on $ \mathcal A $. We first consider an abstract setting
and then show how additional structure gives results for pseudodifferential operators.

For $ \mathcal H $, a separable Hilbert space, 
\begin{equation}
\label{eq:assP} 
P:\mathcal{H}\to \mathcal{H} \ \text{ is an unbounded, self-adjoint operator with $P\geq 1$.}
\end{equation}
(Since adding a constant to $P$ does not change any of the objects above, for our purposes, this is equivalent to $P$ being bounded from below, a natural condition for considering $ e^{-P } $.) 
We can define (using the notation $ X' $ for the dual of $ X $, with duality defined via the inner product
on $ \mathcal H $)
\begin{equation}
\label{eq:defDs}    
\begin{gathered}  \mathscr D^s := \mathcal D ( P^s ) , \ \ \  \mathscr D^s = P^{-s} \mathcal H  , \ \ s \geq 0 , \ \ \
\mathscr D^s = ( \mathscr D^{-s} )' , \ \ s \leq 0 .
\end{gathered}  \end{equation}
We now assume that there exists $ k\in {\mathbb{R}} $ such that $ A : \mathscr D^k\to \mathcal  H $ 
and $ A^* : \mathscr D^k \to \mathcal H $. Here at first $ A^* : \mathcal H \to \mathscr D^{-k} $ 
and we require this additional mapping property. 
More precisely, we assume that 
\begin{equation}  
\label{eq:assA1} 
   \sum_{ A \in \mathcal A } \| A \|_{ \mathscr D^k \to \mathcal H  }^2  +  \| A^* \|_{ \mathscr D^k \to \mathcal H  }^2
< \infty  
\end{equation}
It is then immediate that for $ \gamma \in L^1 ( \mathbb R ; [0,\infty ) ) $, $ f, b_j \in L^1 ( \mathbb R ) $, 
$ \mathcal L_f $ defined in \eqref{eq:Lind} has the following mapping property:
\begin{equation}
\label{eq:basicm} 
\mathcal L_f : \mathcal L ( \mathscr D^{-k} , \mathscr D^k ) \to \mathcal L ( \mathscr D^{k}  , \mathscr D^{-k} ) . 
\end{equation}

When we specialise to Gaussian $ f $'s and assume in addition that $ \mathcal A $ is closed under taking the 
adjoints we  need a  {\em balance} condition on $ \gamma$ {to} guarantee that we can find $ B $
of the form \eqref{eq:defB} such that $ \exp ( - P ) $ is stationary for \eqref{eq:Lind}. The standard
{\em Kubo–Martin–Schwinger detailed balance} condition (see Benoist et al \cite{voj} for a recent 
abstract investigation of balance conditions)
 on $ \gamma $, needed in the Davies generator
\eqref{eq:gam2D}, is the limiting case as $ \sigma \to 0 $, that is, when the Gaussian in the sampling
\eqref{eq:OFT} becomes flat  (see \S \ref{s:davlim}). The balance condition \eqref{eq:balao} appeared 
already in \cite[Lemma 7.1]{ras} and  the expressions for $ b_j$'s in $ B $ are essentially the same as in \cite{chen}.

\begin{theo}
\label{t:2}
Assume \eqref{eq:assP}, \eqref{eq:assA1}, $f = f_\sigma $ in \eqref{eq:defsigma}, $ \mathcal A $ 
is closed under taking adjoints, and
\begin{equation}
\label{eq:balao}   \gamma ( \omega ) = e^{ -\omega / 2 } \varphi \left( \omega + \tfrac{1}{ 4}  \sigma^2 \right), \ \ 
\varphi ( -\omega ) = \varphi ( \omega ) \in e^{-|\omega|/2}L^1 ( \mathbb R ). 
\end{equation}
If $ B $ is defined by \eqref{eq:defB0} with 
\begin{equation}
\label{eq:defb12} 
\hat b_1(\tau)=\frac{1}{4 \sigma \sqrt \pi i}e^{- \tau^{2}/4\sigma^2}\tanh\!\left(\frac{\tau}{4}\right),
\ \ \ 
\hat b_2(\tau)=e^{-\tau/2}\int_{\mathbb R}\gamma(\omega)\,e^{-(\omega+\tau/2)^{2}/\sigma^2}\,d\omega ,
\end{equation}
(see \eqref{eq:b1b2FT} for $ b_j (t)$'s) then, in the notation of \eqref{eq:Lind}, 
$ \mathcal L_f ( e^{-P } ) = 0 $. 
\end{theo}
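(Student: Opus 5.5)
We reduce the statement to an identity on the spectral representation of $P$ and then verify it frequency by frequency, following the finite-dimensional computation of \cite{chen}.

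Write $\rho := e^{-P}$. Since $P \geq 1$, $\rho$ maps $\mathscr D^{-k}$ to $\mathscr D^{k}$ for every $k$, so by \eqref{eq:basicm} each term of $\mathcal L_f(\rho)$ is a well-defined element of $\mathcal L(\mathscr D^k, \mathscr D^{-k})$. The commutator $[\beta^{-1}P, \rho]$ vanishes. It therefore suffices to prove
\[ \mathcal D_f(\rho) = i[B, \rho] . \]

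Both sides can be tested on vectors $\langle \cdot\, u, v\rangle$ with $u, v \in \mathscr D^\infty$. We then write everything through the spectral measure of $P$, so that $e^{iPt} A e^{-iPt}$ is formally represented by the kernel $A(\lambda,\mu)e^{it(\lambda-\mu)}$. With this, $\hat A_f(\omega)$ has "matrix elements"
\[ \hat f(\omega - (\lambda - \mu))\, A(\lambda,\mu) . \]

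Rigorously, we pair the integrals in $t$ with spectral projections, using Fubini. The justification is \eqref{eq:assA1} together with $\gamma \in e^{\omega/2}\cdot e^{-|\omega|/2}L^1$ and the Gaussian decay of $f_\sigma$.

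Next we compute the two pieces of $\mathcal D_f(\rho)$ in this representation.

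\emph{The transition part.} For $\hat A_f(\omega)\,\rho\,\hat A_f(\omega)^*$, summing over $A \in \mathcal A$ and integrating in $\omega$ gives a kernel of the form
\[ \sum_A A(\lambda,\nu)\, e^{-\nu}\, \overline{A(\mu,\nu)}\; G(\lambda-\nu,\ \mu-\nu), \qquad G(x,y) = \int \gamma(\omega)\,\hat f(\omega-x)\,\hat f(\omega-y)\, d\omega . \]
This is a Gaussian integral which reduces to $e^{-(x-y)^2/4\sigma^2}$ times $\int \gamma(\omega)\, e^{-(\omega-(x+y)/2)^2/\sigma^2}\, d\omega$.

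\emph{The anticommutator part.} This gives $\tfrac12 (e^{-\lambda} + e^{-\mu})\,(\ldots)$. We subtract $\tfrac12$ of it from the transition part.

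\emph{Matching with $i[B,\rho]$.} The commutator multiplies the kernel of $B$ by $i(e^{-\mu} - e^{-\lambda})$. The structure of \eqref{eq:defB0} gives a kernel for $B$ of the form
\[ \hat b_1(\lambda-\mu) \sum_A \int \hat b_2(\cdot)\, A^*(\lambda,\nu)\, A(\nu,\mu)\, d\nu , \]
up to normalisations. Here the outer $b_1$-integral localises in the Bohr frequency $\lambda - \mu$, and the inner $b_2$-integral localises in the intermediate frequencies $(\lambda-\nu) + (\mu-\nu)$.

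The core of the proof is then a scalar identity. Because $\mathcal A$ is closed under adjoints, we may symmetrise the transition term by relabelling $A \leftrightarrow A^*$. The balance condition \eqref{eq:balao} is exactly what makes the function $e^{-\nu}G(\lambda-\nu, \mu-\nu)$ split as
\[ \tfrac12 (e^{-\lambda} + e^{-\mu})\, G \;+\; (e^{-\mu} - e^{-\lambda})\cdot(\text{explicit factor}) . \]
The shift $\omega + \tfrac14\sigma^2$ together with $\varphi$ even absorbs the Gaussian cross term. What remains is the ratio
\[ \frac{e^{-\lambda} - e^{-\mu}}{e^{-\lambda} + e^{-\mu}} = -\tanh\!\left(\tfrac{\lambda-\mu}{2}\right) \]
(up to the normalisation of $\tau$). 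This produces the factor $\tanh(\tau/4)$ and the weight $e^{-\tau/2}$ in \eqref{eq:defb12}, with $\tau = 2(\lambda-\mu)$ or a similar scaling, so the remainder equals $i[B,\rho]$ exactly. I would carry out this algebra first, formally, for matrices, recovering \cite[Theorem I.1]{chen}.

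I expect the main obstacle to be the analytic justification for unbounded $P$. The spectral representation of $A$ is not a genuine kernel. One must show that $B$ defined by \eqref{eq:defB0} converges in $\mathcal L(\mathscr D^k, \mathscr D^{-k})$, which needs $b_1, b_2 \in L^1$. Note that $\hat b_1$ is Gaussian times $\tanh$, so $b_1$ is smooth and rapidly decaying, and $\hat b_2$ is Gaussian-type, so both are fine. One must also show that the formal frequency manipulations hold.

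I would handle this by inserting spectral cutoffs $\indic_{[1,R]}(P)$. This makes all operators bounded on the truncated space, so the finite-dimensional-style computation, with spectral integrals in place of sums, applies verbatim via Fubini. We then let $R \to \infty$ using \eqref{eq:assA1} and dominated convergence on matrix elements $\langle \cdot\, u, v\rangle$ with $u, v \in \mathscr D^\infty$.
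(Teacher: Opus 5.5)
\textbf{Verdict.} Your overall architecture matches the paper's, but there are two concrete gaps: the key scalar identity is described incorrectly, and the analytic justification does not work as stated for continuous spectrum.

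\textbf{What matches.} You reduce to $\mathcal D_f(e^{-P})=i[B,e^{-P}]$, symmetrise using $\mathcal A=\mathcal A^*$, and solve the divisibility problem that the paper writes as \eqref{eq:funct}, where $F(\tau,\tau)=0$ comes from the balance condition. The split you describe is right; it is exactly this divisibility.

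\textbf{Gap 1: the scalar identity.} The factor you extract from the split is wrong, and as described it would not produce \eqref{eq:defb12}.

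You correctly wrote the kernel of $B$ with $\hat b_1(\lambda-\mu)$. So the argument of $\hat b_1$ is the unscaled Bohr frequency, and there is no freedom to take $\tau=2(\lambda-\mu)$. Your ratio $(e^{-\lambda}-e^{-\mu})/(e^{-\lambda}+e^{-\mu})$ would give $\tanh$ of half that frequency, whereas \eqref{eq:defb12} has a quarter.

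The actual mechanism is as follows. For Gaussian $f$,
\[ \int\gamma(\omega)\hat f(\omega-x)\hat f(\omega-y)\,d\omega=\sigma^{-1}\sqrt\pi\,e^{-(x-y)^2/4\sigma^2}\,\Gamma\big(\tfrac{x+y}2\big),\qquad \Gamma:=\gamma*e^{-\bullet^2/\sigma^2}. \]
Condition \eqref{eq:balao} gives the KMS relation $\Gamma(-s)=e^{s}\Gamma(s)$ for this smoothed rate; this is where the shift $\tfrac14\sigma^2$ is used.

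After the relabelling $A\leftrightarrow A^*$, both terms multiply $\overline{A(\nu,\lambda)}A(\nu,\mu)$. Set $\zeta=2\nu-\lambda-\mu$. The KMS relation turns the jump weight $e^{-\nu}\Gamma(-\zeta/2)$ into the geometric mean $e^{-(\lambda+\mu)/2}\Gamma(\zeta/2)$. The anticommutator gives the arithmetic mean $\tfrac12(e^{-\lambda}+e^{-\mu})\Gamma(\zeta/2)$. Their difference is
\[ -\tfrac12\big(e^{-\lambda/2}-e^{-\mu/2}\big)^2=\tfrac12\,\big(e^{-\mu}-e^{-\lambda}\big)\tanh\big(\tfrac{\mu-\lambda}{4}\big), \]
which is the source of $\tanh(\cdot/4)$.

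Separately, $\Gamma(\zeta/2)=e^{-\zeta/2}\Gamma(-\zeta/2)=\hat b_2(\zeta)$ explains $\hat b_2$. So the weight $e^{-\tau/2}$ in $\hat b_2$ lives in the $\zeta$ variable, not in your ratio.

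Since the theorem prescribes $\hat b_1,\hat b_2$ exactly, ``up to normalisations'' is not a proof. You must track the constants, and the orientation of the argument of the odd function $\hat b_1$, through \eqref{eq:defB0}. This is not automatic: Lemma \ref{l:B2B} uses $\hat b_1(\tau'-\tau)$, while \S\ref{s:func} uses $\hat b_1(\tau-\tau')$.

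\textbf{Gap 2: the analytic justification.} The cutoff step itself is fine. With $\Pi_R=\indic_{[1,R]}(P)$, the family $\{\Pi_RA\Pi_R\}$ is still closed under adjoints, and $\Pi_R$ commutes with $e^{itP}$ and $e^{-P}$. Then \eqref{eq:assA1} plus dominated convergence removes the cutoff on matrix elements.

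The problem is that $P\Pi_R$ can still have continuous spectrum (Example 1). There is then no kernel $A(\lambda,\mu)$ and nothing for Fubini to act on: $\hat A_f(\omega)e^{-P}\hat A_f(\omega)^*$ is a triple operator integral. What your kernel manipulations really use is the intertwining relation $e^{-P}\hat A(g)=\hat A(e^{-\bullet}g)e^{-P}$, i.e.\ ``$e^{-\lambda}=e^{-(\lambda-\nu)}e^{-\nu}$''.

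The paper proves exactly this relation in Lemma \ref{l:exp}. It moves the $t$-contour from $\mathbb R$ to $\mathbb R+i$, using $P\geq 1$ and the holomorphy and decay of the Gaussian data. It then redoes \S\ref{s:localmat} with $\hat A$ an operator-valued tempered distribution, without any cutoff. After your cutoff this relation is elementary, since $z\mapsto e^{izP}\Pi_RA\Pi_Re^{-izP}$ is entire.

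Alternatively, you can stay in your framework by replacing $P\Pi_R$ with step functions of $P$ taking finitely many values. Section \S\ref{s:localmat} then applies verbatim, since it only needs a finite Bohr spectrum, not finite dimension. Norm convergence of the step functions lets you pass to the limit by dominated convergence in $t,s,\omega$. Either repair closes the gap; the proposal as written contains neither.
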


An important issue now is to see if $ \mathcal L_f $ generates a contraction on the trace class. 
For that we make the following assumption:
there is $ \delta > 0 $ such that
\begin{equation}
\label{e:assumeA}
 \sum_{A\in\mathcal{A}} \big(\|AP^{-\frac12}\|^2_{\mathcal{H}\to \mathcal{H}}+\|[P,A]P^{-\frac12}\|_{\mathcal{H}\to \mathcal{H}}^2+\|[P,[P,A]]P^{-1+\delta}\|_{\mathcal{H}\to \mathcal{H}}^2\big)<\infty, 
\end{equation}
where the operators are initially defined as operators $ \mathscr D^{N}  \to \mathscr D^{-2}  $, for sufficiently large $ N$. 
When the operators
in $ \mathcal A $ are bounded, it is enough to assume
\begin{equation}
\label{e:assumeAb}
 \sum_{A\in\mathcal{A}} \| A \|^2 < \infty.
 \end{equation}


The key now is that   \eqref{e:assumeA} implies that   $ \mathcal L_f $ in \eqref{eq:Lind} satisfies the Davies condition \cite{dav}:
\begin{equation}
\label{eq:Daviesc} 
\begin{gathered} Y := i ( \beta^{-1} P + B ) - \tfrac12 \sum_{A \in \mathcal A }  \int_{\mathbb R } \gamma ( \omega ) \hat A_f ( \omega )^* \hat A_f ( \omega)d\omega
\end{gathered} 
 \end{equation}
is the infinitesimal generator of a strongly continuous one parameter contraction semigroup on 
$ \mathcal H $. This may be of independent interest and it gives
\begin{theo}
\label{t:1} 
Suppose that $ \mathcal A  $ satisfies 
\eqref{e:assumeA},  $P$ satisfies \eqref{eq:assP} and 
$ \gamma \in L^1 ( \mathbb R ;[0,\infty)) $. 
Then for $ f , b_1, b_2 \in L^1$, 
$  \mathcal L_f $ is a well defined Lindbladian, in the sense that $ e^{ t \mathcal L_f }  $ is a contraction 
on the space of trace class operators $ \mathcal L_1 ( \mathcal H ) $.
\end{theo}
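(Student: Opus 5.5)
The approach is Davies' theorem on quantum dynamical semigroups \cite{dav}. Write $\mathcal L_f T = YT + TY^* + \Phi(T)$ with
\[
\Phi(T) := \sum_{A\in\mathcal A}\int_{\mathbb R}\gamma(\omega)\,\hat A_f(\omega)\,T\,\hat A_f(\omega)^*\,d\omega ,
\]
which is completely positive. If $Y$ generates a strongly continuous contraction semigroup $e^{tY}$ on $\mathcal H$, and $\Phi$ is defined on a core as a form with
\[
\tr\Phi(|x\rangle\langle x|) = -2\Re\langle Yx,x\rangle, \qquad x\in\mathcal D(Y),
\]
then Davies' construction gives the minimal dynamical semigroup. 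It is obtained by Dyson iteration
\[
T_t=\sum_n T^{(n)}_t, \qquad T^{(0)}_t(\rho)=e^{tY}\rho\, e^{tY^*},
\]
and it is a positive, trace non-increasing, hence contractive, semigroup on $\mathcal L_1(\mathcal H)$ with generator extending $\mathcal L_f$. The proof therefore reduces to two things: checking the Davies condition \eqref{eq:Daviesc}, and checking the trace identity with finite quantities on $\mathcal D(P)$.

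\textbf{Step 1: the dissipative part is $P^{1/2}$-bounded.} Set
\[
K:=\tfrac12\sum_{A\in\mathcal A}\int\gamma(\omega)\,\hat A_f(\omega)^*\hat A_f(\omega)\,d\omega \ \geq 0 .
\]
For $x\in\mathcal D(P)$ we have $\|\hat A_f(\omega)x\|\le (2\pi)^{-1/2}\int|f(t)|\,\|Ae^{-iPt}x\|\,dt$. Since $e^{-iPt}$ commutes with $P^{1/2}$, this gives
\[
\langle Kx,x\rangle\le C\|\gamma\|_{L^1}\|f\|_{L^1}^2\sum_{A\in\mathcal A}\|AP^{-1/2}\|^2\,\|P^{1/2}x\|^2 .
\]
So $K$ is form-bounded with respect to $P$ with arbitrarily small relative bound, after absorbing the constant with $\beta^{-1}$ via $\|P^{1/2}x\|^2\le \epsilon\|Px\|\|x\|/\epsilon$. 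The same estimate shows $\Phi$ is well defined on rank-one operators built from $\mathcal D(P)$, and the trace identity holds by a direct computation, since $\tr \hat A x\langle x|\hat A^* = \|\hat A x\|^2$. The coherent term $B$ is handled the same way. The inner operator $A^*e^{-2iPs}A$ maps $\mathscr D^{1/2}\to\mathscr D^{-1/2}$ boundedly, uniformly in $s$, so $B$ is a symmetric form bounded relative to $P$. It is the $L^1$ average, with real $b_j$, of unitary conjugates of the symmetric form of $A^*UA$ paired with its adjoint. For the symmetry I would check that $b_1,b_2$ real makes $B$ formally self-adjoint, adding the adjoint if necessary; otherwise one only needs $\Re\langle iBx,x\rangle$ controlled.

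\textbf{Step 2: $Y$ generates a contraction semigroup (main obstacle).} Form bounds alone give only an m-sectorial operator, and $iP$ is not sectorial, so Kato's form perturbation does not apply directly. Instead I would show that $K$ and $B$ are \emph{operator}-bounded relative to $P$ with relative bound $<\beta^{-1}$. Then Kato--Rellich style perturbation for generators applies: $i\beta^{-1}P$ is skew-adjoint, and a dissipative perturbation with relative bound $<1$ preserves generation of contractions by the Lumer--Phillips argument with the adjoint.

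This is where the commutator hypotheses in \eqref{e:assumeA} enter. Writing $\hat A_f^*\hat A_f P^{-1}$, I would commute one factor $P^{1/2}$ through $A$ using $[P,A]P^{-1/2}$ bounded. The double commutator $[P,[P,A]]P^{-1+\delta}$ controls the commutator of $[P,A]$ with $P^{1/2}$, via the integral formula
\[
P^{-1/2}=\pi^{-1}\int_0^\infty \lambda^{-1/2}(P+\lambda)^{-1}\,d\lambda .
\]
This gives $\|KP^{-1}\|<\infty$, and by the interpolation gain $P^{-\delta}$, even a bound $\|Kx\|\le \epsilon\|Px\|+C_\epsilon\|x\|$. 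Dissipativity is clear: $\Re\langle Yx,x\rangle=-\langle Kx,x\rangle\le 0$. For the range condition, $Y^*=-i(\beta^{-1}P+B)-K$ is dissipative with the same relative bounds, so $\lambda-Y$ is surjective for large $\lambda$ by a Neumann series on $(\lambda-i\beta^{-1}P)^{-1}$.

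\textbf{Step 3: conclusion.} With Steps 1 and 2 established, Davies' theorem yields that the minimal semigroup generated by $\mathcal L_f$ exists on $\mathcal L_1(\mathcal H)$, is positive and trace non-increasing, hence is a contraction there. This identifies it with $e^{t\mathcal L_f}$, in agreement with \eqref{eq:basicm} on the dense set $\mathcal L(\mathscr D^{-k},\mathscr D^{k})$. Conservativity, that is trace preservation, is not claimed in the statement, so it need not be checked.
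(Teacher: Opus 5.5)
Your reduction matches the paper's: by \cite{dav}, Theorem~\ref{t:1} follows once $Y$ in \eqref{eq:Daviesc} generates a contraction semigroup on $\mathcal H$. Your bound $\|KP^{-1}\|<\infty$ is also correct; it is the paper's statement $R:\mathscr D^1\to\mathcal H$.

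\textbf{The gap is in Step 2.} The estimate $\|Kx\|\le\epsilon\|Px\|+C_\epsilon\|x\|$ for all $\epsilon>0$ does not follow from \eqref{e:assumeA}, and it is false in general. The only size control on $A$ is $\|AP^{-1/2}\|<\infty$, so $K$ is genuinely of the same order as $P$. The factor $P^{-1+\delta}$ in \eqref{e:assumeA} concerns only $[P,[P,A]]$ and gives no gain for $K$ itself.

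Here is a counterexample. Take $\mathcal H=L^2(\mathbb R)$, $P=D_x^2+x^2+1$, $\mathcal A=\{c\,\partial_x\}$, $b_1=b_2=0$, $\gamma\ge 0$ supported in $[1,3]$, and $\hat f\in C^\infty_{\rm c}((-1,1))$. Since $[P,\partial_x]=-2x$ and $[P,[P,\partial_x]]=4\partial_x$, \eqref{e:assumeA} holds with $\delta=\frac12$. With $a=x+iD_x$ one has $e^{itP}\partial_xe^{-itP}=\frac12(e^{-2it}a-e^{2it}a^*)$. Hence $\hat A_f(\omega)=-\frac c2\hat f(\omega-2)a^*$ on $\supp\gamma$, and since $aa^*=P$,
\[
K=\kappa P,\qquad Y=(i\beta^{-1}-\kappa)P,\qquad \kappa=\tfrac{c^2}{8}\int_{\mathbb R}\gamma(\omega)\,|\hat f(\omega-2)|^2\,d\omega .
\]
So the $\beta^{-1}P$-bound of $K$ is exactly $\kappa\beta$, which can be as large as you like. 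Moreover $\|K(\lambda-i\beta^{-1}P)^{-1}\|=\kappa\beta$ for every $\lambda>0$. As soon as $\kappa\beta\ge 1$, neither the dissipative Kato--Rellich theorem nor your Neumann series for the range condition is available. The conclusion is trivially true in this example, so it is the method that fails, not the statement. The same objection applies to the ``arbitrarily small form bound'' in Step 1.

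\textbf{What the paper does instead.} You need a generation argument that uses no smallness at all.
\begin{enumerate}
\item The paper regards $Y$ as a continuous map $\mathcal H\to\mathscr D^{-1}$. This uses $A:\mathcal H\to\mathscr D^{-1/2}$ and $A:\mathscr D^{1}\to\mathscr D^{1/2}$, which follow from the first two terms of \eqref{e:assumeA} by interpolation.
\item $Y$ is given its maximal domain $\{u\in\mathcal H:\ Yu\in\mathcal H\}$.
\item $\mathscr D^1$ is shown to be a core. For $u$ in the maximal domain, $u_\varepsilon=\chi_0(\varepsilon P)u$ satisfies $u_\varepsilon\to u$ and $Yu_\varepsilon\to Yu$. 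This is because $[\chi_0(\varepsilon P),R]=A_\varepsilon(1-\chi_1(\varepsilon P))+\varepsilon^{2\delta}B_\varepsilon$ with $A_\varepsilon,B_\varepsilon$ uniformly bounded, proved via Helffer--Sj\"ostrand. This is the only place where the double commutator and $\delta>0$ enter.
\item Closedness follows. The identity $\Re\langle Yu,u\rangle=-\langle Ku,u\rangle\le 0$ on $\mathscr D^1$ extends to the maximal domain by the same approximation, and the identical argument applies to $Y^*$.
\item Surjectivity of $Y-\lambda$ follows: any $v\perp\operatorname{Ran}(Y-\lambda)$ solves $(Y^*-\lambda)v=0$ in $\mathscr D^{-1}$, so $v\in\mathcal D(Y^*)$ and hence $v=0$.
\item Hille--Yosida then applies, with no relation required among $\beta$, $\gamma$, $f$ and $\mathcal A$.
\end{enumerate}

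\textbf{A minor point.} Real $b_1,b_2$ do not by themselves make $B$ symmetric; one also needs $\overline{b_2(-s)}=b_2(s)$. The dissipativity identity tacitly uses this symmetry, both in your argument and in the paper's. It does hold for the $b_j$ of Theorem~\ref{t:2}.
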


\noindent
{\bf Remark.} Once $ Y $ in \eqref{eq:Daviesc} generates a strongly continuous contraction semigroup, the Davies construction 
\cite{dav} implies that $\mathcal L_f$  generates a
minimal completely positive contraction semigroup on trace class.   If in addition $ e^{-P} $ is of trace class and 
stationary, then $ \exp t \mathcal L_f $ is 
trace preserving.  This is the
standard conservativity criterion -- see Chebotarev--Fagnola
\cite{Chef}. In other words, the Lindblad evolution has the desired properties standard in finite dimensions.

\subsection{Differential and pseudodifferential operators}

When we assume more structure, an analogue of Theorem \ref{t:1} is valid for pseudodifferential operators, even though \eqref{e:assumeA} may be violated. The operators we consider are in the class used in recent works on 
classical/quantum correspondence in Lindblad evolution by 
Hern\'andez--Ranard--Riedel \cite{hrr} and the authors \cite{gaz5} (see also Li \cite{kevin} and Smith \cite{hasm} for more recent progress on the PDE study of Lindblad evolution). For harmonic oscillators (which are a very special case of our operators) 
questions related to Lindbladians with stationary Gibbs states were addressed by 
Cipriani--Fagnola--Lindsay \cite{cipr} and, for localised generators, by 
\v{Smid} et al \cite{smid}, where very precise results about return to equilibrium were provided.

To state the result we recall the Weyl quantization. Let $ a \in C^\infty ( \mathbb R^{2n} ) $
(a classical observable, that is, a function on phase space) satisfy
$ |\partial^\alpha a| \leq ( 1 + |x| + |\xi| )^M $ for some $ M$ and all $ \alpha $. 
For $ u \in \mathscr S ( \mathbb R^n ) $ (the class of functions with rapidly decaying derivatives \cite[\S 3.1]{z12}) the action of the Weyl quantisation of $ a$ on $ u $ 
is given by 
\begin{equation}
\label{eq:Weylq} a^{\rm{w}}( x, D ) u = \frac{1}{ (2 \pi )^n } \int_{\mathbb R^{2n}} a \left (\tfrac12 ( x + y ) , \xi \right) u ( y ) e^{ i \langle x - y, \xi \rangle } dy d \xi . \end{equation}

We use the notation
\[ \rho = ( x, \xi ) \in \mathbb R^{2n} , \ \ \ \langle \rho \rangle = ( 1 + |x|^2 + |\xi|^2 )^{\frac12 } . \]

\begin{theo}
\label{t:3} 
Suppose that $ p \in C^\infty ( \mathbb R^{2n} ) $ satisfies 
\begin{equation}
\label{eq:asp}   | \partial^\alpha p ( \rho) | \leq C_\alpha ,  \ \ | \alpha | \geq 2, \ \  p ( \rho ) \geq \langle
\rho \rangle^2 /C_0 - C_0,  \end{equation}
$ P = p^{\rm{w}}  ( x, D ) $ and 
 $ \mathcal A = \{ a^{\rm{w}} ( x, D ) : a \in \mathcal A_{\rm{cl}} \subset C^\infty ( \mathbb R^{2n} )  \} $ where, 
 \begin{equation}
\begin{gathered} 
\label{eq:asA0} 
\sum_{ a \in  \mathcal A_{\rm{cl}} } \Big( | a ( 0 )|^2 + 
\sum_{   1 \leq |\alpha | \leq N_0 }  \sup_{\rho \in \mathbb R^{2n} }  | \partial^\alpha a ( \rho ) |^2 \Big) \leq C , \ \ \ 
\overline { \mathcal A}_{\rm{cl}}  =  \mathcal A_{\rm{cl}} ,
 \end{gathered} 
 \end{equation}
where $ N_0 $ is a (large) constant depending on the dimension. Then the assumptions of Theorem \ref{t:2} are satisfied and the conclusions of Theorem 
 \ref{t:1} hold. \end{theo}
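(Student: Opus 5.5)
Theorem~\ref{t:3} has two parts. The first is that the assumptions of Theorem~\ref{t:2} hold. The second is the conclusion of Theorem~\ref{t:1}, namely that $Y$ in \eqref{eq:Daviesc} generates a contraction semigroup on $\mathcal H$. The second part cannot be obtained by simply checking \eqref{e:assumeA}, since for $p$ quadratic at infinity the double commutator $[P,[P,A]]$ may fail the $P^{-1+\delta}$ bound. So the plan is to use the pseudodifferential structure directly. We work throughout in the calculus $S(m)$ with order functions $m=\langle\rho\rangle^k$ and $\hbar=1$. By \eqref{eq:asp}, $P$ is essentially self-adjoint on $\mathscr S$, and after adding a constant $P\ge1$. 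Elliptic regularity then gives $\mathscr D^s = H_{\langle\rho\rangle^{2s}}$, the weighted Sobolev spaces $\{u:\langle x\rangle^{2s}u,\ \langle D\rangle^{2s}u\in L^2\}$.

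\textbf{Step 1: assumptions of Theorem~\ref{t:2}.} By \eqref{eq:asA0}, each $a\in\mathcal A_{\rm cl}$ satisfies $|a(\rho)|\le |a(0)|+C_a\langle\rho\rangle$ with $C_a=\sup_{|\alpha|=1}|\partial^\alpha a|$, and its higher derivatives are bounded. Hence $a\in S(\langle\rho\rangle)$, with seminorms controlled by the square-summable quantities in \eqref{eq:asA0}. Calder\'on--Vaillancourt, applied with only finitely many derivatives (this is what fixes $N_0$), gives $\|a^{\rm w}P^{-1/2}\|\lesssim |a(0)|+\sum_{1\le|\alpha|\le N_0}\sup|\partial^\alpha a|$. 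Squaring and summing gives \eqref{eq:assA1} with $k=\tfrac12$. Closure under adjoints follows from $(a^{\rm w})^*=\bar a^{\rm w}$ together with $\overline{\mathcal A}_{\rm cl}=\mathcal A_{\rm cl}$.

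\textbf{Step 2: structure of $\hat A_f(\omega)$ and $B$ via Egorov for all times.} The key tool is the all-time Egorov theorem, Theorem~\ref{t:egorov}. For $p$ with bounded second and higher derivatives, the Hamilton flow $\varphi_t$ satisfies $|\partial^\alpha\varphi_t|\le e^{C|t|}$. The theorem states that $e^{iPt}a^{\rm w}e^{-iPt}=a_t^{\rm w}$, where $a_t$ lies in $S(\langle\rho\rangle)$ with seminorms growing at most like $e^{C_\alpha|t|}$. Since $f_\sigma$ and $b_j$ have Gaussian decay (see \eqref{eq:b1b2FT}), the integrals \eqref{eq:OFT} and \eqref{eq:defB0} converge in symbol topology. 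Consequently $\hat A_f(\omega)\in\Psi(\langle\rho\rangle)$ uniformly in $\omega$, and in fact with decay in $\omega$ coming from $\hat f$ after integration by parts in $t$. Likewise $B\in\Psi(\langle\rho\rangle^2)$, and the sums over $\mathcal A$ converge by the square summability from Step~1.

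\textbf{Step 3: the Davies condition.} Set $K=\tfrac12\sum_A\int\gamma\,\hat A_f^*\hat A_f\,d\omega$. By Step~2, $K$ is a nonnegative operator in $\Psi(\langle\rho\rangle^2)$. Moreover $B$ is self-adjoint, because $b_j$ is real and the inner integrand in \eqref{eq:defB0} is symmetric. Hence
\[
Y=i(\beta^{-1}P+B)-K,
\]
where $\beta^{-1}P+B$ is symmetric with principal part $\beta^{-1}p$ and $B$ is of the same order $\langle\rho\rangle^2$. We need $Y$ to be maximally dissipative on $\mathscr D^1$. Dissipativity is immediate: $\Re\langle Yu,u\rangle=-\langle Ku,u\rangle\le0$ on $\mathscr S$. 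For maximality we show that $\lambda-Y$ is onto for $\lambda\gg1$. To do this, we use the Egorov symbol bounds to show that $B,K\in\Psi(\langle\rho\rangle^2)$ have symbols with bounded second derivatives, i.e.\ the same class as $p$. Then $\lambda-Y$ has a parametrix in $\Psi(\langle\rho\rangle^{-2})$ once ellipticity is secured. Alternatively, since $\lambda-Y$ and its adjoint are both dissipative-type operators in the same class, we can use density of $\mathscr S$ in $\mathscr D^1$ together with closure, which yields maximality by Lumer--Phillips. The Davies construction \cite{dav}, as in the Remark after Theorem~\ref{t:1}, then gives the contraction on $\mathcal L_1(\mathcal H)$.

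\textbf{Main obstacle.} The hardest point is ellipticity of $\lambda-Y$, equivalently ruling out that $B$ and $K$ are as large as $P$. The symbol of $B$ is of order $\langle\rho\rangle^2$ but it is not small relative to $\beta^{-1}p$. A cleaner route avoids needing ellipticity of the full operator. Regard $i(\beta^{-1}P+B)$ as the generator of a unitary group: it is essentially self-adjoint on $\mathscr S$ because its symbol has bounded Hessian, by the standard quadratic-growth argument (or a Chernoff-type argument using $[P,\cdot]$ bounds coming from Egorov). Then treat $-K$ as a relatively bounded dissipative perturbation, using $\|Ku\|\le C\|Pu\|$ and a Kato--Rellich/Gustafson type argument, applied to $-\epsilon K$ and iterated in $\epsilon$. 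I would try this perturbative route first. The delicate estimate is the relative bound $\|Ku\|\le a\|(\beta^{-1}P+B)u\|+b\|u\|$ with $a<1$, or a commutator substitute for it, and this again rests on the uniform-in-$t$ symbol control provided by Theorem~\ref{t:egorov}.
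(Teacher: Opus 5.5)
Your Step~1 (verifying \eqref{eq:assA1} with $k=\tfrac12$ and closure under adjoints) is fine. The argument for the conclusion of Theorem~\ref{t:1}, however, has a genuine gap.

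Steps~2--3 need $B$ and $K=\tfrac12\sum_A\int\gamma\,\hat A_f^*\hat A_f\,d\omega$ to be pseudodifferential with symbols in the class of $p$. They then need either ellipticity of $\lambda-Y$ or a relative bound with constant $<1$. None of this is available.
\begin{itemize}
\item The function $b_1$ does not have Gaussian decay. By \eqref{eq:b1b2FT} it is a Gaussian convolved with $\sinh(2\pi t)^{-1}$; equivalently, $\tanh(\tau/4)$ in $\hat b_1$ has poles at $\tau=\pm 2\pi i$. So $b_1$ decays only like $e^{-2\pi|t|}$. The seminorm bounds in Theorem~\ref{t:egorov} and Lemma~\ref{l:flow}, by contrast, grow like $e^{C_k|t|}$ with $C_k$ increasing in the number of derivatives. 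Hence the outer integral in \eqref{eq:defB0} need not converge in any symbol topology. The paper explicitly leaves open whether $B$ is pseudodifferential under \eqref{eq:asp}.
\item $K$ is pseudodifferential, by Theorem~\ref{t:4} since $f_\sigma$ is Gaussian. Even so, its symbol does not have a bounded Hessian: $\partial^2|a|^2$ contains $2\Re(\bar a\,\partial^2 a)$, which is only $\mathcal O(\langle\rho\rangle)$ for $a\in S_{(1)}$ (try $a=x+\sin x$).
\item $B$ and $K$ are of the same order as $\beta^{-1}P$, with constants depending on $\gamma,\sigma,\mathcal A$ but not on $\beta$. So there is no reason for $\lambda-Y$ to be elliptic, for $\beta^{-1}P+B$ to be semibounded, or for $\|Ku\|\le a\|(\beta^{-1}P+B)u\|+b\|u\|$ to hold with $a<1$.
\item The Lumer--Phillips alternative skips the real difficulty. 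For the closure of $Y|_{\mathscr S}$ to generate, its adjoint must be dissipative, and that adjoint is the \emph{maximal} realisation of $Y^*$. Showing this requires approximating maximal-domain elements in the graph norm, which density of $\mathscr S$ in $\mathscr D^1$ does not provide.
\end{itemize}

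That minimal-equals-maximal property is the missing idea. The paper proves it without any structural information on $B$ or $K$ and without Egorov's theorem. View $Y:\mathcal H\to\mathscr D^{-1}$ with domain $\{u\in\mathcal H: Yu\in\mathcal H\}$, and approximate by $u_\varepsilon=\chi_0(\varepsilon P)u$.

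Every term of $B$ and $K$ has the form $\int e^{itP}A^*e^{i(s-t)P}Ae^{-isP}F(t,s)\,dt\,ds$ with $F\in L^1(\mathbb R^2)$. Since $e^{itP}$ is unitary and commutes with $\chi(\varepsilon P)$ and $P^{\pm\frac12}$, everything reduces to Proposition~\ref{p:R1R2}: $[\chi(\varepsilon P),R_1e^{itP}R_2]=A_\varepsilon(1-\chi_0(\varepsilon P))+\varepsilon B_\varepsilon$ for $R_1,R_2\in\Psi_{(1)}$, with $\|A_\varepsilon\|+\|B_\varepsilon\|$ bounded uniformly in $\varepsilon$ and $t$.

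This proposition comes from Lemma~\ref{l:funct1}, namely $\chi(\varepsilon P)=(\chi(\varepsilon p))^{\rm w}+\varepsilon q_\varepsilon^{\rm w}$ with $q_\varepsilon\in S(m^{-2})$ uniformly, together with the symbolic calculus. It replaces the double-commutator hypothesis in \eqref{e:assumeA}, which fails here. It gives $Yu_\varepsilon\to Yu$ in $\mathcal H$, hence $Y=\overline{Y|_{\mathscr D^1}}$ is closed. Dissipativity then extends to the maximal domains of $Y$ and $Y^*$, $Y-\lambda$ is onto, and Hille--Yosida concludes exactly as in the proof of Proposition~\ref{p:L2D}.
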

 
 In the case of $ P$ and $ A $'s of Theorem \ref{t:3}, we have the following result about 
 pseudodifferential structure of the jump operators $ A_f ( \omega ) $ in \eqref{eq:Lind}. In view of  \eqref{eq:pseudolo} this shows that they are pseudolocal.
 
 \begin{theo}
 \label{t:4}
 Suppose that $ P = p^{\rm{w}} ( x, D ) $, $ A = a^{\rm{w}} ( x, D ) $ where
 \[  |\partial^\alpha p ( \rho )| \leq C_\alpha, \ \ |\alpha| \geq 2, \ \ \ 
 |\partial^\alpha a ( \rho ) | \leq C_\alpha, \ \ |\alpha| \geq 1 , \ \ \ \rho \in \mathbb R^n 
 \times \mathbb R^n , \]
 and $ p $ is real valued. Then, for $ f $ satisfying 
$  | f ( t ) | \leq C_N e^{ -N |t| } $, for all $ N$, 
and in the notation of \eqref{eq:OFT}, 
\[  A_f ( \omega ) = a_f ( \omega )^{\rm{w}} ( x, D ), \]
where $  | \partial^\alpha a_f (\omega,  \rho )| \leq C_\alpha$, $ |\alpha| \geq 1$, 
uniformly in $ \omega $. 
\end{theo}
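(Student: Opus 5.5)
The plan is to prove an all-time Egorov theorem for the conjugation $e^{iPt}Ae^{-iPt}$, with constants that grow at most exponentially in $|t|$, and then integrate this against $f$. The super-exponential decay of $f$ absorbs that growth, so the $\omega$-dependence only enters through a bounded oscillatory factor.

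Let $\varphi_t = \exp(tH_p)$ be the Hamiltonian flow of $p$. Since $\partial^\alpha p$ is bounded for $|\alpha|\ge2$, the vector field $H_p$ has globally Lipschitz coefficients with bounded higher derivatives. Gronwall's inequality applied to the variational equations gives
\[ |\partial^\alpha_\rho \varphi_t(\rho)|\le C_\alpha e^{C_\alpha|t|},\qquad |\alpha|\ge1 . \]
Consequently $a_t^0 := a\circ\varphi_t$ satisfies $|\partial^\alpha a_t^0|\le C_\alpha e^{C_\alpha|t|}$ for $|\alpha|\ge1$.

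The main step is to show $e^{iPt}Ae^{-iPt}=a_t^{\rm w}$ exactly, where $a_t$ lies in the class $S:=\{b: \partial^\alpha b=\mathcal O(1),\ |\alpha|\ge1\}$ with seminorms $\mathcal O(e^{C_\alpha|t|})$. We set $a_t=a\circ\varphi_t+r_t$ and differentiate $e^{-iPt}a_t^{\rm w}e^{iPt}$ in $t$; requiring the derivative to vanish gives the equation
\[ \partial_t a_t=\{p,a_t\}+E(p,a_t), \]
where $E(p,b)$ is the Moyal remainder of $i[p^{\rm w},b^{\rm w}]$ beyond the Poisson bracket. The Moyal expansion for $i[p^{\rm w},b^{\rm w}]$ contains only odd-order terms. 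The third-order remainder involves only $\partial^{\ge3}p$, which is bounded, and $\partial^{\ge 3}b$, so $E(p,\cdot)$ maps $S$ to the class of symbols all of whose derivatives are bounded, with norm controlled by seminorms of $\partial^{\geq 1}b$. By Duhamel along the flow,
\[ r_t=\int_0^t E(p,a_s)\circ\varphi_{t-s}\,ds . \]
This is a linear Volterra equation for $r$, and I would solve it by Picard iteration in the weighted space $\sup_t e^{-K|t|}\|\partial^\alpha r_t\|$. The key bound is that composition with $\varphi_{t-s}$ costs $e^{C|t-s|}$, while $E$ is bounded on the relevant seminorms (losing finitely many derivatives, which is harmless since all seminorms are treated together). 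A Gronwall argument then closes with $K=K_\alpha$ depending on the order. To justify that $a_t^{\rm w}$ really equals the conjugated operator, I would check that both sides solve the same Heisenberg equation on $\mathscr S$ and use uniqueness, via $e^{\pm iPt}:\mathscr S\to\mathscr S$ (by the Sobolev-type spaces $\mathscr D^s$ and the ellipticity-free structure, or by ordinary quadratic growth arguments).

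Finally define
\[ a_f(\omega,\rho)=\frac1{\sqrt{2\pi}}\int a_t(\rho)e^{-i\omega t}f(t)\,dt . \]
For $|\alpha|\ge1$ we get $|\partial^\alpha a_f|\le\int C_\alpha e^{C_\alpha|t|}C_Ne^{-N|t|}\,dt<\infty$ by choosing $N>C_\alpha$, uniformly in $\omega$ since $|e^{-i\omega t}|=1$. The function $a_t$ itself grows at most linearly, so the integral converges, and the Weyl quantisation commutes with the $t$-integral when tested on $\mathscr S$. This gives $A_f(\omega)=a_f(\omega)^{\rm w}$.

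The main obstacle is the uniform-in-class exact Egorov step: showing that the remainder stays in the class with only exponential growth, despite $a$ and $p$ being unbounded symbols.
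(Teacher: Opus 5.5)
Your outer steps (exponential bounds on the derivatives of $\varphi_t$, then integrating an all-time Egorov symbol against the super-exponentially decaying $f$) match how the paper concludes. The step you flag as the main obstacle, however, does not go through as proposed.

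The Moyal remainder $E(p,b)$ genuinely depends on $\partial^{\geq 3}b$: its leading term is bilinear in $\partial^3 p$ and $\partial^3 b$. The oscillatory-integral bound for the full remainder costs a further fixed number of derivatives. So the best available estimate is $\|E(p,b)\|_{C^k}\leq C_k\|b\|_{C^{k+N}}$ with $N\geq 3$, and composing with $\varphi_{t-s}$ gains no regularity back.

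Hence the Volterra operator $r\mapsto\int_0^t E(p,r_s)\circ\varphi_{t-s}\,ds$ is not bounded on any of the weighted spaces $\sup_t e^{-K|t|}\|r_t\|_{C^k}$. Controlling $r$ in $C^k$ needs $r$ in $C^{k+N}$, which needs $C^{k+2N}$, and so on. Gronwall therefore cannot close, since the norm on the right is strictly stronger than the one on the left. The $n$-th Picard iterate is bounded only in terms of $\|a\|_{C^{k+nN}}$ and the order-dependent constants $C_{k+jN}$, $j<n$. For general smooth symbols these are not compensated by $|t|^n/n!$. Treating ``all seminorms together'' is precisely the Fr\'echet setting in which a derivative-losing linear Volterra equation cannot be solved by iteration, so the loss is not harmless. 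The difficulty is existence of a symbol with the stated bounds, not uniqueness via the Heisenberg equation.

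The missing idea is to obtain the symbol bounds from operator-norm estimates, where $e^{itP}$ is unitary and nothing is lost, instead of iterating on the symbol side. The paper first treats $a\in S_{(0)}$ via Beals's characterisation (Lemma~\ref{l:zeroEgorov}). It bounds $\|\ad_{L_1^{\rm w}}\cdots\ad_{L_M^{\rm w}}e^{itP}a^{\rm w}e^{-itP}\|_{L^2\to L^2}\leq C_Me^{C_M|t|}$ for linear $L_j$. This uses three ingredients:
\begin{itemize}
\item the identity $e^{-itP}[b^{\rm w},A_t]e^{itP}=[e^{-itP}b^{\rm w}e^{itP},a^{\rm w}]$;
\item a single explicit Duhamel step $e^{-itP}b^{\rm w}e^{itP}=(b\circ\varphi_{-t})^{\rm w}-Q(b;-t)$ for $b\in S_{(1)}$;
\item an induction on $M$.
\end{itemize}
Each seminorm of the symbol needs only finitely many commutators, so derivative losses accumulate only finitely often.

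For $a\in S_{(1)}$ the paper again uses one Duhamel step (Lemma~\ref{l:initialEgorov}). There $q_s\in S_{(0)}$ is computed from the \emph{known} symbol $a\circ\varphi_s$, not from the unknown $a_s$. It then applies the $S_{(0)}$ result to each $e^{i(t-s)P}q_s^{\rm w}e^{-i(t-s)P}$, which gives Theorem~\ref{t:egorov}. With this replacing your Picard scheme, the rest of your argument goes through.
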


This is the point at which the pseudodifferential setting gives more
than an abstract unbounded-operator extension: the localised 
jump operators inherit a microlocal locality property \eqref{eq:umaps}.

Even with the choices of $ f $ as in Theorem \ref{t:1}, we cannot exclude the possibility 
that $ B $ is not pseudodifferential if only \eqref{eq:asp} is assumed. That is due to the fact that $ b_1 (t) $ is not super-exponentially decaying in that case. It is possible that under stronger assumptions, for instance \eqref{eq:Schr}, the exponential decay of 
$ b_1 ( t ) $ is sufficient to guarantee pseudodifferential structure of $ B $.

We conclude this introduction with some examples.

\subsubsection{Examples for Theorem \ref{t:2}} 
\label{s:ex1} 
The assumptions in Theorem \ref{t:2} are very weak and apply to large classes of examples.
Here is one formulated using order functions \cite[\S 4.4,\S 8.2]{z12} used for definitions and composition rules for pseudodifferential operators. We say that $ m : \mathbb R^{2n} \to [0, \infty ) $ is an order function
if there exist $ C , N $ such that for $ X,Y \in \mathbb R^{2n} $, $ m ( X ) \leq ( 1 + |X-Y|)^N m ( Y ) $. 
Then a classical observable $ a \in C^\infty ( \mathbb R^{2n} ) $ (a function on the phase space) is
a symbol associated to $ m$, $ a \in S ( m ) $, if $ |\partial^\alpha a | \leq C_\alpha m $ for all $ \alpha \in 
\mathbb N^n $ (here $ \partial^\alpha = \partial_{x_1}^{\alpha_1} \cdots  \partial_{x_n}^{\alpha_n} $, 
$ |\alpha| = \alpha_1 + \cdots \alpha_n $. 
For $ u \in \mathscr S ( \mathbb R^n ) $ (the class of functions with rapidly decaying derivatives \cite[\S 3.1]{z12}) a quantisation of $ a \in S ( m )  $, 
is given by \eqref{eq:Weylq}.

Suppose now that $ m \geq 1 $ , 
\begin{equation}
\label{eq:asP1}  P = p^{\rm{w}}  ( x, D ) , \ \ \  p \in S ( m ) ,   \ \ \  p  \geq m/C - C , 
\end{equation}
 We assume that $ P $ is self-adjoint with the domain given by the generalised Sobolev space $ H(m) $ (see \cite[8.2]{z12}; we consider the case of $ h = 1 $). We also assume that $ P \geq 1 $ which implies that $ P^{-1} \in S ( m^{-1} ) $ (see the Appendix). We then have 
$ \mathscr D^s = H (m^s )  $. The condition \eqref{eq:assA1} holds provided that
\begin{equation}
\label{eq:assA2} 
\begin{gathered}
\mathcal A = \{ a^{\rm{w}} ( x, D ) : a \in \mathcal A_{\rm{cl}} \} , \ \ \ 
\sum_{ a \in  \mathcal A_{\rm{cl}} } \sum_{ |\alpha | \leq N_0 } \sup_{\mathbb R^{2n} } | m^{-k} \partial^\alpha a |^2 < \infty , \ \ \  \overline { \mathcal A}_{\rm{cl}}  =  \mathcal A_{\rm{cl}} , 
\end{gathered}
\end{equation}
where $ N_0$ is some (large) fixed constant depending only on $ n$. 

We now list some concrete examples which fit into this framework: the condition $ P \geq 1 $ can 
be obtained by adding a constant to $ P$. One was already given in Theorem \ref{t:3}. 

\noindent {\bf Example 1.} Consider $ \mathcal A_0 \subset
C^\infty ( \mathbb R^n ; \mathbb C^{n+1} )  $, and
\[ \begin{gathered}  
  P = - \Delta + V ( x ) , \ \  | \partial^\alpha V ( x ) | \leq C_\alpha,  \\
\mathcal A  = \{ \langle a_1 ( x ), \partial_x \rangle + a_2 ( x ) : (a_1, a_2 )  \in \mathcal A_0 \}, \ \ \ 
  \sum_{ a \in \mathcal A_0 }  \sum_{ |\alpha | \leq N_0 } \sup_{\mathbb R^n}  | \partial^\alpha a |^2 < \infty  , 
\end{gathered} \]
and that, (to have the set closed under complex conjugation), 
\[ (a_1, a_2 )  \in \mathcal A_0 \ \Longrightarrow \   ( -\bar a_1 , - {\rm{div}} \, \bar a_1 + \bar a_2 ) \in \mathcal A_0 . \]

In this case, $ m = ( 1 + | \xi |)^2 $ and $ H (m^s ) = H^{2s} ( \mathbb R^n )$ (the usual Sobolev space)
and \eqref{eq:assA2} holds with $ k = \frac12 $. Note that in this case $ e^{-P} $ is {\em not} of trace class.

\noindent {\bf Example 2.} Suppose $ (M,g )  $ is a compact Riemannian manifold and
$ P = - \Delta_g + V ( x )$, $  V \in C^\infty ( M, \mathbb R  )$,  and $ \mathcal A $ a family of vector fields with coefficients
bounded in $ C^{N_0} ( M ) $. This could be generalised of $ P $ being any self-adjoint elliptic operator with a non-negative principal symbol of order $ m $ and $ \mathcal A $ a family of pseudodifferential operators of order $ k/2 $
with boundedness condition similar to \eqref{eq:assA2} -- see \cite[\S 14.2]{z12} for operators on manifolds.

\subsubsection{Examples for Theorem \ref{t:1}}
\label{s:ex2}

The assumption \eqref{e:assumeA} is not valid for operators in Theorem \ref{t:3}. 
We can however consider some cases in which it is valid and to which Theorem \ref{t:3} does not apply.

The first example generalises \eqref{eq:Schr} slightly:

\noindent
{\bf Example 3.}  Consider $ P = - \Delta + V ( x )  $ satisfying 
\[   |\partial^\alpha V ( x ) | \leq \left\{ \begin{array}{ll} C ( 1 + |x| ) &  |\alpha| =1, \\
 C_\alpha ( 1 + |x| )^{1-2\delta} , &  |\alpha | \geq 2 , \end{array} \right. \ \ 
V ( x) \geq   |x|^2/C - C , 
\]
(for instance $ V ( x ) = x^2 + ( 1 + x^2 )^{\frac12 ( 1 - 2 \delta )} \cos x $, $ x \in \mathbb R $)
with 
\[  \mathcal A = \{  \langle a_1, \partial_x \rangle + \langle a_2 , x \rangle : ( a_1, a_2 ) \in \mathcal A_0 
\subset \mathbb C^{2n} \}, \ \ \ \sum_{ a \in \mathcal A_0 } | a |^2 < \infty . \]

\noindent
{\bf Example 4.} Consider the torus $   \mathbb T^n := \mathbb R^n/\mathbb Z^n $ and an elliptic self-adjoint 
(on $ L^2 ( \mathbb R^n , dx ) $) second order operator 
\[ \begin{gathered}   P = - \sum_{ 1\leq i, j \leq n } \partial_{x_i} p_{ij} ( x ) \partial_{x_j } + V ( x ) , \ \ \ \ 
p_{ij} \in C^\infty (\mathbb T^n ) ,   \ \  V \in C^\infty ( \mathbb T^n ) , \\
 p_{ij} (x) = \overline{ p_{ji} (x) }, \ \ \ \sum_{ i, j} p_{ij} ( x ) \xi_i \xi_j \geq |\xi|^2 / C , \ \ \xi \in \mathbb R^n .
\end{gathered}\]
Then the following family works:
\[ \begin{gathered} 
\mathcal A = \left\{  \langle a , \partial_x \rangle + b ( x )  : a \in \mathcal A_0 \subset  \mathbb C^n , \ 
b \in \mathcal B \subset C^\infty ( \mathbb T^n ) \right\} 
, \\ \sum_{ 
a \in \mathcal A_0 } |a|^2 + \sum_{ \mathcal B } \sum_{ |\alpha | \leq N_0 } 
\sup |\partial^\alpha b |  < \infty . \end{gathered} \]
We should note that the operators $ u \mapsto b u $ do not satisfy \eqref{e:assumeA} but they are bounded operators and hence can be handled directly.

\smallsection{Acknowledgements} 
JG acknowledges support from EPSRC grants EP/V001760/1 and EP/V051636/1, the Leverhulme Trust under Research Project Grant  RPG-2023-325, and the ERC under the Synergy grant PSINumScat 101167139, and 
MZ from
the Simons Foundation under a ``Moir\'e Materials Magic" grant. We are also grateful to Anthony Chen for 
introducing us to this subject, comments on the first version of the paper and, in particular, reference \cite{ras}, Simon Becker for references \cite{beck},\cite{cipr},\cite{smid} and many useful insights,  Simon Horsley for a helpful discussion on the role of Lindbladians in physics, and Zhen Huang,  Lin Lin, and Vojkan Jak\v{s}i\'c for additional references. We are particularly grateful to Zhen Huang for pointing out a sign mistake in our balance condition! 
We also acknowledge the help of ChatGPT-5 with the computations in \S \ref{s:func}, Appendix \ref{a:b}, and 
the proofreading. 
The second author would also like to thank University College London, for providing support and hospitality
on a visit during which this note was written.


\section{The matrix case revisited}
\label{s:matr}

We start by presenting a different perspective on the case of
\begin{equation}
\label{eq:Hfin}   P = P^* \in \mathcal L ( \mathcal H ) , \ \ \  \mathcal A \subset \mathcal L ( \mathcal H ), \ \ \ \dim \mathcal H < \infty . 
\end{equation}
The functional equation which determines the properties of $ \gamma $ and the functions $ b_j $ in \eqref{eq:defB0} will be the same 
in the infinite dimensional case but it can be presented in a straightforward way. 
We start with the description of the Davies generator \cite{dav1},\cite{dav2} as presented in \cite{brp}.
We then show how the Davies generator arises as the limit of 
the localised generator. This is followed by analysis of the localised version 
from \cite{chen}. 

\subsection{The Davies generator}
\label{s:Davies}
Following \cite[\S 3.3]{brp} (see also \cite[equation (4)]{tem}), we introduce the Davies generator. This generator provides a Lindbladian for which
the Gibbs state, $ \rho = e^{-P}/{\rm{tr}} e^{-P } $, is stationary. It is constructed as follows. Define
\begin{equation}
\label{eq:Anu}    A_\nu := \sum_{ E_2 - E_1 = \nu } \indic_{E_2} ( P ) A \indic_{E_1 } ( P ) , \ \ \ 
[ P, A_\nu] =  \nu A_\nu,  \ \ \  e^{ P } A_\nu e^{-P} = e^{\nu} A_\nu .
\end{equation}
The index $ \nu $ varies along the Bohr spectrum of $ P $:
\[ \mathcal B ( P ) :=  \{ E_2 - E_1 : E_j \in \Spec( P ) \}. \]
A version of the Davies generator is now defined for a family $ \mathcal A \subset \mathcal L ( \mathcal H )$
as
\begin{equation}
\label{eq:Davies}   \mathcal D (T ) := \sum_{ A \in \mathcal A } \sum_{\nu \in \mathcal B ( P ) } 
\gamma ( \nu ) \left( A_\nu T A_\nu^* - \tfrac12 ( A_\nu^* A_\nu T + T A_\nu^* A_\nu ) \right) . \end{equation}
We now have
\begin{prop}
\label{p:Davies}
Suppose that $ \mathcal D $ is given by \eqref{eq:Davies}, $ \mathcal A $ is closed under taking adjoints, 
and 
\begin{equation}
\label{eq:sumanorm}
\sum_{ A \in \mathcal A } \| A \|^2 < \infty . 
\end{equation}
If $ \gamma $ satisfies the following {\em balance condition}
\begin{equation}
\label{eq:gam2D}  \gamma  ( - \omega ) = e^{ \omega}   \gamma ( \omega ) \ \Longleftrightarrow \ 
 \gamma ( \omega ) = e^{-\frac12 \omega } \varphi ( \omega ) , \ \ \varphi ( \omega ) = \varphi ( - \omega ) , 
\end{equation}
then 
\[  \mathcal D ( e^{ - P } ) = 0 . \]
In particular $ e^{-P}$ is stationary for the Lindbladian $ \mathcal L T :  = - i [ \beta^{-1} P, T] + \mathcal D ( T ) $. 
\end{prop}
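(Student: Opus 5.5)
We compute $\mathcal D(e^{-P})$ directly, one Bohr frequency at a time, using the intertwining relation in \eqref{eq:Anu}. Since $\dim\mathcal H<\infty$, all sums are finite and \eqref{eq:sumanorm} guarantees absolute convergence over $\mathcal A$. Write $\rho:=e^{-P}$. From $e^{P}A_\nu e^{-P}=e^{\nu}A_\nu$ we get
\[ A_\nu\rho = e^{\nu}\rho A_\nu , \qquad \rho A_\nu^* = e^{\nu} A_\nu^*\rho , \]
where the second identity is the adjoint of the first, using $\rho=\rho^*$. Hence
\[ A_\nu\rho A_\nu^* = e^{\nu}\rho A_\nu A_\nu^* = A_\nu A_\nu^*\rho , \]
and $A_\nu^*A_\nu$ commutes with $\rho$, because $A_\nu^*A_\nu$ preserves each eigenspace of $P$. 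So the anticommutator term in \eqref{eq:Davies} equals $A_\nu^*A_\nu\rho$, and
\[ \mathcal D(\rho)=\sum_{A\in\mathcal A}\sum_{\nu}\gamma(\nu)\big(e^{\nu}\rho A_\nu A_\nu^* - \rho A_\nu^*A_\nu\big) . \]

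Next we pair the frequencies $\nu$ and $-\nu$ using the closure of $\mathcal A$ under adjoints. Directly from \eqref{eq:Anu},
\[ (A^*)_{-\nu}=\sum_{E_2-E_1=-\nu}\indic_{E_2}(P)A^*\indic_{E_1}(P)=(A_\nu)^* . \]
The Bohr spectrum is symmetric, and $A\mapsto A^*$ is a bijection of $\mathcal A$. Therefore we may reindex the first sum by $A\mapsto A^*$, $\nu\mapsto-\nu$:
\[ \sum_{A}\sum_\nu\gamma(\nu)e^{\nu}A_\nu A_\nu^* = \sum_{A}\sum_\nu\gamma(-\nu)e^{-\nu}(A^*)_{-\nu}(A^*)_{-\nu}^* = \sum_A\sum_\nu \gamma(-\nu)e^{-\nu}A_\nu^*A_\nu . \]
The balance condition \eqref{eq:gam2D} gives $\gamma(-\nu)e^{-\nu}=\gamma(\nu)$, so the two sums cancel and $\mathcal D(\rho)=0$.

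Finally, $[P,e^{-P}]=0$, so $\mathcal L(e^{-P})=0$. The equivalence stated in \eqref{eq:gam2D} follows by setting $\varphi(\omega):=e^{\omega/2}\gamma(\omega)$ and checking that $\varphi$ is even exactly when $\gamma(-\omega)=e^{\omega}\gamma(\omega)$.

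The only delicate step is the reindexing in the second step. It relies on $\mathcal A$ being closed under adjoints as a set, possibly with repeated elements, so that the map $A\mapsto A^*$ is a genuine bijection of the index set. The summability assumption \eqref{eq:sumanorm} lets us rearrange the series freely.
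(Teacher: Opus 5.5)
Your proof is correct and follows the paper's argument: you use the intertwining relation to reduce to $\sum_{A}\sum_\nu\gamma(\nu)\big(e^{\nu}A_\nu A_\nu^*-A_\nu^*A_\nu\big)=0$, then pair $\nu$ with $-\nu$ via $(A^*)_{-\nu}=A_\nu^*$, and conclude from the balance condition together with the closure of $\mathcal A$ under adjoints. One slip: in $A_\nu\rho A_\nu^* = e^{\nu}\rho A_\nu A_\nu^* = A_\nu A_\nu^*\rho$ the last expression should read $e^{\nu}A_\nu A_\nu^*\rho$; this is harmless, since afterwards you only use the correct middle form.
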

\begin{proof} From \eqref{eq:Anu} we obtain
\[   A_\nu e^{-P} A_\nu^* = e^{ \nu }  A_\nu A_\nu^*  e^{-P} , \ \ 
e^{-P} A_\nu^* A_\nu = A_\nu^* A_\nu e^{-P } , \]
which means that we need to check that
\[  \sum_{ A \in \mathcal A } \sum_{\nu \in \mathcal B ( P ) } 
\left( \gamma ( \nu ) e^{\nu } A_\nu A_\nu^* -   \gamma ( \nu ) A_\nu^* A_\nu \right) = 0. \]
We now observe that $ A_{-\nu} = (A^*)_\nu^* $  so that the balance condition \eqref{eq:gam2D}
shows that the sum
is equal to 
\[ \sum_{ A \in \mathcal A } \sum_{\nu \in \mathcal B  ( P ) } \gamma( \nu )\left(  A_{-\nu} A_{-\nu} ^* 
- A_\nu^*  A_\nu \right) =  \sum_{ A \in \mathcal A } \sum_{\nu \in \mathcal B ( P )} \gamma( \nu )\left(  (A^*)_{\nu}^* 
(A^*)_{\nu}  - A_\nu^*  A_\nu \right) .
\]
Since $ \mathcal A $ is assumed to be invariant under taking adjoints the sum vanishes.
\end{proof}

\subsection{Davies generator as a limit of localised generators}
\label{s:davlim}

The following proposition relates the Davies generator from \S \ref{s:Davies} to the Lindbladian described in \S \ref{s:int}  under the assumption that $ \dim \mathcal H < \infty $ is fixed. Since we present this only to 
relate the two generators we do not consider uniformity in the dimension.  We make a somewhat strong assumption on $ \gamma $ -- it is satisfied by Gaussians considered in \cite{chen}. 
\begin{prop}
\label{p:loc2Da}
Suppose that \eqref{eq:sumanorm} holds and that
\begin{equation}
\label{eq:assgam}   \gamma ( \omega )\in C^\infty ( \mathbb R )  , \ \ \  \widehat { e^{ - \omega } \gamma ( \omega ) }  \in L^1 ( \mathbb R ) .
\end{equation}
 For  $\mathcal L_f $ and $ \mathcal D $ defined in \eqref{eq:Lind} (with $ f_\sigma $ and $ b_j$'s in \eqref{eq:defsigma} and 
\eqref{eq:defb12}) and \eqref{eq:Davies} respectively, 
\begin{equation}
\label{eq:Lpconv}  \lim_{\sigma \to 0 }\|  \mathcal L_{f_\sigma } ( T ) -  \mathcal D ( T ) \|_{\mathcal L_p ( \mathcal H ) }  =0  , \ \ \ T \in \mathcal L_p ( \mathcal H ) ,  \ \ \ 1 \leq p \leq \infty ,  \end{equation}
where $ \mathcal L_p ( \mathcal H ) $ denotes the $p$--Schatten class (see \cite[Chapter 10, \S 1.3]{kato}).
\end{prop}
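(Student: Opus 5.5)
We work in the framework of \eqref{eq:Hfin}: $P$ is a bounded self-adjoint operator with pure point spectrum, so that $A_\nu$ in \eqref{eq:Anu} and $\mathcal D$ in \eqref{eq:Davies} make sense. We do not use $\dim\mathcal H<\infty$, only \eqref{eq:sumanorm} and \eqref{eq:assgam}. The plan has three parts: an exact spectral formula for $\mathcal L_{f_\sigma}$, convergence on a dense class, and uniform bounds. Write $P=\sum_E E\,\indic_E(P)$ and $A=\sum_\nu A_\nu$, where the sum converges strongly in the sense of matrix blocks. Since $e^{iPt}A_\nu e^{-iPt}=e^{i\nu t}A_\nu$, \eqref{eq:OFT} gives
\[ \hat A_f(\omega)=\sum_{\nu}\hat f_\sigma(\omega-\nu)A_\nu=\sum_\nu f_{1/\sigma}(\omega-\nu)A_\nu . \]
Hence the dissipator is
\[ \mathcal D_{f_\sigma}(T)=\sum_{A}\sum_{\nu,\nu'}c_\sigma(\nu,\nu')\Big(A_\nu TA_{\nu'}^*-\tfrac12\{A_{\nu'}^*A_\nu,T\}\Big),\qquad c_\sigma(\nu,\nu')=\int\gamma(\omega)f_{1/\sigma}(\omega-\nu)f_{1/\sigma}(\omega-\nu')\,d\omega . \]
Completing the square gives
\[ c_\sigma(\nu,\nu')=e^{-(\nu-\nu')^2/4\sigma^2}\,(\gamma*g_\sigma)\big(\tfrac{\nu+\nu'}2\big), \]
where $g_\sigma$ is a normalised Gaussian of width $\sim\sigma$. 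Because $\gamma$ is continuous and bounded, $c_\sigma(\nu,\nu')\to\gamma(\nu)\delta_{\nu\nu'}$ pointwise, with $|c_\sigma|\le\sup|\gamma|$. Boundedness of $\gamma$ is a point to check; I would get it from \eqref{eq:assgam} together with the balance form of $\gamma$. Otherwise I would add it as a harmless hypothesis.

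The coherent term is treated in the same way. Since $e^{iPs}A^*e^{-2iPs}Ae^{iPs}$ has $(E,E')$ block $\sum e^{is(\cdots)}$, the operator $B$ has $(E,E')$ blocks multiplied by $\sqrt{2\pi}\,\hat b_1(E-E')$. By \eqref{eq:defb12}, $\hat b_1(\tau)=(4\sigma\sqrt\pi i)^{-1}e^{-\tau^2/4\sigma^2}\tanh(\tau/4)$. This vanishes at $\tau=0$ because $\tanh 0=0$, and it tends to $0$ for $\tau\ne0$ like $\sigma^{-1}e^{-\tau^2/4\sigma^2}$. Moreover $\sup_\tau|\hat b_1|$ is bounded uniformly in $\sigma$, since $|\tanh(\tau/4)|\le|\tau|/4$. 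The factor coming from $\hat b_2$ is bounded by $\sup|\gamma|$ times an $O(\sigma)$ Gaussian integral, so $B\to0$ blockwise. On the other hand $-i[\beta^{-1}P,T]$ is common to $\mathcal L_{f_\sigma}$ and to $\mathcal L$, where it accompanies $\mathcal D$. Therefore, for $T$ of finite spectral support, i.e.\ $T=\Pi T\Pi$ with $\Pi$ a finite sum of $\indic_E(P)$, every block of $\mathcal L_{f_\sigma}(T)-\mathcal L(T)$ tends to $0$.

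Passing to general $T\in\mathcal L_p$ is done by an $\varepsilon/3$ argument, which needs the following uniform bound:
\[ \sup_{0<\sigma\le1}\|\mathcal L_{f_\sigma}\|_{\mathcal L_p\to\mathcal L_p}<\infty . \]
For the sandwich term I would use Plancherel in $t$: $\int\|\hat A_f(\omega)u\|^2d\omega=\int|f(t)|^2\|Ae^{-iPt}u\|^2dt\le\|A\|^2\|u\|^2$. Writing $\int\gamma\,\hat A_fT\hat A_f^*\,d\omega$ as a Kraus/Stinespring-type composition $V^*(T\otimes1)V$ with $\|V\|^2\le\sup|\gamma|\sum\|A\|^2$ then gives the bound on every Schatten class. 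The anticommutator term is bounded by the same norm of $\int\gamma\hat A_f^*\hat A_fd\omega$. For $B$ I would bound $\|B\|$ using Schur-type estimates on the blocks, via the uniform bound on $\hat b_1$ and $\|A\|^2$; if necessary I would use $\|b_1\|_{L^1}$, which stays bounded because $b_1$ is essentially a $\sigma$-scaled derivative of a Gaussian. This uniform-bound step is the main obstacle.

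Even on finite-spectral-support $T$, the output $\mathcal L_{f_\sigma}(T)$ involves all blocks, because $A_\nu$ moves energy. I would therefore use \eqref{eq:sumanorm} together with dominated convergence over the doubly indexed sum in $(\nu,\nu')$. The domination comes from Cauchy--Schwarz in the form $\sum_\nu\|A_\nu u\|^2=\|Au\|^2$. This yields convergence in $\mathcal L_p$ for $p<\infty$. For $p=\infty$ I would argue by duality from $p=1$, which needs uniform bounds for the adjoint maps as well. Finite-spectral-support operators are dense only for $p<\infty$, so the operator-norm case may genuinely require this duality/strong-convergence substitute, and it is the step I would check most carefully.
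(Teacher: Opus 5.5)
In the setting the proposition actually concerns ($\dim\mathcal H<\infty$, so $\mathcal B(P)$ is finite), your first two paragraphs already form a complete proof. Your treatment of the dissipator is the paper's argument: expand $\hat A_{f_\sigma}(\omega)=\sum_\nu f_{1/\sigma}(\omega-\nu)A_\nu$, complete the square in the coefficients, pass to the limit at the finitely many pairs $(\nu,\nu')$, and use $\sum_A\|A\|^2<\infty$ for the sum over $\mathcal A$. You also correctly read the comparison as being with $-i[\beta^{-1}P,\,\cdot\,]+\mathcal D$, which is what the paper's argument proves.

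You need neither boundedness of $\gamma$ nor any balance form. Balance is not a hypothesis here, and \eqref{eq:balao} depends on $\sigma$. What \eqref{eq:assgam} gives is that $e^{-\omega}\gamma$ is bounded, so $\gamma$ grows at most exponentially, and the Gaussian weights absorb that. The density, uniform-bound and duality steps arise only because you try to drop $\dim\mathcal H<\infty$. In finite dimensions all $\mathcal L_p$ norms are equivalent and blockwise convergence is convergence. Those steps, which you leave unproved, should simply be dropped rather than presented as part of the argument.

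The genuine difference is the coherent term $B$. You argue blockwise on the frequency side: $\hat b_1(0)=0$, $\hat b_1$ is bounded, and $\hat b_2=O(\sigma)$ at each of the finitely many Bohr frequencies. The paper instead works in time and bounds $\|B\|\le\sum_A\|A\|^2\|b_1\|_{L^1}\|b_2\|_{L^1}$. It gets $\|b_1\|_{L^1}\le\frac1{32}\sqrt\pi$ uniformly in $\sigma$ from the representation of $b_1$ through $(e^{-\sigma^2(t+s)^2}-e^{-\sigma^2(t-s)^2})/\sinh(2\pi s)$ plus a sign argument. It gets $\|b_2\|_{L^1}=O(\sigma)$ from \eqref{eq:assgam}, which is where that hypothesis is really used.

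Your route is more elementary but depends on the Bohr spectrum being finite. If frequencies accumulate, $\hat b_1(E-E')$ is of size one when $|E-E'|\sim\sigma$. Moreover, a sup bound on $\hat b_1$ does not control the associated Schur multiplier, whereas $\|b_1\|_{L^1}$ does. The paper's estimate never sees the spectrum, so it is dimension-free. It is exactly the vanishing bound on $B$ you flagged as the main obstacle. Your heuristic that $b_1$ is essentially a scaled Gaussian derivative would have to be replaced by that explicit computation if you wanted to go beyond finite dimensions.
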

\begin{proof}
In the case of matrices (since $ \mathcal B ( P) $ is finite),  definitions \eqref{eq:OFT} and \eqref{eq:Anu}  give
\[ \widehat A_{f_\sigma} ( \omega) = \sum_{\nu } A_\nu \hat f_\sigma ( \omega - \nu ) ,\]
and hence
\[ \mathcal D_{f_ \sigma }  (T ) - \mathcal D ( T)   = 
 \sum_{ A \in \mathcal A } 
  \sum_{\nu,\nu'}   \left( A_\nu T A_{\nu'}^* - \tfrac12 \left\{ A_\nu^* A_{\nu'}  , T \right\} \right) 
  ( G_\sigma ( \nu, \nu' ) - \gamma ( \nu ) \delta_{\nu\nu' } ) .
\]
The formula \eqref{eq:defsigma} gives
 \[ \begin{split} G_\sigma ( \nu, \nu' ) & := \int_{\mathbb R } \gamma ( \omega ) f_{1/\sigma} ( \omega - \nu )  f_{1/\sigma } ( \omega - \nu' ) d \omega \\
 & = e^{ - ( \nu - \nu' )^2 / 4 \sigma^2 }  \frac{\sqrt \pi } \sigma \int_{\mathbb R } \gamma (\omega ) e^{ - \sigma^{-2} ( \omega - \frac12( \nu+\nu' ) )^2 } d \omega \to   \gamma ( \nu ) \delta_{\nu \nu'} .
 \end{split} \]
 Since $ \| A_\nu T A_{\nu'}^* \|_{\mathcal L_p} \leq \|A \|^2 \|T \|_{ \mathcal L_p } $, with similar estimates for other terms, the finiteness of the sum over $ \nu $ and $\nu' $ 
 shows that for $ T \in \mathcal L_p $, $ \mathcal D_{f_\sigma } ( T ) \to \mathcal D ( T ) $ in $ \mathcal L_p $ 
 as $ \sigma \to 0 $. 
 
 It remains to show that
 $ i [ B , T ] \to 0 $ as $ \sigma \to 0 $ and we do it by showing that $ \| B \| \to 0 $.  For that we go to the inverse Fourier transforms of the $ b_j $'s appearing in  \eqref{eq:defB0} and given in \eqref{eq:b1b2FT}. 
 We write $ b_1 $ as follows 
 \[  b_1 ( t )   = \tfrac18 \pi^{\frac12}  \int_0^\infty \frac{ e^{ - \sigma^2 ( t + s )^2 } - e^{ - \sigma^2 ( t - s )^2 } }
 {\sinh ( 2 \pi s ) } ds . \]
 Since $  ( t + s )^2 \geq ( t - s )^2 $ for $ t \geq 0 $ with the opposite inequality for $ t \leq 0 $, 
 \[  \begin{split} \| b_1 \|_{L^1} & =  \tfrac14 \pi^{\frac12} \int_0^\infty \! \! \int_0^\infty 
  \frac{ e^{ - \sigma^2 ( t - s )^2 } - e^{ - \sigma^2 ( t + s )^2 }}
 {\sinh ( 2 \pi s ) } ds dt = \tfrac14 \pi^{\frac12} \int_0^\infty \! \! \int_{-s}^s \frac{ e^{-\sigma^2 u^2 }  }
 {\sinh ( 2 \pi s ) } du ds \\
 & \leq \tfrac12 \pi^{\frac12} \int_0^\infty \frac{ s } 
 {\sinh ( 2 \pi s ) } ds = \tfrac{1}{32} \pi^{\frac12} . \end{split}  \]
  For $ b_2 $ we use the assumption \eqref{eq:assgam} 
noting that 
 $ \hat \gamma ( t - i ) $ in \eqref{eq:b1b2FT} is the Fourier transform of $ e^{ - \omega } \gamma ( \omega) $.
Hence,
 $ \| b_2 \|_1 \leq  \sqrt \pi \sigma e^{ \frac14 \sigma^2 } \| \hat \gamma (  \bullet - i ) \|_1 = 
  \mathcal O ( \sigma) ,  \ \ \sigma \to 0$. 
This and \eqref{eq:sumanorm} show that $ \| B \| \to 0 $ as $ \sigma \to 0 $, completing the proof.
\end{proof}

\subsection{A localised version}
\label{s:localmat}

We now consider \eqref{eq:OFT} in the case of matrices. For that we define
\[   \begin{split} \hat A (\omega) & := \frac{1}{\sqrt {2 \pi } } \int_{\mathbb R } e^{ i t P } A e^{- i t P } e^{- it \omega } dt \\
& =
\sqrt { 2 \pi } \sum_{\nu \in B ( P ) } A_\nu \delta ( \nu - \omega ) ,   \ \ \ \ 
\hat A \in \mathscr S' ( \mathbb R ; \mathcal L ( \mathcal H ) ), \end{split} 
 \]
where $ \mathcal H $ is the finite dimensional Hilbert space on which the operators act. 

The operator in \eqref{eq:OFT} becomes (specifying the dependence on $ f $ in the notation)
\[ \hat A_f  ( \omega ) =  \frac1 { \sqrt { 2 \pi } } \hat A * \hat f ( \omega ),  \ \ 
\hat f ( \omega ) :=  \frac{1}{\sqrt {2 \pi } } \int_{\mathbb R } f ( t ) e^{ -i t \omega } dt . \]
In view of \eqref{eq:Anu}, 
\begin{equation}
\label{eq:eH2eH} e^{-P }  \hat A (\omega ) = e^{-\omega } \hat A (\omega ) e^{-P } , \ \ \ 
\hat A (\omega)^* = \widehat { A^* } (-\omega) . 
\end{equation} 
We now write 
$  \mathcal D_f ( e^{-P} )  = \frac{1}{ 2 \pi } \sum_{ j } I_j $, 
where, assume that $ \hat f $ is real valued, and that $ \mathcal A $ is invariant under taking adjoints,
\[ \begin{split} I_1 & :=  \sum_{A \in \mathcal A } 
\int_{\mathbb R^3}   \gamma( \omega )  \hat A(\tau) e^{-P } \widehat {A^*}(-\tau')  \hat f ( \omega - \tau )  \hat f ( \omega - \tau' )
d \tau d \tau' d \omega  \\
& = \int_{\mathbb R^2} 
\sum_{A \in \mathcal A } 
 \hat A(\tau) \widehat {A^*}(-\tau') e^{-P } \left( 
 \int_{\mathbb R } \gamma ( \omega ) e^{ \tau' }  \hat f ( \omega - \tau )  \hat f ( \omega - \tau' ) d \omega 
 \right) d \tau d \tau' \\
 & =  \int_{\mathbb R^2} 
\sum_{A \in \mathcal A } 
 \hat A (\tau)^* \hat A (\tau' ) e^{-P } \left( 
 \int_{\mathbb R } \gamma ( \omega ) e^{ -\tau' }  \hat f ( \omega + \tau )  \hat f ( \omega + \tau' ) d \omega 
 \right) d \tau d \tau',
\end{split} \]
\[ \begin{split} I_2 & := - \sum_{A \in \mathcal A }  \tfrac 12 \int_{\mathbb R^3}   \gamma( \omega )
\widehat {A^*}(-\tau) \hat A(\tau' ) e^{ -P }   \hat f ( \omega - \tau ) \hat f ( \omega - \tau' )  d \omega  d \tau d \tau' 
  \\
& = \sum_{A \in \mathcal A } \int_{\mathbb R^2 } \hat {A}(\tau)^* \hat A( \tau' ) e^{ -P } 
\left( - \tfrac12  \int_{\mathbb R } \gamma ( \omega )  \hat f ( \omega - \tau ) \hat f ( \omega - \tau' )  d \omega  \right) d \tau d \tau'  ,
\end{split} \]
and
\[ \begin{split} I_3 & := - \sum_{A \in \mathcal A }  \tfrac 12 \int_{\mathbb R^3}   \gamma( \omega )
 e^{-P} \widehat {A^*} (\tau) \hat A (- \tau' )  \hat f ( \omega + \tau ) \hat f ( \omega + \tau' ) 
 d \omega d \tau d \tau' \\
 & = \sum_{A \in \mathcal A } \int_{\mathbb R^2 } \widehat {A^*}(\tau) \hat A (- \tau' ) e^{ -P } 
 \left( - \tfrac12  \int_{\mathbb R } \gamma( \omega ) e^{\tau' - \tau }  \hat f ( \omega + \tau ) \hat f ( \omega + \tau' )  d \omega  \right) d \tau d \tau' \\
& = \sum_{A \in \mathcal A } \int_{\mathbb R^2 } \hat {A}(\tau)^* \hat A ( \tau' ) e^{ -P } 
 \left( - \tfrac12  \int_{\mathbb R } \gamma( \omega ) e^{\tau - \tau' }  \hat f ( \omega - \tau ) \hat f ( \omega - \tau' )  d \omega  \right) d \tau d \tau' .
 \end{split} \]
If we define
\begin{equation}
\label{eq:defF}  F ( \tau, \tau' ) :=  \int_{\mathbb R } \gamma ( \omega ) \left( 
e^{-\tau' } \hat f ( \omega + \tau ) \hat f ( \omega + \tau' ) - \tfrac12 ( 1 + e^{ \tau - \tau' } ) \hat f ( \omega - \tau ) 
\hat f ( \omega - \tau' ) 
\right) d \omega,  \end{equation}
the formulas for $I_j $'s show that 
\begin{equation}
\label{eq:Dfe} \mathcal D_f ( e^{-P} ) =  \frac{1}{ 2 \pi }  \sum_{ A \in \mathcal A }  \int_{\mathbb R^2 }  F ( \tau, \tau' ) \hat {A}(\tau)^*
 \hat A ( \tau' ) e^{ -P } 
d \tau d \tau' . \end{equation}
We now want to find a self-adjoint operator on $ \mathcal H$,  $ B $, such that, for $ \omega $ and $ b $ with suitable properties, 
\begin{equation}
\label{eq:defB}  \mathcal D_f (e^{-P} ) = i [ B, e^{-P} ] , \ \ \ 
B = \frac{1}{ 2 \pi }   \sum_{ A \in \mathcal A }  \int_{\mathbb R^2 } b ( \tau, \tau' ) \hat {A}(\tau)^* \hat A (\tau' ) 
d \tau d \tau' .
\end{equation}
Since
\[ \begin{split}  B^* & = \frac{1}{ 2\pi}  \sum_{ A \in \mathcal A }  \int_{\mathbb R^2 } \overline{ b ( \tau, \tau' ) } \hat {A}(\tau')^* \hat{ A}  (\tau ) 
d \tau d \tau' ,
\end{split}
\]
we have $ B = B^* $ (for any choice of $ \mathcal A $) if  
\begin{equation}
\label{eq:condb}  \overline{ b ( \tau , \tau') } = b ( \tau', \tau ) .
\end{equation}

We now compute the commutator using \eqref{eq:eH2eH} and 
\[ e^{ -P} \hat A ( \tau )^* = e^{-P} \widehat { A^* } ( - \tau ) = e^{ \tau } \widehat { A^* } ( - \tau ) e^{-P} =
e^{\tau} \hat A( \tau )^* e^{-P } , \]
to obtain
\[ \begin{split} [ B , e^{-P} ] & = \frac{1}{ 2 \pi }   \sum_{ A \in \mathcal A }  \int_{\mathbb R^2 } b ( \tau, \tau' )
(  \hat {A}(\tau)^* \hat A (\tau' ) e^{-P} - e^{-P} \hat {A}(\tau)^* \hat A (\tau' ) ) 
d \tau d \tau' \\
& = \frac{1}{ 2 \pi }   \sum_{ A \in \mathcal A }  \int_{\mathbb R^2 } b ( \tau, \tau' ) ( 1 - e^{ \tau- \tau' } ) 
 \hat {A}(\tau)^* \hat A (\tau' ) e^{-P} d \tau d \tau'  . \end{split} \]
  In the notation of \eqref{eq:Dfe} and \eqref{eq:defB} this leads to the following functional equation:
\begin{equation}
\label{eq:funct}  F ( \tau, \tau' ) = i( 1 - e^{ \tau - \tau' } ) b ( \tau, \tau' ) . \end{equation}
It turns out that for $ f $ given by the Gaussian centred at $ 0 $, a condition on the function $ \gamma $ is sufficient
for obtaining a solution -- see \S \ref{s:func}. 

Before solving the functional equation, we relate the form of $ B $ in \eqref{eq:defB} with that from \cite{chen} presented in 
\eqref{eq:defB0}.
\begin{lemm}
\label{l:B2B}
The operator $ B $ in \eqref{eq:defB} is given by the formula \eqref{eq:defB0} if
\begin{equation}
\label{eq:b2b12}  b ( \tau, \tau' ) = 2 \pi  \hat b_1  ( \tau' - \tau ) \hat b_2 ( \tau + \tau' ) . 
\end{equation}
\end{lemm}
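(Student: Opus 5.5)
The plan is to substitute the definition of $\hat A(\tau)$ into \eqref{eq:defB} and reduce everything to a time-domain expression. Write $A(t) := e^{itP} A e^{-itP}$, so that formally $\hat A(\tau) = (2\pi)^{-1/2}\int A(t) e^{-it\tau}\,dt$. Taking adjoints gives $\hat A(\tau)^* = (2\pi)^{-1/2}\int A(t)^* e^{it\tau}\,dt$. Then
\[ \frac{1}{2\pi}\int_{\mathbb R^2} b(\tau,\tau')\hat A(\tau)^*\hat A(\tau')\,d\tau\, d\tau' = \frac{1}{(2\pi)^2}\int_{\mathbb R^2} \check{\check b}(t,t')\, A(t)^* A(t')\,dt\,dt', \]
where
\[ \check{\check b}(t,t') := \int_{\mathbb R^2} b(\tau,\tau') e^{it\tau - it'\tau'}\,d\tau\, d\tau'. \]
Since $\dim\mathcal H<\infty$, $\hat A$ is a finite sum of deltas and all of these manipulations are just Fourier inversion in the sense of tempered distributions. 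Alternatively, I would verify the whole identity directly on the spectral decomposition $A = \sum_\nu A_\nu$, which avoids any distributional subtleties.

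Next I insert $b(\tau,\tau') = 2\pi\, \hat b_1(\tau'-\tau)\,\hat b_2(\tau+\tau')$ and change variables to $u=\tau'-\tau$, $v=\tau+\tau'$, with $d\tau\, d\tau' = \tfrac12\, du\, dv$. We have $t\tau - t'\tau' = -\tfrac12(t+t')u - \tfrac12(t'-t)v$, so the double integral factorises into a product of two one-dimensional inverse Fourier transforms. Recalling $\int \hat g(u) e^{-isu}\,du = \sqrt{2\pi}\, g(-s)$, this gives
\[ \check{\check b}(t,t') = 2\pi\cdot\tfrac12\cdot 2\pi\, b_1\big(\tfrac{t+t'}{2}\big)\, b_2\big(\tfrac{t-t'}{2}\big). \]

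I then change variables in time, writing $t = -T' + S$ and $t' = -T' - S$ (up to sign conventions, to be fixed by matching), so that the product $A(t)^* A(t')$ becomes
\[ e^{-iPT'}\Big( e^{iPS} A^* e^{-2iPS} A e^{iPS}\Big) e^{iPT'}. \]
The Jacobian constants should cancel the remaining factors of $2\pi$ exactly, and the result has the form of \eqref{eq:defB0} with $b_1$ in the outer variable and $b_2$ in the inner one.

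The main obstacle is pure bookkeeping: getting the signs of the arguments of $b_1$ and $b_2$ right, and making sure all the constants cancel. The sign and variable conventions in the last change of variables should be chosen so that the arguments come out as $b_1(t)$ and $b_2(s)$ exactly as in \eqref{eq:defB0}. If a reflection such as $b_2(-s)$ appears instead, I would absorb it either through the convention used for the inverse transform in \eqref{eq:b1b2FT} or through the substitution $s\mapsto -s$, which leaves the form of the inner conjugation in \eqref{eq:defB0} unchanged. I would settle the conventions by checking the identity on a single Bohr component $A_\nu$, that is, on $A^*_{\nu}A_{\nu'}$. There $A(t) = \sum_\nu e^{it\nu}A_\nu$, and both sides reduce to $2\pi\,\hat b_1(\nu'-\nu)\,\hat b_2(\nu+\nu')\,A_\nu^* A_{\nu'}$ times the same constant; this check fixes every sign and normalisation unambiguously.
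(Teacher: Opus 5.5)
Your main computation is the paper's argument run in the opposite direction. The paper starts from \eqref{eq:defB0}, substitutes $w=s-t$, $v=-s-t$ (exactly your $t=S-T'$, $t'=-S-T'$) and Fourier transforms the coefficient $\tfrac12 b_1(-\tfrac12(v+w))\,b_2(\tfrac12(w-v))$. You instead inverse-transform $b$ and then change the time variables. Your check on Bohr components is, by itself, a complete and more elementary proof in this finite-dimensional setting, since both sides equal $\sum_{\nu,\nu'}2\pi\hat b_1(\nu'-\nu)\hat b_2(\nu+\nu')A_\nu^*A_{\nu'}$.

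There is one slip to correct. With $\int\hat g(u)e^{-isu}\,du=\sqrt{2\pi}\,g(-s)$ and $s=\tfrac12(t+t')$, the first factor is $b_1\bigl(-\tfrac{t+t'}{2}\bigr)$, not $b_1\bigl(\tfrac{t+t'}{2}\bigr)$. With the correct sign, your substitution $t=S-T'$, $t'=-S-T'$ gives exactly $b_1(T')\,b_2(S)$ and the operator $e^{-iPT'}\bigl(e^{iPS}A^*e^{-2iPS}Ae^{iPS}\bigr)e^{iPT'}$. The constant is $(2\pi)^{-2}\cdot 2\pi\cdot\tfrac12\cdot 2\pi\cdot 2=1$, so no reflection ever appears.

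Your fallback would not have been legitimate anyway. The substitution $s\mapsto -s$ turns $e^{iPs}A^*e^{-2iPs}Ae^{isP}$ into $e^{-iPs}A^*e^{2iPs}Ae^{-isP}$, which is not of the form in \eqref{eq:defB0}. The Fourier normalisation is also fixed by the paper. A surviving reflection would therefore signal a computational error, not a convention to be absorbed.
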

\begin{proof}
Make the following change of variables in \eqref{eq:defB},  $ w = s - t $, $ v = - s - t $, so that the integral becomes
\[    \tfrac12  \int_{\mathbb R^2} b_1 ( - \tfrac12 ( v + w ) ) b_2 ( \tfrac12 ( w - v ) ) e^{i w P } A^* e^{ - i w P } e^{ i v P} 
A e^{ - i v P } d v d w . \]
To apply Plancherel's theorem we calculate
\[ \begin{split}  \tfrac12 \int b_1 ( - \tfrac12 ( v + w ) ) b_2 ( \tfrac12 ( w - v ) )e^{ - i w \tau - i v \tau' } dv dw & = 
\int b_1 ( t ) b_2 ( s ) e^{ -i  ( s- t ) \tau + i ( s + t) \tau' } dt ds \\
& =
\hat b_1 ( - \tau - \tau' ) \hat{b}_2 ( \tau - \tau' ) . \end{split} \]
Hence Plancherel's theorem shows that $ B $ given in \eqref{eq:defB0} can be written as 
\[ \begin{split} B & = \sum_{ A \in \mathcal A } \int_{ \mathbb R^2} \hat b_1 ( - \tau - \tau' ) b_2 ( \tau - \tau' ) 
\widehat { A^*} ( - \tau ) \hat A ( - \tau' ) d \tau d \tau' 
\\
&= \sum_{ A \in \mathcal A } \int_{ \mathbb R^2} \hat b_1 (  \tau'- \tau ) b_2 ( \tau + \tau' )  \widehat{A} ( \tau )^* A ( \tau' ) 
d \tau d \tau' ,
\end{split} \]
which gives \eqref{eq:b2b12}.
\end{proof}

\subsection{Solving functional equation \eqref{eq:funct} in the Gaussian case}
\label{s:func}
A necessary and sufficient condition for the existence of a function
$b(\tau,\tau')$ such that \eqref{eq:funct} holds is
$
F(\tau,\tau)=0$ for all $\tau\in\mathbb R$. 
Equivalently,
\begin{equation}
\label{eq:defF2}
\begin{split}
0
&=
\int_{\mathbb R}\gamma(\omega)
\Bigl(
e^{-\tau}\hat f(\omega+\tau)^2-\hat f(\omega-\tau)^2
\Bigr)\,d\omega
\\
&=
\int_{\mathbb R}\hat f(\eta)^2
\Bigl(
e^{-\tau}\gamma(\eta-\tau)-\gamma(\eta+\tau)
\Bigr)\,d\eta
\\
&=
\sigma^{-1}\sqrt{\pi}\int_{\mathbb R}
\Bigl(
e^{-\tau}\gamma(\eta-\tau)-\gamma(\eta+\tau)
\Bigr)e^{-\eta^2/\sigma^2}\,d\eta ,
\end{split}
\end{equation}
where in the last equality we used $f=f_\sigma$ from \eqref{eq:defsigma}, so that
$ \hat f_\sigma(\eta)^2=\sigma^{-1}\sqrt{\pi}\,e^{-\eta^2/\sigma^2}$. 
If we define 
$ G_\sigma(x):=\sigma^{-1}\sqrt{\pi}\,e^{-x^2/\sigma^2}$, 
then 
condition \eqref{eq:defF2} may be written as
\[
(\gamma*G_\sigma)(\tau)=e^{-\tau}(\gamma*G_\sigma)(-\tau)= (e^{\bullet}[\gamma*G_\sigma](\bullet))(-\tau).
\]

Passing to the Fourier transform side, and assuming the required analyticity of
$\hat\gamma$, we obtain
\[
\hat\gamma(t)\widehat G_\sigma(t)
=
\hat\gamma(-t + i)\widehat G_\sigma(-t + i).
\]
Since
\[
\widehat G_\sigma(s)=2^{-1/2}e^{-\sigma^2 s^2/4},
\]
this is equivalent to
\[
\hat\gamma(t)
=
e^{i\sigma^2 t/2}e^{\sigma^2/4}\,\hat\gamma(-t +i).
\]
Equivalently, if we define
\[
\hat\varphi(t):=e^{-i\sigma^2 t/4}\hat\gamma\!\left(t + \tfrac{1}{2}i \right)
\]
then
\[
\hat\varphi(t)=\hat\varphi(-t).
\]
Taking inverse Fourier transforms, we obtain the following description of all
$\gamma$ for which \eqref{eq:defF2} holds:
\[
\gamma(\omega)=e^{-\omega/2}\,\varphi\!\left(\omega + \tfrac14 {\sigma^2} \right),
\qquad \varphi(\omega)=\varphi(-\omega).
\]

To obtain the expressions for $b_1$ and $b_2$ in \eqref{eq:b2b12}, it is convenient
to introduce
$
\xi:=\tau-\tau'$, $ \zeta:=\tau+\tau'$, 
and
\[
A(\zeta):=\int_{\mathbb R}\gamma(\omega)e^{-\omega^2/\sigma^2}e^{-\zeta\omega/\sigma^2}\,d\omega .
\]
A direct computation then gives
\[
F(\tau,\tau')
=
\sigma^{-1}\sqrt{\pi}\,
e^{-(\zeta^2+\xi^2)/4\sigma^2}
\left(
e^{-(\zeta-\xi)/2}A(\zeta)-\tfrac12(1+e^\xi)A(-\zeta)
\right).
\]
Under the divisibility condition $F(\tau,\tau)=0$, equivalently
\[
A(-\zeta)=e^{-\zeta/2}A(\zeta),
\]
this becomes
\[
F(\tau,\tau')
=
\sigma^{-1}\sqrt{\pi}\,i(1-e^\xi)\,\beta_1(\xi)\,\beta_2(\zeta),
\]
where
\[
\beta_1(\xi)=\frac{1}{2i}e^{-\xi^2/4\sigma^2}\tanh\!\left(\frac{\xi}4\right),
\qquad
\beta_2(\zeta)
=
e^{-\zeta/2}\int_{\mathbb R}\gamma(\omega)\,e^{-(\omega+\zeta/2)^2/\sigma^2}\,d\omega .
\]

In particular, we may write
\[
b(\tau,\tau')=2\pi\,\hat b_1(\xi)\hat b_2(\zeta),
\]
with
\[
\hat b_1(\xi)=\frac{1}{4\sigma\sqrt{\pi}\,i}\,
e^{-\xi^2/4\sigma^2}\tanh\!\left(\frac{\xi}{4}\right),
\qquad
\hat b_2(\zeta)=
e^{-\zeta/2}\int_{\mathbb R}\gamma(\omega)\,e^{-(\omega+\zeta/2)^2/\sigma^2}\,d\omega .
\]
%
%
%
%
This gives Theorem \ref{t:2} in the case of matrices. Here are the expressions $ b_j $'s after taking the inverse Fourier transform (less clean in the case of $ b_1 $):
\begin{equation}
\label{eq:b1b2FT}  
\begin{gathered} 
b_1(t)
=
-\tfrac{1}{8} \pi^{-\frac12}  ( e^{-\sigma^2 \bullet ^2} * \sinh(2\pi \bullet)^{-1} )(t), \ \ \ 
b_2(t) = 2\sqrt{\pi}\,\sigma\, e^{-\frac14 \sigma^2\left(2t + i \right)^2} \,\hat{\gamma}(2t + i),
\end{gathered}
\end{equation}
where $ \sinh ( 2 \pi x )^{-1} $ is considered the distribution defined by taking the principal value at 
$ 0 $: $  \sinh ( 2 \pi \bullet )^{-1} ( \varphi ) := \lim_{\varepsilon \to 0 } \int_{\mathbb R \setminus
( -\varepsilon , \varepsilon ) } \sinh ( 2 \pi t ) ^{-1} \varphi ( t ) dt $, $ \varphi \in \mathscr S ( \mathbb R ) $.

\section{Unbounded operators}
\label{s:gen}

The assumptions for Theorem \ref{t:2} are very general as illustrated in \S \ref{s:ex1}. The proof is also
straightforward since we assume that the operators $ A $ act on spaces defined using $ P $ -- see 
\eqref{eq:defDs}  and \eqref{eq:assA1}. We also note that for those spaces 
$ e^{-P } : \mathscr D^{-N } \to \mathscr D^{N } $ for any $ N $.

\begin{proof}[Proof of Theorem \ref{t:2}] The strategy is to mimic the proof in \S \ref{s:localmat} 
considering $ A ( t) := e^{it P } A e^{- i t P } $ as an operator-valued tempered distribution. 
Then $ \hat A ( \omega ) $ can be considered as the Fourier transform of $ A (t )$ in the 
distributional sense (see \cite[\S 3.2]{z12} for a brief introduction), and for $ f \in \mathscr S ( \mathbb R) $
$ A_f ( \omega ) = \hat A * f ( \omega ) $.

More precisely, using \eqref{eq:assA1}
$$ A ( t )  \in L^\infty ( \mathbb R_t ; \mathcal L (  \mathcal H, \mathscr D^{-k } )) \subset 
\mathscr S ' (  \mathbb R ; \mathcal L (  \mathcal H , \mathscr D^{-k} ) ), $$
and, in the sense of the distributions 
$\hat A (g ) := A ( \hat g )$, $ \hat A (\omega ) \in \mathscr S ' (  \mathbb R_\omega
 ; \mathcal L (  \mathcal H, \mathscr D^{-k } ) ) $. Here $ A( \hat g ) $ and $ \hat  A( g ) $
 denotes distributional pairing, informally written as
 \begin{equation}
 \label{eq:defft} A (\hat g ) = \int_{\mathbb R } A ( t ) \hat g ( t ) dt = \int_{\mathbb R } \hat A ( \omega ) g ( \omega )  d\omega = \hat A ( g ) , \ \ \ 
\hat g ( t) := \frac{1}{ \sqrt{2 \pi} } \int g (\omega ) e^{-i t \omega } d \omega . \end{equation}
We also have the corresponding statements for $ A^* $. 

The key fact now is the distributional analogue of \eqref{eq:eH2eH}:

\begin{lemm}
\label{l:exp}
Suppose that $ g \in \mathscr S ( \mathbb R $), $ \hat g  $ is holomorphic in $ \mathbb R + i (-1-\varepsilon, \varepsilon ) $, $ \varepsilon > 0 $ and 
\begin{equation}
\label{eq:expf}  | \hat g  ( t - i \mu ) | \leq C \langle t \rangle^{-2}, \ \   \mu \in [0, 1 ] . \end{equation}
Then 
\begin{equation}
\label{eq:ePAg}  e^{ - P } \hat A ( g ) = \hat A ( e^{-\bullet} g (\bullet ) ) e^{-P} \in \mathcal L (\mathcal H) 
\end{equation}
\end{lemm}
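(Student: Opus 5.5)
We unfold the definition \eqref{eq:defft}: $\hat A(g)=A(\hat g)=\int_{\mathbb R} e^{itP}Ae^{-itP}\hat g(t)\,dt$, which converges as a Bochner integral in $\mathcal L(\mathcal H,\mathscr D^{-k})$ because $\hat g\in L^1$ by \eqref{eq:expf} with $\mu=0$. The aim is to move $e^{-P}$ through $A(t)=e^{itP}Ae^{-itP}$ by a contour shift in $t$. The formal identity is $e^{-P}A(t)=A(t+i)e^{-P}$. Since $\widehat{e^{-\bullet}g}(t)=\hat g(t-i)$, the right-hand side of \eqref{eq:ePAg} equals $\int A(t)\hat g(t-i)\,dt\, e^{-P}$. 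After the substitution $t\mapsto t+i$, this matches $e^{-P}\int A(t)\hat g(t)\,dt$.

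To make this rigorous we pair both sides with vectors $u,v$ chosen from a dense set where everything is analytic. Concretely, take $u,v\in \bigcup_{E}\indic_{[1,E]}(P)\mathcal H$, the spectral subspaces on which $P$ is bounded. For such $u,v$, and for $z=t-i\mu$ with $\mu\in[0,1]$, set
\[ \Phi(z):=\langle A e^{-izP}e^{-(1-\mu)P}\,\cdot\,\rangle \]
More cleanly, define the entire function
\[ h(z):=\langle A\, e^{-izP} e^{-P}u,\; e^{-i\bar z P}v\rangle . \]
This expression is well defined and entire because $e^{-izP}$ is bounded on spectrally bounded vectors, and $Ae^{-izP}e^{-P}u$ is meaningful because $e^{-izP}e^{-P}u\in\mathscr D^k$. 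For real $t$, $h(t)=\langle A(t)e^{-P}u,v\rangle$. At $z=t-i$ we get $h(t-i)=\langle A e^{-itP}u, e^{-itP}e^{-P}v\rangle=\langle e^{-P}A(t)u,v\rangle$.

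The key point is a uniform bound: for $\mu\in[0,1]$, $|h(t-i\mu)|\le \|A\|_{\mathscr D^k\to\mathcal H}\|P^ke^{-(1-\mu)P}u\|\,\|e^{-\mu P}v\|$. This holds for spectrally localized $u,v$, and a cruder $\mathcal H$-norm bound holds in the general case, for which we use the bound $P^ke^{-(1-\mu)P}$ with care at $\mu=1$. Combined with \eqref{eq:expf}, Cauchy's theorem applied on the strip $\{-1\le \Im z\le 0\}$ is justified, since the horizontal sides decay like $\langle t\rangle^{-2}$. It gives
\[ \int h(t)\,\hat g(t-i)\,dt=\int h(t-i)\,\hat g(t-2i)\,dt \]
— no, the variables must be aligned: we apply Cauchy to $z\mapsto h(z)\hat g(z-i)$ and shift from $\Im z=0$ to $\Im z=-1$. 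This requires $\hat g$ to be holomorphic on $\Im \in(-2,-1)$, which we do not have. So instead we shift $z\mapsto h(z)\hat g(z)$, which needs $\hat g$ holomorphic exactly on $\Im z\in[-1,0]$, as assumed. This gives $\int h(t)\hat g(t)\,dt=\int h(t-i)\hat g(t-i)\,dt$, that is,
\[ \langle A(\hat g)e^{-P}u,v\rangle=\langle e^{-P}A(\widehat{e^{-\bullet}g})u,v\rangle . \]
This is \eqref{eq:ePAg} with the exponential weight on the other side. Applying the same argument with the roles of $h$ and the shift direction reversed (shifting upward with $e^{+\bullet}$, or equivalently replacing $g$ by $e^{-\bullet}g$ and using the symmetric version) yields the stated order $e^{-P}\hat A(g)=\hat A(e^{-\bullet}g)e^{-P}$. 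We will set up $h$ with $e^{-P}$ on the $v$ side so that the orientation matches exactly.

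The main obstacle is the functional-analytic bookkeeping rather than the contour shift itself. Three points need care. First, the bound on $h$ must be uniform in $t$ on the closed strip, including the endpoint where the full factor $e^{-P}$ sits on one side and $A$ needs $\mathscr D^k$-regularity; $P^k e^{-P}$ is bounded, so this is fine. Second, we must extend from spectrally bounded $u,v$ to all of $\mathcal H$. For this we show that both sides are bounded operators on $\mathcal H$: the side containing $e^{-P}$ next to $A$ satisfies $\|Ae^{-P}\|\le\|A\|_{\mathscr D^k\to\mathcal H}\|P^ke^{-P}\|$, and we argue by duality with $A^*$ for the other side, so that density gives equality in $\mathcal L(\mathcal H)$. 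Third, we must confirm that $\hat g(\cdot-i\mu)\in L^1$ uniformly in $\mu$, which is exactly what \eqref{eq:expf} supplies.
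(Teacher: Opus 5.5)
Your strategy (pair against spectrally bounded $u,v$, shift a contour across a strip of width one, then extend by density) is sound and is a weak-form version of the paper's contour deformation. However, a sign error breaks the argument as written. Since $-i(t\mp i)=-it\mp 1$, your function $h(z)=\langle Ae^{-izP}e^{-P}u,e^{-i\bar zP}v\rangle$ satisfies
\[ h(t-i)=\langle Ae^{-itP}e^{-2P}u,\,e^{-itP}e^{P}v\rangle=\langle e^{P}A(t)e^{-2P}u,v\rangle ,\qquad h(t+i)=\langle e^{-P}A(t)u,v\rangle . \]
The value you assign to $h(t-i)$ is really the value at $t+i$. Likewise, your bound $\|A\|_{\mathscr D^k\to\mathcal H}\|P^ke^{-(1-\mu)P}u\|\,\|e^{-\mu P}v\|$ holds at $t+i\mu$, i.e.\ on the strip $0\le\Im z\le 1$; on your strip the factor is $e^{\mu P}v$.

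Because of this you discarded the correct shift and arrived at $\hat A(g)e^{-P}=e^{-P}\hat A(e^{-\bullet}g)$, which is false. In the matrix setting of \S\ref{s:localmat}, $\hat A(g)=\sqrt{2\pi}\sum_\nu g(\nu)A_\nu$ and $e^{-P}A_\nu=e^{-\nu}A_\nu e^{-P}$, so that identity would force $(1-e^{-2\nu})g(\nu)A_\nu=0$ for every $\nu$. It is not \eqref{eq:ePAg} ``with the weight on the other side'' (that would read $\hat A(g)e^{-P}=e^{-P}\hat A(e^{\bullet}g)$). Substituting $e^{-\bullet}g$ for $g$ in it cannot yield the target, so your final step is asserted rather than proved.

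The repair is immediate. Using $h(t+i)=\langle e^{-P}A(t)u,v\rangle$, apply Cauchy's theorem to $z\mapsto h(z)\hat g(z-i)$ on $0\le\Im z\le 1$. This needs holomorphy of $\hat g$ exactly on $-1\le\Im w\le 0$, as assumed, and \eqref{eq:expf} disposes of the vertical sides. The result $\int h(t)\hat g(t-i)\,dt=\int h(t+i)\hat g(t)\,dt$ reads $\langle\hat A(e^{-\bullet}g)e^{-P}u,v\rangle=\langle e^{-P}\hat A(g)u,v\rangle$, which is \eqref{eq:ePAg} on a dense set.

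This is the paper's proof. It writes $e^{-P}\hat A(g)$ as the integral of $e^{izP}Ae^{-i(z-i)P}\hat g(z-i)$ over $\mathbb R+i$ and deforms to $\mathbb R$, where the integrand becomes $A(t)e^{-P}\hat g(t-i)$; this uses that $e^{i\zeta P}$ is bounded on $\mathscr D^s$ for $\Im\zeta\ge0$. Your last suggestion, putting $e^{-P}$ on the $v$ side and shifting downward, also works, but only because $\langle Ae^{-izP}u,e^{-i\bar zP}e^{-P}v\rangle=h(z+i)$: it is the same upward shift in disguise. Your density and boundedness bookkeeping for passing to $\mathcal L(\mathcal H)$ is fine.
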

\begin{proof}
We proceed using the definition \eqref{eq:defft} and contour deformation, justified by 
the holomorphy of $ \hat g $ and the fact that $ e^{ i \zeta P } : \mathscr D^s \to 
\mathscr D^s $ for $ \Im \zeta \geq 0 $ (recall that $ P \geq 1 $): 
\[  e^{ - P }  \hat A  (g )  = e^{-P} \int e^{i t P } A e^{-i t P} \hat g ( t ) dt = 
\int_{\Gamma_R} e^{ i z P}  A e^{ -i ( z - i ) P } \hat g ( z -i) d z , \]
and the contour is given by $ \Gamma_R = \Gamma_R^+  + \Gamma_R^- +\gamma_R^+ +  \gamma_R^-  + I_R $, 
where $ I_R = [-R,R] $ with the positive orientation, and 
$\Gamma_R^\pm =  \pm [ R, \infty )  + i  $, $  \gamma_R^\pm = \pm R + i [ 0, 1 ] $, 
with matching orientations. Condition \eqref{eq:expf} then shows that
\[   \int_{\Gamma_R^\pm }  e^{ i z P}  A e^{ -i ( z - i ) P } \hat g ( z -i) d z , 
\int_{\gamma_R^\pm }  e^{ i z P}  A e^{ -i ( z - i ) P } \hat g ( z -i) d z  \to 0 , \ \ \ R \to \infty , \]
implying that
\[  e^{ - P }  \hat A  (g )  = \int_{\mathbb R} e^{ i t P} A e^{ - i t P } \hat g ( t - i ) dt e^{-P } . \]
Since $ \hat g ( \bullet - i ) = \widehat{ e^{ -\bullet } g ( \bullet ) } $, this gives \eqref{eq:ePAg}.
\end{proof}
With the lemma in place, the distributional point of view allows us to carry the calculations in 
\S \ref{s:localmat} in the case of $ f $ given by the Gaussians  \eqref{eq:defsigma}.
\end{proof}

To prove Theorem \ref{t:1} via Proposition \ref{p:L2D} below we start with some preliminaries about mapping properties of $ A $ and $ A^* $.
The assumption \eqref{e:assumeA} shows that the following operators are continuous:
\begin{equation}
\label{eq:mappA}     A :  \mathscr{D}^{\frac12} \to \mathcal{H}, \ \ \ A^* : \mathcal H\to\mathscr{D}^{-\frac12}.
\end{equation}
We also note that the assumption that $ [ P, A ] P^{-\frac12} : \mathcal H \to \mathcal H $
gives 
\begin{equation}
\label{eq:AD1}  \begin{split}  \| A \|_{ \mathscr D^{-\frac12}  \to \mathscr D^{-1} }&=\| A ^*\|_{ \mathscr D^1  \to \mathscr D^{\frac12} } \\
& = \| 
P^{\frac12} A^* P^{-1} \|_{ \mathcal H \to \mathcal H } \leq \| P^{-\frac12 } A^* \|_{ \mathcal H \to \mathcal H }
+ \| P^{\frac12} [ P^{-1} , A^* ]  \|_{ \mathcal H \to \mathcal H } \\
&=\| P^{-\frac12 } A^* \|_{ \mathcal H \to \mathcal H }
+ \| P^{-\frac12} [ P , A^* ] P^{-1} \|_{ \mathcal H \to \mathcal H }\\
&  \leq  \| AP^{-\frac12}  \|_{ \mathcal H \to \mathcal H } +  \| [ P , A ]P^{-\frac12}   \|_{ \mathcal H \to \mathcal H }.
\end{split} 
\end{equation}
since $ \| P^{-1} \|_{ \mathcal H \to \mathcal H } \leq 1$. Complex interpolation then implies that 
$$
\|A\|_{\mathscr{H}\to \mathscr{D}^{-\frac12}}\leq \| A \|^{1/2}_{ \mathscr D^{-1/2}  \to \mathscr D^{-1} }\| A \|_{ \mathscr D^{1/2}  \to \mathcal{H} }^{1/2}\leq C\|AP^{-\frac 12}\|_{\mathcal\to \mathcal{H}}+  \| [ P , A ]P^{-\frac12}   \|_{ \mathcal H \to \mathcal H }.
$$
We also have
\begin{equation}
\label{eq:AD2}  \begin{split}  \| A \|_{ \mathscr D^{\frac32}  \to \mathscr D^{1} } 
& = \| 
P AP^{-\frac32} \|_{ \mathcal H \to \mathcal H } \leq \| A P^{-\frac12}\|_{ \mathcal H \to \mathcal H }
+ \| [ P , A]P^{-\frac12}  \|_{ \mathcal H \to \mathcal H } 
\end{split} 
\end{equation}
and hence, using interpolation again $ A:\mathscr{D}\to \mathcal{H}$. 
Consequently, we also have the following uniform continuity statement: 
\begin{equation}
\label{eq:mappA1}    
e^{  i tP } A^* e^{i ( s -t  ) P } A e^{ - i s P } : \mathcal H \to \mathscr D^{-1} , \ \ \ 
e^{  i t P } A^* e^{i ( s - t ) P } A e^{ - i sP } : \mathscr D^1  \to \mathcal H  . 
\end{equation}

To prove Theorem \ref{t:1} we recall \eqref{eq:Daviesc}:
\begin{equation}
\label{eq:Daviesc1}
Y:=i( \beta^{-1} P+B)-\tfrac{1}{2}\sum_{A\in\mathcal{A}}\int \gamma(\omega)\hat{A}_f(\omega)^*\hat{A}_f(\omega)d\omega 
\end{equation} 
where $B$ is defined in~\eqref{eq:defB0}. In view of the results of \cite{dav}, Theorem \ref{t:1} follows from
\begin{prop}
\label{p:L2D}
Suppose that $ \gamma, f , b_j \in L^1 ( \mathbb R ) $, $ j =1,2 $, $ \gamma \geq 0 $, and that for some $ \delta > 0 $, 
\begin{equation}
\label{e:assumeAd} \sum_{A\in\mathcal{A}} \big(\|AP^{-\frac12}\|^2_{\mathcal{H}\to \mathcal{H}}+\|[P,A]P^{-\frac12}\|_{\mathcal{H}\to \mathcal{H}}^2+\|[P,[P,A]]P^{-1+\delta}\|_{\mathcal{H}\to \mathcal{H}}^2\big)<\infty,
\end{equation} 
Then $Y$ in \eqref{eq:Daviesc1} is the generator of a contraction semigroup on $ \mathcal H$.
\end{prop}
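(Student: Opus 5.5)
The approach is perturbation of the skew-adjoint generator $i\beta^{-1}P$. We write
\[ Y = i\beta^{-1}P + Z, \qquad Z := iB - \tfrac12 K, \qquad K := \sum_{A\in\mathcal A}\int_{\mathbb R}\gamma(\omega)\,\hat A_f(\omega)^*\hat A_f(\omega)\,d\omega . \]
The operator $iP$ generates a unitary group. We then show that $Z$ is relatively bounded with respect to $P$ with relative bound zero, and that $\Re\langle Yu,u\rangle \le 0$ on $\mathscr D^1$. Since $\langle i\beta^{-1}Pu,u\rangle$ is purely imaginary, $B$ is symmetric by \eqref{eq:condb} (or directly, as $b_j$ are real), and $K\ge 0$, we get $\Re\langle Yu,u\rangle = -\tfrac12\langle Ku,u\rangle\le 0$. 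Thus $Y$ is dissipative on $\mathscr D^1$. Its adjoint $Y^* = -i\beta^{-1}P - iB - \tfrac12 K$ is dissipative by the same computation. A relatively bounded perturbation with bound $<1$ keeps $Y$ closed on $\mathscr D^1$ (Kato). The Lumer--Phillips theorem, in the form ``closed, densely defined, $Y$ and $Y^*$ dissipative'', then gives that $Y$ generates a contraction semigroup.

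The key step is the relative bound, and it is where the double commutator in \eqref{e:assumeAd} enters. From \eqref{eq:mappA1} we already know $Z:\mathscr D^1\to\mathcal H$ is bounded: each integrand in $B$ and in $K$ maps $\mathscr D^1\to \mathcal H$ uniformly in $s,t$, and $f,\gamma,b_j\in L^1$ make the integrals converge. By itself this only gives $\|Zu\|\le C\|Pu\|$ with $C$ not small. To improve it we aim for
\[ \|Zu\| \le C\|P^{1-\delta}u\| + C\|u\|, \]
and $\|P^{1-\delta}u\|\le \epsilon\|Pu\|+C_\epsilon\|u\|$ then gives relative bound zero. The typical integrand is $e^{itP}A^*e^{i(s-t)P}Ae^{-isP}$, and we handle it as follows.
\begin{itemize}
\item Write $A^*\,e^{irP}\,A = A^*A\,e^{irP} + A^*[e^{irP},A]$ and use
\[ [e^{irP},A] = i\int_0^r e^{iuP}[P,A]e^{i(r-u)P}\,du . \]
This is linear in $r$, so we need $tf(t)$, $sb_2(s)$ and similar moments. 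If those fail to be $L^1$, we interpolate instead, using only a $\delta$-fraction of the commutator.
\item The term $A^*A$ is controlled by $\|AP^{-1/2}\|^2$, giving $\|A^*Au\|\lesssim\|Pu\|$. This is still not small, so it must be handled by symmetrisation rather than by a bound.
\item Finer bookkeeping is therefore needed. Compare $P^{1/2}A^*AP^{-1/2}$-type operators, and use the second commutator, with $[P,[P,A]]P^{-1+\delta}$ bounded, to move a factor $P^{\delta/2}$ of the weight through $A$.
\end{itemize}

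If sharp smallness proves elusive, an alternative for $K$ is the following. $K\ge0$ is a nonnegative form bounded by $C\|P^{1/2}u\|^2$, so $-\tfrac12 K$ is form-bounded with respect to $P$ with relative bound zero in the form sense against the $\mathcal H$-norm. Accordingly, we would construct $Y$ via the form $q(u)=\tfrac12\langle Ku,u\rangle - i\langle(\beta^{-1}P+B)u,u\rangle$ on $\mathscr D^{1/2}$, which is sectorial-like. We then check that the operator associated with this form generates a contraction semigroup, with the commutator assumptions used to identify its domain with $\mathscr D^1$ where $Y$ is defined.

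The main obstacle is exactly this domain issue: showing that the perturbation does not destroy closedness and maximality. We expect the term $\|[P,[P,A]]P^{-1+\delta}\|$ to be what yields the gain $P^{-\delta}$ that makes the perturbation infinitesimally small.
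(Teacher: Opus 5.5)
Your dissipativity computation on $\mathscr D^1$ matches the paper's. However, the step you call key is false: the claim that $Z=iB-\tfrac12K$ is $P$-bounded with relative bound zero, via $\|Zu\|\le C\|P^{1-\delta}u\|+C\|u\|$. No use of the double commutator can rescue it. The bound $\|AP^{-1/2}\|<\infty$ makes $A^*A$ genuinely of the same order as $P$, and \eqref{e:assumeAd} is a regularity condition, not a smallness condition.

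For a counterexample, take $P=D_x^2+x^2$, $\mathcal A=\{a\}$ with $a=x+iD_x$, and $b_1=b_2=0$. Then $a^*a=P-1$, $[P,a]=-2a$ and $[P,[P,a]]=4a$, so \eqref{e:assumeAd} holds with $\delta=\tfrac12$. But $\hat a_f(\omega)=\hat f(\omega+2)\,a$, hence $Z=-\tfrac12K=-\tfrac c2(P-1)$ with $c=\int\gamma(\omega)|\hat f(\omega+2)|^2\,d\omega$, which can be as large as you like. Its $P$-bound is $c/2$, and its bound relative to $i\beta^{-1}P$ is $c\beta/2$. This exceeds $1$ for large $c$, so Kato's stability of closedness is unavailable.

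The form-method fallback fails for a related reason. The form $q$ is not sectorial, because $\tfrac12\langle Ku,u\rangle$ need not control $\langle(\beta^{-1}P+B)u,u\rangle$ (take $\gamma\equiv 0$). Moreover, $\langle Ku,u\rangle\le C\|P^{1/2}u\|^2$ is a form bound with constant $C$, not $0$. Finally, ``$Y^*$ is dissipative by the same computation'' only gives dissipativity on $\mathscr D^1$. Lumer--Phillips needs it on the actual domain of $Y^*$, which you have not identified.

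The missing idea is to stop trying to identify the domain with $\mathscr D^1$. The paper uses the continuity $Y:\mathcal H\to\mathscr D^{-1}$ to define $Y$ on its maximal domain $\{u\in\mathcal H: Yu\in\mathcal H\}$. It then proves that $\mathscr D^1$ is a core, using a Friedrichs-type regularisation $u_\epsilon=\chi_0(\epsilon P)u$.

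Since $\chi_0(\epsilon P)$ commutes with $P$, only $[\chi_0(\epsilon P),R]$ matters. Lemma \ref{l:approximate}, via Helffer--Sj\"ostrand, writes it as $A_\epsilon(1-\chi_1(\epsilon P))+\epsilon^{2\delta}B_\epsilon$ with $A_\epsilon,B_\epsilon$ uniformly bounded. The single commutator bound controls $A_\epsilon$. The double commutator $[P,[P,A]]P^{-1+\delta}$ enters exactly here: it produces the factor $\epsilon^{2\delta}$ in $[\chi_0(\epsilon P),R]\chi_1(\epsilon P)$, using $\supp(1-\chi_0)\cap\supp\chi_1=\emptyset$.

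Hence $Yu_\epsilon\to Yu$ in $\mathcal H$ for every $u$ in the maximal domain. This gives closedness, and it extends $\Re\langle Yu,u\rangle\le 0$ from $\mathscr D^1$ to the maximal domains of both $Y$ and $Y^*$. Surjectivity of $Y-\lambda$ then follows: any $v\perp(Y-\lambda)\mathscr D^1$ solves $(Y^*-\lambda)v=0$ in $\mathscr D^{-1}$, so $v$ lies in the maximal domain of $Y^*$ and therefore $v=0$.

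Your intuition that the double commutator buys a $\delta$-gain is right. But the gain is in commutators with spectral cut-offs, not in the size of $Z$ relative to $P$.
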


\begin{proof}[Proof of Proposition \ref{p:L2D}]
As in the proof of Lemma \ref{l:B2B}, we can write $ B $ and the last term on the right hand side of \eqref{eq:Daviesc1} as 
\begin{equation}
\label{eq:defR}  R := \int_{\mathbb R^2 } e^{i t P } A^* e^{ i ( s - t ) P } A e^{ - i s P } F ( t, s ) dt ds , \end{equation}
with
\[  F ( t, s ) = \tfrac12 b_1 ( - \tfrac12 ( t + s ) ) b_2 ( \tfrac12 ( t - s ) ) \in L^1 ( \mathbb R^2 ) , \]
and
\[ F ( t , s ) = \int_{\mathbb R } \gamma ( \omega ) e^{ i ( t - s ) \omega } f ( t ) f ( s ) d \omega \in L^1 ( \mathbb R^2 ) , \]
respectively (we assumed $ \gamma , f, b_j \in L^1 $). From \eqref{eq:mappA1} we see that
$ R : \mathcal H \to \mathscr D^{-1} $ is continuous and hence so is $ Y$. 

To apply the Hille-Yosida theorem (see \cite[Chapter 9, \S 1.2]{kato}), we need to show that $Y$ with the domain
$$
\mathcal{D}(Y):=\{ u\in \mathcal H \,:\, Yu\in \mathcal H \}. 
$$
(in view of the discussion above, we interpret $ Y u \in \mathscr D^{-1} \supset \mathcal H $) 
is closed and prove the estimate
\begin{equation}
\label{e:Yoside}
\|(Y-\lambda)^{-1}\|_{L^2\to L^2}\leq \lambda^{-1},\qquad \lambda>0.
\end{equation}

To show that $Y$ is closed we first consider $Y_0=Y|_{\mathscr D^1} $ and show that $\overline{Y}_0=Y$. For that, suppose $u_n\in \mathscr D^1 $ with 
$ u_n\overset{\mathcal H}{\to}u$, $ Y_0 u_n\overset{\mathcal H}{\to}w $. 
Since this implies that $Y_0u_n\overset{\mathscr D^{-1} }{\to} w$, and $Y$ is continuous from $ \mathcal H \to \mathscr D^{-1} $, we have $w=Yu$ (where the right hand side is understood to be in $ \mathscr D^{-1} $). In particular, $\overline{Y}_0\subset Y$. 

To show equality, let $u\in\mathcal{D}(Y)$. We will show that  
\begin{equation}
\label{e:useful}
\forall\,u\in\mathcal{D}(Y)\,\exists\, u_\epsilon\in \mathscr{D}^1  \qquad u_\epsilon\overset{\mathcal H}{\longrightarrow  }u,\quad Y_0u_\epsilon \overset{\mathcal H }{\longrightarrow }Yu,\quad \epsilon \to 0. 
\end{equation}
In other words, $ Y \subset \overline Y_0 $, and hence $ Y = \overline Y_0 $ is closed.

To prove this we need the following Lemma:
\begin{lemm}
\label{l:approximate}
Suppose that $R:\mathscr{S}\to \mathscr{S}$ satisfies
\begin{equation}
\label{eq:assR} 
\|P^{-\frac12} [P,R]P^{-\frac12}\|_{\mathcal H\to \mathcal H}+\| P^{-1+\delta} [P,[P,R]]P^{-1+\delta}\|_{\mathcal H\to \mathcal H}<\infty,
\end{equation}
for some $ \delta > 0 $. 
Then, for all $\chi_0,{\chi}_1 \in C_c^\infty(\mathbb{R})$ with ${\chi}_1 |_{[-1,1]}\equiv 1$  and $\supp (1-\chi_0)\cap \supp {\chi}_1 =\emptyset$, there is $C>0$ such that for $0<\epsilon<1$ there $A_\epsilon$ and $B_\epsilon$ such that
\begin{equation}
\label{e:approximator}
[\chi_0(\epsilon P),R]=A_\epsilon(1-{\chi}_1(\epsilon P))+\epsilon^{2\delta} B_\epsilon,\qquad 
\|B_\epsilon\|_{\mathcal H\to \mathcal H}+\|A_\epsilon\|_{\mathcal H\to \mathcal H}\leq C.
\end{equation}
\end{lemm}
\begin{proof}
We use the Helffer--Sj\"ostrand formula (see \cite[Theorem 8.1]{DiSj}):
\begin{equation}
\label{eq:HS} \chi_j ( \varepsilon P ) = \pi^{-1} \int_{\mathbb C } \bar \partial_z \tilde \chi_j ( z ) 
( \varepsilon P - z )^{-1} dm ( z ) , \end{equation}
where $ dm ( z ) $ is the Lebesgue measure on $ \mathbb C \simeq \mathbb R^2 $ and $ \tilde \chi_j \in C^\infty_{\rm{c}} ( \mathbb C ) $ satisfies $ \tilde \chi_j |_{\mathbb R } = \chi_j$, 
$ \bar \partial_z \tilde \chi_h ( z ) = \mathcal ( |\Im z |^\infty ) $. (That means that $ \tilde \chi_j $ is an 
{\em almost analytic} extension of $ \chi_j $.) We can assume that $ \widetilde \chi_0 = 1 $ on the support of $ \widetilde \chi_1 $. (See the construction in the beginning of \cite[Chapter 8]{DiSj}.) We then have 
the decomposition in \eqref{e:approximator} with 
\[ \begin{split}  A_\varepsilon & = [ \chi_0 ( \varepsilon P ) , R ] = 
 \pi^{-1} \varepsilon \int_{\mathbb C } \bar \partial_z \tilde \chi_0 ( z ) ( \varepsilon P - z )^{-1} [ R, P ]
 ( \varepsilon P - z )^{-1} dm ( z ) ,\\
 B_\varepsilon & =  \varepsilon^{-2\delta}  [ \chi_0 ( \varepsilon P ) , R ] \chi_1 ( \varepsilon P ) .
 \end{split} \]
To estimate $ \| A_\varepsilon \| $ we use the spectral theorem to see that for $ z \notin \mathbb R $, 
\begin{equation}
\label{eq:Pdel}     \| ( \varepsilon P - z )^{-1} P^{1-\delta} \|_{\mathcal H\to \mathcal H} \leq 
\sup_{ x \in [1,\infty )  } \frac{ x^{1-\delta} }{ |\varepsilon x - z| } \leq \frac{ | z|^{1-\delta}  } { \varepsilon^{1-\delta}  |\Im z |} . \end{equation}
This with $ \delta = \frac12 $ and the first bound in \eqref{eq:assR} show that $ \|A_\varepsilon \|_{\mathcal H \to \mathcal H } $ is uniformly bounded. 

To estimate the norm of $ B_\varepsilon $,
let $\chi_m\in C_{\rm{c}}^\infty( \mathbb R ) $ satisfy
\[ \supp (1-\chi_m)\cap \supp \chi_1=\emptyset, \ \ \ \supp\chi_m\cap \supp (1-\chi_0)=\emptyset. \] Then,
\begin{align*}
 B_\varepsilon & =  \varepsilon^{-2\delta}  [ \chi_0 ( \varepsilon P ) , R ] \chi_1 ( \varepsilon P ) 
= -  \varepsilon^{-2\delta}  (1- \chi_0 ( \varepsilon P )) R \chi_m(\varepsilon P) \chi_1 ( \varepsilon P ) 
  \\ &
  = -  \varepsilon^{-2\delta}  (1- \chi_0 ( \varepsilon P ) )[[R, \chi_m(\varepsilon P)], \chi_1 ( \varepsilon P )].
\end{align*}
Hence, using the Helffer Sj\"ostrand formula,
\begin{gather*}
B_\varepsilon= \pi^{-2}\varepsilon^{2-2\delta}(1-\chi_0(\varepsilon P))  \int_{\mathbb C^2 }  \bar \partial_z  \tilde{\chi} ( z )  \bar \partial_w \tilde \chi_m ( w )Q_{\varepsilon}(z,w)dm(z)dm(w),\\
Q_\varepsilon(z,w):=(\varepsilon P-w)^{-1}( \varepsilon P - z )^{-1}[[ R, P ],P] ( \varepsilon P - z )^{-1}(\varepsilon P-w)^{-1}.
\end{gather*}
Using~\eqref{eq:Pdel} with $\delta=0$ to estimate $(\varepsilon P-w)^{-1}:\mathcal{H}\to \mathcal{H}$,~\eqref{eq:Pdel} as stated to estimate $P^{1-\delta}(\varepsilon P-z)^{-1}:\mathcal{H}\to \mathcal{H}$, and the second hypothesis in~\eqref{eq:assR},  we see that $\|B_\varepsilon\|_{\mathcal{H}\to \mathcal{H}}\leq C$.
\end{proof}

To apply the lemma we need the following:
\begin{lemm}
\label{l:apply}
Suppose that \eqref{e:assumeAd} holds. Then for all $F\in L^1(\mathbb{R}^2)$, 
\eqref{eq:assR} holds for operators, $R$, of the form
$$
R=\sum_{A\in\mathcal{A}}\int_{\mathbb R^2 } e^{i t P } A^* e^{ i ( s - t ) P } A e^{ - i s P } F ( t, s ) dt ds
$$\end{lemm}
\begin{proof}
Since $P^{1-\delta}P^{-1+\delta'}:\mathcal{H}\to \mathcal{H}$ is bounded when $\delta'<\delta$, we may assume without loss of generality that $\delta \leq \frac{1}{2}$. 

To prove the lemma, we compute 
$$
[P,R]=\int_{\mathbb R^2 } e^{i t P } \big([P,A^*] e^{ i ( s - t ) P } A +A^* e^{ i ( s - t ) P } [P,A]\big)e^{ - i s P } F ( t, s ) dt ds.
$$
Hence, the bounds
$$
\sum_{A\in \mathcal{A}}\|[P,A]P^{-\frac12}\|^2+\|AP^{-\frac12}\|^2<\infty
$$
together with the fact that $P$ is self-adjoint imply the first estimate in~\eqref{eq:assR}.

Next, 
\begin{gather*}
[P,[P,R]]=\int_{\mathbb R^2 } e^{i t P } Q(t,s)e^{ - i s P } F ( t, s ) dt ds\\
Q(t,s):= [P,[P,A^*]] e^{ i ( s - t ) P } A +2[P,A^*]e^{i(s-t)P}[P,A]+A^* e^{ i ( s - t ) P } [P,[P,A]].
\end{gather*}
Hence, the bounds
$$
\sum_{A\in \mathcal{A}}\|[P,A]P^{-\frac12}\|^2+\|AP^{-\frac12}\|^2+\|[P,[P,A]]P^{-1+\delta}\|^2<\infty
$$
imply the second estimate in~\eqref{eq:assR}.
\end{proof}

By Lemma~\ref{l:apply}, \eqref{eq:assR} holds for $ Y$. Using the notation of Lemma~\ref{l:approximate}, 
we put $u_\epsilon=\chi_0 (\varepsilon P)u$.  Then, using \eqref{e:approximator} and
the assumptions in \eqref{e:useful}, 
\[   \begin{split} Y u_\varepsilon &  =  \chi_0 ( \varepsilon P ) Y u + [ Y , \chi_0 ( \varepsilon P ) ] u 
\\ & = Yu - ( 1 - \chi_0 ( \varepsilon P ) ) Y u + A_\varepsilon ( 1 - \chi_1 ( \varepsilon P ) ) u + 
\varepsilon^{2 \delta } B_\varepsilon  u \\
&   = Y u +  o(1)_{\mathcal H} + \mathcal O( \varepsilon^{2 \delta } ) \overset{\mathcal H}{\longrightarrow }0 , \ \ 
\varepsilon \to Yu.
\end{split}  \] 
(For any $ v \in \mathcal H $,  $ ( 1 - \chi_j ( \varepsilon P ) ) v \overset{\mathcal H}{\to }0 $ as 
$ \varepsilon \to 0 $.) This proves \eqref{e:useful} and completes the argument for the closedness of 
$ Y$. 

We now need to prove~\eqref{e:Yoside}. Observe that for $u\in\mathscr{D}^1$, the estimate \eqref{eq:AD1} shows that  $ A_f ( \omega ) \in \mathscr D^{\frac12} \subset \mathcal H $ and hence, 
$$
\Real \langle (Y-\lambda)u,u\rangle =-\lambda \|u\|_{\mathcal H}^2-\tfrac{1}{2}\int \gamma(\omega)\|A_f(\omega)u\|_{\mathcal H}^2d\omega\leq -\lambda \|u\|_{\mathcal H}^2. 
$$
For  $u\in\mathcal{D}(Y)$ and with $u_{\varepsilon}$ as in~\eqref{e:useful}. Then,
$$
\Real \langle (Y-\lambda)u,u\rangle =\lim_{\varepsilon \to 0} \Real \langle (Y-\lambda)u_{\varepsilon},u_{\varepsilon}\rangle\leq -\lambda \|u_{\varepsilon}\|_{\mathcal H}^2\to -\lambda \|u\|_{\mathcal H}^2. 
$$
Hence (using the same argument for $Y^*$ with its maximal domain), we have
\begin{equation}
\label{e:apriori}
\lambda \|u\|_{\mathcal H}\leq \|(Y-\lambda)u\|_{\mathcal H}, u\in\mathcal{D}(Y),\qquad \lambda \|u\|_{\mathcal H}\leq \|(Y^*-\lambda)u\|_{\mathcal H}, u\in\mathcal{D}(Y^*),
\end{equation}
Since~\eqref{e:apriori} implies $Y-\lambda$ is injective, and provides the estimate~\eqref{e:Yoside} if the inverse exists, it remains only to show that $Y$ is surjective. 
For this, suppose that $v\in \mathcal H$ such that 
$$
\langle (Y-\lambda)u,v\rangle =0,\qquad \forall u\in\mathscr{D}^{1} \subset \mathcal D ( Y ) . 
$$
Then, in $ \mathscr D^{-1} $,  $(Y^*-\lambda )v=0$, and hence $v\in \mathcal{D}(Y^*)$ and~\eqref{e:apriori} implies that $v=0$. 

We conclude that the inverse exists and \eqref{e:apriori} shows that \eqref{e:Yoside} holds, completing the proof of the proposition.
\end{proof}

\section{Proof of Theorem \ref{t:3}}
\label{s:pseudo}

In this section we show how the proof of Proposition \ref{p:L2D} can be modified to apply
to pseudodifferential operators quantising observables satisfying \eqref{eq:asp} and \eqref{eq:asA0}.
That will prove Theorem \ref{t:3}.

We define $ m = ( 1 + |x|^2 + |\xi|^2 )^{\frac12}   $ (a slightly different convention than in 
\S \ref{s:ex1} and the Appendix) and recall the following definitions: 
\[ \Psi ( m^r ) = \{ a^{\rm{w}} ( x, D ) : a \in S ( m^r ) \} , \ \ \   \Psi_{(k)} = \{ a^{\rm{w}} ( x, D ) : a \in S_{(k)} \}, \ \ k = 1, 2 , \]
where 
\[ \begin{gathered} 
a \in  S (m^r ) \Longleftrightarrow \partial^\alpha a = \mathcal O ( m^r  ) , \ \  | \alpha | \geq 0 ,\\
a \in  S_{ ( k ) }  \Longleftrightarrow \partial^\alpha a = \mathcal O ( 1 ) , \ \  | \alpha | \geq k , \ \ k = 0, 1, 2 . \end{gathered} \]
We note that $ S_{ (k ) } \subset S ( m^k ) $. Also all results are valid with only a finite (but large
depending on the dimension) number of derivatives needed. 

We want to investigate the structure of $ \chi ( \varepsilon P ) $ for $ \chi \in C_{\rm{c}}^\infty ( \mathbb R ) $,  and
\begin{equation}
\label{eq:defP}  P = p ( x, D ) , \ \  p \in S_{(2)}   , \ \   p \geq c m^2 . 
\end{equation}
One can show that $ P $ with the domain $ H (m^2 ) $ (see \cite[\S 8.3]{z12} for definitions; in this case it is particularly simple) is self-adjoint (see~\cite[Proposition A.2]{gaz5}). We assume in addition that $ P \geq 1 $ which can always be achieved by adding a constant to $ P $. We have mapping properties
\begin{equation}
\label{eq:Psr}    P^s : H ( m^r ) \to H( m^{r-2s} ), \ \  A : H (m^r ) \to H( m^{r-k} ),  \  A \in \Psi( m^k ) , \ \ \ 
  r, s \in \mathbb R . 
\end{equation}
The assumptions \eqref{eq:defP} and the fact that $ P $ is invertible implies that $ P^{-1}  \in \Psi (m^{-2} ) $ (see the Appendix). 

To follow the same strategy as in \S \ref{s:gen} it suffices the prove the following analogue of 
Lemmas \ref{l:approximate} and \ref{l:apply}:
\begin{prop}
\label{p:R1R2}
Suppose that $R_1 , R_2 \in \Psi_{(1)}$, $t\in\mathbb{R}$. Then, for 
\[ \chi,\chi_0\in C_c^\infty(\mathbb{R}), \ \ \ \supp \chi_0\cap \supp(1-\chi)=\emptyset, \]
\begin{equation}
\label{e:approximator1}
[\chi(\epsilon P), R_1 e^{itP}R_2 ]=A_\epsilon(1-{\chi}_0(\epsilon P))+\epsilon  B_\epsilon,\ \ \ 
\|B_\epsilon\|_{L^2\to L^2}+\|A_\epsilon\|_{L^2 \to L^2 }\leq C, 
\end{equation}
where $ C $ is independent of $ \varepsilon $. 
\end{prop}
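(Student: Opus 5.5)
The plan is to reduce the proposition to the abstract argument of Lemma~\ref{l:approximate} with $\delta = \tfrac12$, applied to a single pseudodifferential operator of order two in place of $R$. The decisive step is to move $e^{itP}$ to the right using an Egorov-type statement. Since $\chi(\varepsilon P)$ commutes with $e^{itP}$, we write
\[ R_1 e^{itP} R_2 = S_t\, e^{itP}, \qquad S_t := R_1 R_2(t), \qquad R_2(t) := e^{itP} R_2 e^{-itP}, \]
so that $[\chi(\varepsilon P), R_1 e^{itP} R_2] = [\chi(\varepsilon P), S_t]\, e^{itP}$. Because $e^{itP}$ is unitary and commutes with $\chi_0(\varepsilon P)$, it suffices to prove \eqref{e:approximator1} with $S_t$ in place of $R_1 e^{itP} R_2$, for each fixed $t$.

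For that I would invoke the all-time Egorov theorem announced in the introduction (Theorem~\ref{t:egorov}): for $p \in S_{(2)}$ and $r_2 \in S_{(1)}$ it gives $R_2(t) \in \Psi_{(1)}$ for every $t$. For this proposition, with $t$ fixed, even a finite-time Egorov statement suffices, since $C$ may depend on $t$. Composition in the Weyl calculus then gives $S_t \in \Psi_{(2)}$. The point is that if $\partial^\alpha r_j = \mathcal O(1)$ for $|\alpha|\ge 1$, then $\partial^\alpha(r_1\# r_2) = \mathcal O(1)$ for $|\alpha|\ge 2$, and the lower-order terms of the Moyal product are bounded with bounded derivatives. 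Next I record symbol estimates for commutators, using only finitely many derivatives throughout:
\[ [P,S_t] \in \Psi(m^2), \qquad [P,[P,S_t]] \in \Psi(m^2). \]
The first holds because $\{p,s\} = \mathcal O(m\cdot m)$. For the second, $\partial\{p,s\} = \mathcal O(m)$ because second derivatives of $p$ and $s$ are bounded, so $\{p,\{p,s\}\} = \mathcal O(m\cdot m)$. For $p, s \in S_{(2)}$ the Weyl commutators have no further corrections beyond these orders, since the remainder terms involve third derivatives and higher, which are bounded. Combining this with $P^{-1} \in \Psi(m^{-2})$ (Appendix) and \eqref{eq:Psr} gives
\[ \|P^{-\frac12}[P,S_t]P^{-\frac12}\| + \|P^{-\frac12}[P,[P,S_t]]P^{-\frac12}\| < \infty . \]
This is exactly \eqref{eq:assR} with $\delta = \tfrac12$. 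Here $P^{-\frac12}: \mathcal H \to H(m)$ follows from $\mathscr D^{1/2} = H(m)$, by interpolation.

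With \eqref{eq:assR} in hand for $R = S_t$, I would repeat the proof of Lemma~\ref{l:approximate} verbatim, using the Helffer--Sj\"ostrand formula \eqref{eq:HS}. The term $A_\varepsilon = [\chi(\varepsilon P), S_t]$ is bounded by $\varepsilon \cdot \varepsilon^{-1/2}\cdot \varepsilon^{-1/2} = \mathcal O(1)$, via \eqref{eq:Pdel} with $\delta=\tfrac12$. The localised term is $\varepsilon^{-1}[\chi(\varepsilon P), S_t]\chi_0(\varepsilon P)$. Writing it as $-(1-\chi(\varepsilon P))[[S_t,\chi_m(\varepsilon P)],\chi_0(\varepsilon P)]$ turns it into a double commutator, which costs $\varepsilon^2 \cdot \varepsilon^{-1/2}\cdot\varepsilon^{-1/2}$. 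This gives $\|B_\varepsilon\| = \mathcal O(1)$ with the prefactor $\varepsilon^{2\delta} = \varepsilon$, as required in \eqref{e:approximator1}.

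The main obstacle is the Egorov step: showing that $e^{itP}R_2e^{-itP}$ stays in $\Psi_{(1)}$ when $p$ is only in $S_{(2)}$, so that the Hamilton flow is globally defined with derivatives bounded by $e^{C|t|}$. I expect to get this from the exact Weyl-calculus commutator identity. For $p \in S_{(2)}$, $[P, q^{\rm w}] = -i\{p,q\}^{\rm w} + E$ with $E$ given by third and higher derivatives of $p$. One then integrates the flow of the classical symbol and controls $E$ by a Duhamel/Gronwall argument in the seminorms of $S_{(1)}$. If that step proves delicate, I would fall back on avoiding Egorov altogether. Then one expands $[P,[P,R_1e^{itP}R_2]]$ directly and uses $e^{itP}: H(m^r) \to H(m^r)$, which itself needs only a commutator argument. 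The cost is that the cross term $[P,R_1]e^{itP}[P,R_2]$ must be handled more carefully, by splitting the powers $P^{\pm 1/2}$ around $e^{itP}$.
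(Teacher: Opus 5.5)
Your reduction fails at the claim $S_t=R_1\,e^{itP}R_2e^{-itP}\in\Psi_{(2)}$. Egorov does give $e^{itP}R_2e^{-itP}\in\Psi_{(1)}$, but $\Psi_{(1)}\circ\Psi_{(1)}\not\subset\Psi_{(2)}$. Already at $t=0$, the symbols $r_1=\sin x_1$ and $r_2=x_1$ have Weyl product $x_1\sin x_1$, and $\partial_{x_1}^2(x_1\sin x_1)=2\cos x_1-x_1\sin x_1$ is unbounded. In general, every Leibniz term with an undifferentiated factor is $\mathcal O(m)$, so you only get $\partial^\alpha s_t=\mathcal O(m)$ for $|\alpha|\geq 1$. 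This still gives $[P,S_t]\in\Psi(m^2)$, so your bound on $A_\varepsilon$ is fine. However, now $\partial\{p,s_t\}=\mathcal O(m^2)$, and $\{p,\{p,s_t\}\}$ is genuinely of order $m^3$. For example, take $p=|x|^2+|\xi|^2$, where commutators with $P$ are exact in the Weyl calculus; then $H_p^2(x_1\sin x_1)$ contains the term $-4\xi_1^2x_1\sin x_1$. Hence $P^{-\frac12}[P,[P,S_t]]P^{-\frac12}$ is of order $m$ and in general unbounded, and \eqref{eq:assR} holds only with $\delta\leq\frac14$. Your count for the localised term then becomes $\varepsilon^{2}\cdot\varepsilon^{-3/4}\cdot\varepsilon^{-3/4}=\varepsilon^{1/2}$. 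So Lemma~\ref{l:approximate} produces $\varepsilon^{1/2}B_\varepsilon$, not the $\varepsilon B_\varepsilon$ of \eqref{e:approximator1}. Your fallback has the same defect: $[P,[P,R_1]]e^{itP}R_2$ has total order $3$, and since $e^{itP}$ preserves every $H(m^r)$, no placement of $P^{\pm\frac12}$ around it makes $P^{-\frac12}(\cdot)P^{-\frac12}$ bounded.

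What is missing is the use of disjoint supports factor by factor, which a second commutator with $P$ cannot detect. The paper never forms $[P,[P,\cdot\,]]$ of the product. Instead it writes $[\chi(\varepsilon P),R_1e^{itP}R_2]=[\chi(\varepsilon P),R_1]P^{\frac12}e^{itP}P^{-\frac12}R_2+R_1P^{-\frac12}e^{itP}P^{\frac12}[\chi(\varepsilon P),R_2]$. Then, using Lemma~\ref{l:funct1} and the symbol calculus for a single $R\in\Psi_{(1)}$, it proves three bounds: $\|[\chi(\varepsilon P),R]\|=\mathcal O(\varepsilon^{\frac12})$, $\|[\chi(\varepsilon P),R]P^{\frac12}\|=\mathcal O(1)$, and $\|[\chi(\varepsilon P),R]\psi(\varepsilon P)\|=\mathcal O(\varepsilon^{\frac32})$ when $\chi=1$ near $\supp\psi$. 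The last holds because the $\chi'(\varepsilon p)$ term in the commutator symbol is killed by $\psi(\varepsilon p)$. Inserting $\chi_1(\varepsilon P)$ into $\varepsilon^{-1}[\chi(\varepsilon P),R_1e^{itP}R_2]\chi_0(\varepsilon P)$ then gives a bounded $B_\varepsilon$. No Egorov theorem is needed, and all constants are uniform in $t$. Yours would grow like $e^{C|t|}$ through the Egorov seminorms. That matters for the analogue of Lemma~\ref{l:apply}, where the estimate is integrated against an arbitrary $F\in L^1(\mathbb R^2)$. Even corrected, your route proves only the weaker decomposition with $\varepsilon^{\frac12}$ and $t$-dependent constants, not the proposition as stated. (Any positive power of $\varepsilon$ would suffice in the closedness argument of \S\ref{s:gen}.)
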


The proof is based on the following 
\begin{lemm}
\label{l:funct1}
Suppose that $ \chi \in C^\infty_{\rm{c}} ( \mathbb R )$ and $ P $ satisfies the assumptions above.  Then
\begin{equation}
\label{eq:funct1}     \chi ( \epsilon p^{\rm{w}} ( x, D  ) ) =  (\epsilon p)^* \chi ^{\rm{w}} ( x,  D ) + \epsilon q_{\varepsilon}^{\rm{w}} ( x,  D) , \ \ \  
q_\varepsilon \in  S ( m^{-2} ) , \end{equation} 
uniformly in $ \varepsilon$, that with $ |\partial^\alpha q_\varepsilon | \leq C_\alpha m^{-2}  $, with constants independent of $ \varepsilon $. 
\end{lemm}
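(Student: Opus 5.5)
We will use the Helffer--Sj\"ostrand formula \eqref{eq:HS}, which reduces the functional calculus to a parametrix for the resolvent $(\varepsilon P - z)^{-1}$ that is uniform in $\varepsilon$ and controlled in $|\Im z|$. Here $(\epsilon p)^*\chi = \chi\circ(\epsilon p)$ is the pullback, and $\chi(\varepsilon P) = \pi^{-1}\int \bar\partial_z\tilde\chi(z)(\varepsilon P - z)^{-1}\,dm(z)$ with $\tilde\chi$ almost analytic and compactly supported. The main task is to write
\[ (\varepsilon P - z)^{-1} = \big((\varepsilon p - z)^{-1}\big)^{\rm w}(x,D) + \varepsilon\, r_\varepsilon(z)^{\rm w}(x,D), \]
with $r_\varepsilon(z) \in S(m^{-2})$ and seminorms bounded by $C|\Im z|^{-K}$ for some fixed $K$, uniformly in $\varepsilon\in(0,1)$. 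The factor $|\Im z|^{-K}$ is harmless because $\bar\partial_z\tilde\chi = \mathcal O(|\Im z|^\infty)$. Integrating the leading term then gives exactly $(\chi\circ\varepsilon p)^{\rm w}$ by the scalar Cauchy--Pompeiu formula.

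\textbf{Step 1: symbol estimates.} Set $e_z := (\varepsilon p - z)^{-1}$ for $z\in\supp\tilde\chi$, a compact set. Since $p\in S_{(2)}$ and $p\geq cm^2$, we have $\partial^\alpha p = \mathcal O(m^{2-|\alpha|})$ for $|\alpha|\le 1$ and $\mathcal O(1)$ for $|\alpha|\ge 2$. We will prove by induction that $\partial^\alpha e_z$ is a sum of terms $e_z^{k+1}\prod_j \varepsilon\,\partial^{\beta_j}p$. The key scaling point is that each factor $\varepsilon\,\partial p\, e_z$ is bounded by $C|\Im z|^{-1}$. On $\varepsilon p\lesssim 1$ one has $\varepsilon|\partial p| \lesssim \varepsilon m \lesssim \varepsilon^{1/2}$. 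On $\varepsilon p\gtrsim 1$, $|e_z|\lesssim (\varepsilon m^2)^{-1}$, and so $\varepsilon m\,|e_z| \lesssim m^{-1}$. This shows $e_z\in S(1)$ with seminorms $\mathcal O(|\Im z|^{-K})$. We also get the gain $\varepsilon\,\partial^\alpha e_z = \mathcal O(|\Im z|^{-K} m^{-2})$ for $|\alpha|\ge 1$, using $|e_z|\lesssim |\Im z|^{-1}\min(1,(\varepsilon m^2)^{-1})$.

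\textbf{Step 2: parametrix.} By the Weyl composition formula (valid for $S(1)$, $S(m^2)$ and $S_{(2)}$ symbols with finitely many derivatives),
\[ (\varepsilon P - z)\, e_z^{\rm w} = I + \varepsilon\, s_\varepsilon(z)^{\rm w}. \]
The first-order term of the Moyal product vanishes because $\{p, e_z\} = 0$. The remainder involves $\partial^2 p \cdot \partial^2 e_z$ and higher, all of which are bounded by $|\Im z|^{-K}m^{-2}$ by Step 1. Hence $s_\varepsilon(z)\in S(m^{-2})$ uniformly. Multiplying on the left by $(\varepsilon P - z)^{-1}$ gives
\[ (\varepsilon P - z)^{-1} = e_z^{\rm w} - \varepsilon(\varepsilon P - z)^{-1}s_\varepsilon^{\rm w}. \]

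\textbf{Step 3: the main obstacle.} The difficulty is to show that $(\varepsilon P - z)^{-1}s_\varepsilon^{\rm w}$ is pseudodifferential in $\Psi(m^{-2})$ with the right bounds. We will iterate the identity from Step 2:
\[ (\varepsilon P - z)^{-1}s^{\rm w} = e_z^{\rm w}s^{\rm w} - \varepsilon(\varepsilon P - z)^{-1}(s\# s)^{\rm w}. \]
Doing this $N$ times leaves a remainder $\varepsilon^N(\varepsilon P - z)^{-1}(s^{\# N})^{\rm w}$ with $s^{\# N}\in S(m^{-2N})$. To treat it, we need mapping properties of $(\varepsilon P - z)^{-1}$ between weighted spaces $H(m^r)$, with norm $\mathcal O(|\Im z|^{-K})$. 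These follow from commutator arguments, since $[P, m^r] \in \Psi(m^{r})$, or equivalently from the two-sided parametrix. The Beals characterization then converts the remainder, which maps $H(m^{-s})\to H(m^{s})$ for large $s$, into $\Psi(m^{-2})$.

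If the iteration bookkeeping becomes awkward, the fallback is a direct route: apply Beals' criterion to $(\varepsilon P - z)^{-1}$ by computing iterated commutators with $x_j$ and $D_j$. Each commutator $[\varepsilon P, \ell^{\rm w}] = \varepsilon\,(\{p,\ell\})^{\rm w}$ lies in $\Psi(m)$ with an $\varepsilon$ factor, and this factor yields uniform boundedness. Integrating against $\bar\partial\tilde\chi$ finally gives $q_\varepsilon\in S(m^{-2})$ uniformly in $\varepsilon$.
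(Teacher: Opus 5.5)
Your Steps 1--2 are the paper's argument. The paper writes the parametrix on the other side, $[(\varepsilon p-z)^{-1}]^{\rm w}(\varepsilon P-z)=I+\varepsilon E(z)$ with $E(z)\in\Psi(m^{-2})$, but this makes no difference. Your ``fallback'' is exactly how the paper closes Step 3. It first shows by Beals's criterion that $(\varepsilon P-z)^{-1}=a_\varepsilon^{\rm w}$ with $\partial^\alpha a_\varepsilon=\mathcal O(|\operatorname{Im} z|^{-N_\alpha}\varepsilon^{\min(1,|\alpha|/2)})$. A single parametrix step then gives the correction $E(z)(\varepsilon P-z)^{-1}\in\Psi(m^{-2})\circ\Psi(1)\subset\Psi(m^{-2})$.

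The $\varepsilon$ in $[\varepsilon P,\ell^{\rm w}]$ alone does not give boundedness, since $\ad_{\ell^{\rm w}}P\in\Psi(m)$ is unbounded. The mechanism is
$\|\varepsilon(\varepsilon P-z)^{-1}\ad_{\ell^{\rm w}}P\|\le\varepsilon\|(\varepsilon P-z)^{-1}P^{1/2}\|\,\|P^{-1/2}\ad_{\ell^{\rm w}}P\|=\mathcal O(\varepsilon^{1/2}|\operatorname{Im} z|^{-1})$.
For two or more commutators, $\ad_{\ell_1^{\rm w}}\cdots\ad_{\ell_k^{\rm w}}P\in\Psi_{(0)}$ is bounded.

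Your primary route is genuinely different and valid: iterate the parametrix and treat $\varepsilon^N(\varepsilon P-z)^{-1}(s^{\#N})^{\rm w}$ as a smoothing remainder. It is also simpler than you suggest.
\begin{itemize}
\item The weighted bounds need no commutator argument. $(\varepsilon P-z)^{-1}$ commutes with $P^{r/2}$, and $\|P^{r/2}u\|\simeq\|u\|_{H(m^r)}$ by the mapping properties $P^s:H(m^r)\to H(m^{r-2s})$. Hence $\|(\varepsilon P-z)^{-1}\|_{H(m^r)\to H(m^r)}\lesssim|\operatorname{Im} z|^{-1}$ uniformly in $\varepsilon$.
\item The remainder therefore maps $H(m^{-N})\to H(m^N)$. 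A finite-order Beals argument controls any prescribed number of seminorms once $N$ is large.
\item The symbol of $\varepsilon^{-1}((\varepsilon P-z)^{-1}-e_z^{\rm w})$ does not depend on $N$, so all seminorms follow.
\end{itemize}
Your route uses only $\|(\varepsilon P-z)^{-1}\|\le|\operatorname{Im}z|^{-1}$ and the symbol calculus, with no commutator estimates on the resolvent. The paper's route is shorter (no iteration, Beals applied once) and also yields the extra structure of the full resolvent symbol quoted above.

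There is one bookkeeping correction. Step 1 as stated, $\varepsilon\partial^\alpha e_z=\mathcal O(m^{-2})$, only gives $\varepsilon s_\varepsilon\in S(m^{-2})$, i.e.\ $s_\varepsilon=\mathcal O(\varepsilon^{-1}m^{-2})$. Step 2 needs $\partial^\alpha e_z=\mathcal O(|\operatorname{Im} z|^{-K}m^{-2})$ for $|\alpha|\ge 2$ without the $\varepsilon$. Only such derivatives appear, because the first-order Moyal term vanishes. This bound does hold by your own two-region analysis:
\begin{itemize}
\item $\varepsilon|\partial p|\,|e_z|\lesssim|\operatorname{Im} z|^{-1}m^{-1}$ and $\varepsilon|e_z|\lesssim|\operatorname{Im} z|^{-1}m^{-2}$ hold everywhere (on $\{\varepsilon m^2\lesssim 1\}$ use $\varepsilon\lesssim m^{-2}$).
\item Each term $e_z^{k+1}\prod_j\varepsilon\partial^{\beta_j}p$ of $\partial^\alpha e_z$ with $|\alpha|\ge 2$ has either $k\ge 2$ or some $|\beta_j|\ge 2$.
\end{itemize}
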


\noindent
{\bf Remark.} We obtain a stronger result about $ q_\varepsilon $ but we do not stress it as 
\eqref{eq:funct1} is sufficient for our purposes.

\begin{proof}
In what follows we assume that $ |z| \leq C $. 
We first show that for for all $M\geq 0$ there is $N>0$ such that for $\{\ell_j\}_{j=1}^M$ linear, 
\begin{equation}
\label{e:beals1}
\|\ad_{\ell_1^{\rm{w}}}\ad_{\ell_2^{\rm{w}}}\dots \ad_{\ell_M^{\rm{w}}}(\epsilon P-z)^{-1}\|_{L^2\to L^2}\leq C_M |\Im z|^{-N_M}\begin{cases}\epsilon&M\geq 2\\ \epsilon^{\frac{1}{2}M}&0\leq M\leq 1,
\end{cases}
\end{equation}
Hence for $ \Im z \neq 0 $, Beals's Lemma, or rather \cite[Theorem 8.1]{z12}, 
\begin{equation}
\label{eq:res2a}   (\epsilon P-z)^{-1} = a^{\rm w}_\varepsilon ( z, x, D ) , \ \ \ 
| \partial^\alpha a_\varepsilon | \leq C  |\Im z |^{-N_\alpha } \varepsilon^{\min(1,|\alpha|/2) } . 
\end{equation}
(Here and elsewhere $ N_\alpha $ denotes a constant depending on $ \alpha $ which may be different in different estimates.)

To prove \eqref{e:beals1}, we first observe that $\ad_{\ell_1^{\rm{w}}}P\in \Psi_{(1)}$. Since 
\[ \ad_{\ell^{\rm{w}}} (\epsilon P-z)^{-1} = - \varepsilon (\epsilon P-z)^{-1}  (\ad_{\ell^{\rm{w}}} P) P^{-\frac12} P^{\frac12} (\epsilon P-z)^{-1} , \]
and 
\[ \|   ( \ad_{\ell^{\rm{w}}} P ) P^{-\frac12} \|^2_{L^2 \to L^2 } = 
\|  ( \ad_{ \ell^{\rm{w}}} P ) P^{-1} ( \ad_{ \ell^{\rm{w}}} P )^* \|_{L^2 \to L^2 } \leq C , \]
(since $ P^{-1} \in \Psi (m^{-2} ) $ and $ \ad_{ \ell^{\rm{w}}} P \in \Psi_{(1)} \subset \Psi ( m ) $) and
$ \|  ( \varepsilon P - z )^{-1} P^{-\frac12} \| \leq  C \varepsilon^{\frac12} |\Im z |^{-1} $,  we have
\begin{equation}
\label{eq:bound1}  \| \ad_{\ell^{\rm{w}}} (\epsilon P-z)^{-1}  \|_{ L^2 \to L^2 } \leq C \varepsilon^{\frac12} |\Im z|^{-2} . \end{equation}

For $ M \geq 2 $ we argue as in the proof of \cite[Proposition 8.6]{DiSj} but for our class of operators. For that we note that 
$$
\ad_{\ell_1^{\rm{w}}}\dots \ad_{\ell_M^{\rm{w}}}(\epsilon P-z)^{-1}
$$
is a sum of terms of the form 
$$
\epsilon ^L (\epsilon P-z)^{-1} ( \ad_{\ell_1^{\rm{w}}}\dots\ad_{\ell_{j_1}^{\rm{w}} }  P ) (\epsilon P-z)^{-1}\dots ( \ad_{\ell_{j_{L-1}+1}^{\rm{w}}}\dots\ad_{\ell_{M}^{\rm{w}} }  P ) (\epsilon P-z)^{-1},
$$
where $1\leq L\leq M$ and and $1\leq j_1<j_2\dots<j_{L-1}<M$. The bound is then 
seen from considering the two extreme cases:
\[  \epsilon^M  (\epsilon P-z)^{-1} ( \ad_{\ell_1^{\rm{w}} }  P) ( \epsilon P-z)^{-1} 
 ( \ad_{\ell_2^{\rm{w}} }  P) ( \epsilon P-z)^{-1}  \cdots   ( \ad_{\ell_M^{\rm{w}} }  P) 
 ( \varepsilon P - z )^{-1} \]
 and
 \[ \varepsilon (\epsilon P-z)^{-1} ( \ad_{\ell_1^{\rm{w}} }  \cdots \ad_{\ell_M^{\rm{w}} } P) 
 ( \varepsilon P - z )^{-1} . \]
 In the first case we proceed as in the proof of \eqref{eq:bound1} to obtain the bound
 $ \varepsilon^{M/2 } |\Im z |^{-M-1}  = \mathcal O ( \varepsilon ) |\Im z |^{-M-1} $. In the second we use the fact that for $ M \geq 2 $, $  \ad_{\ell_1^{\rm{w}} }  \cdots \ad_{\ell_M^{\rm{w}} } P \in \Psi_{(0)} $, so the bound becomes 
 $ \mathcal O ( \varepsilon ) |\Im z |^{-2} $.

We write
$$
 [ ( \varepsilon p - z )^{-1} ]^{\rm{w}} ( \epsilon P-z)= I + \varepsilon E ( z ) ,
$$
where $ E ( z ) = e^{\rm{w}} ( z, x, D ) $ with 
\[ e ( z, x, \xi ) =   \int_0^1 ( 1 - t ) 
 e^{ i t A ( D ) } ( i  A ( D ) )^{2} ( \varepsilon^{-1} 
( \varepsilon p ( \rho_1 ) - z ) ( \varepsilon p ( \rho_2 )  - z )^{-1} )|_{\rho_1=\rho_2= (x,\xi) } \]
and $ A ( D ) := -  \frac12 \sigma ( D_{\rho_1} , D_{\rho_2 } ) $. 
The terms to which $ e^{ i t A ( D ) } $ is applied are,  schematically,  of the form 
\[   D^2 p ( \rho_1 ) \left( \frac{\varepsilon D^2 p( \rho_2 )}{  ( \varepsilon p ( \rho_2 ) - z )^{2 } }  +  \frac{\varepsilon^2 Dp ( \rho_2 ) Dp( \rho_2  ) }{ ( \varepsilon p ( \rho_2 ) - z )^3 } \right) . \]
Using that $p\geq cm^2$, for $ |z| < 1 $ and $\varepsilon>0$ small enough, we have
\[     |  ( \varepsilon p ( \rho_2 ) - z ) |^{-1}  \leq  |\Im z |^{-1} \min \left( 1 , 2 \varepsilon^{-1} c^{-1}m ( \rho_2 ) ^{-2}  \right). \]
Thus, for $ | \Im z | > 0 $, we have
\[  \partial^\alpha_{ \rho_1, \rho_2 } ( i  A ( D ) )^{2}  
( \varepsilon^{-1} ( \varepsilon p ( \rho_1 ) - z ) ( \varepsilon p ( \rho_2 )  - z )^{-1} ) = \mathcal O ( 
| \Im z |^{-N_\alpha } 
m ( \rho_2 )^{ - 2} )  . 
\] 
Since in the estimates $ e^{ i t  A( D ) } : S ( m_1( \rho_1 ) m_2 ( \rho_2 ) ) 
\to S ( m_1 ( \rho_1 ) m_2 ( \rho_2 ) ) $ only finitely many derivatives are used, we conclude that
\[  \partial^\alpha e ( z , x , \xi ) \leq  C |\Im z|^{-N_\alpha }  m ( x, \xi )^{-2} . \]
Combining this with \eqref{eq:res2a} we obtain
\[  ( \varepsilon P - z )^{-1} = [( \epsilon p(x,\xi)-z)^{-1}]^{\rm{w}}+ \varepsilon {R}(z), 
\]
where  $ R ( z ) = E ( z) ( \varepsilon P - z )^{-1} = r^{\rm w} ( z, x , D ) $ and
\[ \partial^\alpha  r ( z, x , \xi ) = \mathcal O ( | \Im z |^{-N_\alpha } 
m ( x, \xi)^{-2} ) . \]
We now apply the Helffer--Sj\"ostrand formula \eqref{eq:HS} (see the proof of Lemma \ref{l:approximate}) which gives \eqref{eq:funct1}. 
\end{proof}

\begin{proof}[Proof of Proposition \ref{p:R1R2}]
We put 
\[ A_\varepsilon := [\chi(\epsilon P), R_1 e^{itP}R_2] , \ \ \ B_\varepsilon = 
\varepsilon^{-1} [\chi(\epsilon P), R_1 e^{itP}R_2] \chi_0 ( \varepsilon P ) , \]
so that
\[ \begin{split} A_\varepsilon & =  [\chi(\epsilon P),R_1 ]e^{itP}R_2 +R_1 e^{itP}[\chi(\epsilon P),R_2] \\
& =  [\chi(\epsilon P),R_1  ] P^{\frac12} e^{itP} P^{-\frac12 } R_2 + 
R_1 P^{-\frac12} e^{itP} P^{\frac12} [\chi(\epsilon P), R_2] 
\end{split} \]
In view of \eqref{eq:Psr} and \eqref{eq:funct1}, to show that $ A_\varepsilon : L^2 \to L^2 $ is uniformly bounded, it is enough to show that
\begin{equation}
\label{eq:est0}  \| [ ( \varepsilon  p)^*\chi^{\rm{w}}  , R^* ] P [ ( \varepsilon  p)^*\chi^{\rm{w}}  , R ] \|_{L^2 \to L^2 } \leq C , \ \ R \in \Psi_{(1)}  \end{equation}
with $ C $ independent of $ \varepsilon $. Since $ m  \partial^\alpha ( \chi ( \varepsilon p ) ) = 
 \mathcal O( \varepsilon p ) ( \partial^\alpha \chi ) ( \varepsilon p )  
\mathcal O ( 1 ) $ for $ |\alpha | = 1 $, this follows from the composition formula as in 
\cite[Proposition A.1]{gaz5}. 

To analyse $ B_\varepsilon $, let $\chi_1\in C_{\rm{c}} ^\infty(\mathbb{R})$ with $\supp \chi_0\cap \supp (1-\chi_1)=\emptyset$ and $\supp \chi_1\cap \supp (1-\chi ) =\emptyset$. Then, 
\begin{equation} 
\label{eq:Beps} \begin{split}
  B_\varepsilon 
& = \varepsilon^{-1}  [\chi(\epsilon P),R_1 ]e^{itP}R_2\chi_1(\epsilon P)\chi_0(\epsilon P)+ \varepsilon^{-1} R_1 e^{itP}[\chi(\epsilon P),R_2]\chi_0(\epsilon P)\\
&  =  \varepsilon^{-1} [\chi(\epsilon P),R_1 ]  e^{itP} [R_2,\chi_1(\epsilon P)]\chi_0(\epsilon P) \\
& \ \ \ \ +  \varepsilon^{-1} [\chi(\epsilon P),R_1 ]\chi_1(\epsilon P) P^{\frac12} e^{itP} P^{-\frac12} R_2\chi_0(\epsilon P)  \\
& \ \ \ \ \ \ + \varepsilon^{-1} R_1 P^{-\frac12} e^{itP} P^{\frac12} [\chi(\epsilon P),R_2]\chi_0(\epsilon P)\\
&=:I +I\!I +I\!I\!I.
\end{split} \end{equation}
To prove the lemma, we show that $I$, $I\!I$, and $I\!I\!I$ are $O(1)_{L^2\to L^2}$. 

We start by estimating $I$.  To do this, we claim that 
\begin{equation}
\label{eq:est1}   \| [ \chi(\epsilon P), R ] \|_{L^2 \to L^2 }
= \mathcal O ( \varepsilon^{\frac12} ) . 
\end{equation}
The form of $I$ then implies $I=O(1)_{L^2\to L^2}$. 
To prove~\eqref{eq:est1}, we use \eqref{eq:Psr} and \eqref{eq:funct1} to replace the cut-off operator with $\big((\varepsilon p)^*\chi\big)^{\rm w}(x,D)$ so that it is enough to show
\begin{equation}
\label{eq:est1b}   \| [ ( \varepsilon  p)^*\chi^{\rm{w}} , R ] \|_{L^2 \to L^2 }
= \mathcal O ( \varepsilon^{\frac12} ) . 
\end{equation}
This follows by arguing as in the proof of \eqref{eq:est0}.

Next, to estimate $I\!I$, we claim that 
\begin{equation}
\label{eq:est2}
\|\varepsilon^{-1} [\chi(\epsilon P),R_1 ]\chi_1(\epsilon P) P^{\frac12} \|_{L^2\to L^2}=O(1).
\end{equation}
Since $ \| P^{-\frac12}R_2 \|^2 = \| R_2^* P^{-1}  R_2 \| =O(1)_{L^2\to L^2}$,  this implies $I\!I=O(1)_{L^2\to L^2}.$

To prove~\eqref{eq:est2} we put $ \tilde{\chi}_1(x) : =x\chi_1(x) \in C_{\rm{c}} ^\infty ( \mathbb R ) $ 
and observe that
\begin{equation}
\label{eq:est2b}
\begin{aligned}
&\|\varepsilon^{-1} [\chi(\epsilon P),R_1 ]\chi_1(\epsilon P) P^{\frac12} \|_{L^2\to L^2}^2\\
&=\|\varepsilon^{-2} [\chi(\epsilon P),R_1 ]\chi_1(\epsilon P) P\chi_1(\epsilon P)[\chi(\epsilon P),R_1] \|_{L^2\to L^2}\\
&=\|\varepsilon^{-3} [\chi(\epsilon P),R_1 ]\tilde{\chi}_1(\epsilon P) \chi_1(\epsilon P)[\chi(\epsilon P),R_1] \|_{L^2\to L^2}\\
&\leq \epsilon^{-3}\|[\chi(\epsilon P),R_1 ]\tilde{\chi}_1(\epsilon P)\|_{L^2\to L^2}\|[\chi(\epsilon P),R_1]\chi_1(\epsilon P) \|_{L^2\to L^2}.
\end{aligned}
\end{equation}
Now, we claim that for $\psi\in C_c^\infty$ with $\supp \psi\cap \supp (1-\chi)=\emptyset$, 
\begin{equation} \label{eq:est2c}
\|[\chi(\epsilon P),R_1 ]\psi(\epsilon P) \|_{L^2\to L^2}= O(\epsilon^{3/2}).
\end{equation}
The estimate~\eqref{eq:est2} then follows by using~\eqref{eq:est2c} with $\psi=\tilde{\chi}_1$ and $\psi=\chi_1$ in~\eqref{eq:est2b}. 

To prove~\eqref{eq:est2c} we again use \eqref{eq:Psr} and \eqref{eq:funct1} to replace the cut-off operator with $\big((\varepsilon p)^*\chi\big)^{\rm w}(x,D)$.  That is, we need to show
\begin{equation} \label{eq:est2c}
\|[(\epsilon p)^*\chi^{\rm w},R_1 ](\epsilon p)^*\psi^{\rm w}\|_{L^2\to L^2}= O(\epsilon^{3/2}).
\end{equation}
The symbolic calculus (see \cite[Proposition A.1]{gaz5}) gives
\begin{equation}
\label{e:cat1}
[(\epsilon p)^*\chi^{\rm w},R_1 ]= \epsilon^{1/2} \big(\epsilon ^{1/2}\{p, r_1\}\chi'(\epsilon  p)\big)^{\rm w} + \epsilon b_1^{\rm w},
\end{equation}
where $b_1\in S_{(0)}$.  Since $\{p, r_1\}\in S(m)$,  and $p\geq m^2 $, 
$$
\epsilon^{1/2} \{p, r_1\}p\chi'(\epsilon  p)\in S_{(0)}.
$$
Therefore, using \cite[Proposition A.1]{gaz5} again and the fact that $\supp \chi'\cap \supp \psi=\emptyset$ in the first equality
\begin{equation}
\label{e:cat2}
\begin{gathered}
\epsilon^{1/2} \big(\epsilon ^{1/2}\{p, r_1\}\chi'(\epsilon  p)\big)^{\rm w} (\epsilon p)^*\psi^{\rm w}= \epsilon ^{3/2} b_2^{\rm w},\qquad
\epsilon b_1^{\rm w}(\epsilon p)^*\psi^{\rm w}= \epsilon^{3/2}b_3^{\rm w},
\end{gathered}
\end{equation}
where $b_2,b_3\in S_{(0)}$.  Putting~\eqref{e:cat1} and~\eqref{e:cat2} together implies~\eqref{eq:est2c} and hence completes the proof of the bound on $I\!I$.

Finally, to estimate $I\!I\!I$ we observe that 
$$
\|P^{\frac12} [\chi(\epsilon P),R_2]\chi_0(\epsilon P)\|_{L^2\to L^2}^2= \|\chi_0(\epsilon P)[\chi(\epsilon P),R_2] P[\chi(\epsilon P),R_2]\chi_0(\epsilon P)\|_{L^2\to L^2}
$$
and 
\begin{align*}
&\chi_0(\epsilon P)[\chi(\epsilon P),R_2] P[\chi(\epsilon P),R_2]\chi_0(\epsilon P)\\
&=\chi_0(\epsilon P)[\chi(\epsilon P),R_2][\chi(\epsilon P),R_2]P\chi_0(\epsilon P)+ [\chi_0(\epsilon P), R_2][P,[\chi(\epsilon P), R_2]]\chi_0(\epsilon P)\\
&=\chi_0(\epsilon P)[\chi(\epsilon P),R_2][\chi(\epsilon P),R_2]P\chi_0(\epsilon P)- [\chi_0(\epsilon P), R_2][\chi(\epsilon P),R_3]\chi_0(\epsilon P),
\end{align*}
where $R_3:= [R_2,P]\in \Psi_{(1)}$. Arguing as in the proof of the estimate on $I\!I$, we obtain  
\begin{gather*}
\|[\chi(\epsilon P),R_2]P\chi_0(\epsilon P)\|_{L^2\to L^2}=O(\epsilon ^{1/2}),\quad \|\chi_0(\epsilon P)[\chi(\epsilon P),R_2]\|_{L^2\to L^2}=O(\epsilon^{3/2}),\\
\|[\chi(\epsilon P),R_3]\chi_0(\epsilon P)\|_{L^2\to L^2}=O(\epsilon^{3/2}).
\end{gather*}
Therefore, using~\eqref{eq:est1} (or more precisely its analog with $\chi$ replaced by $\chi_0$) completes the proof that $I\!I\!I=O(1)_{L^2\to L^2}$ and hence of the lemma.
\end{proof}

\section{A new Egorov theorem and the proof of Theorem \ref{t:4}}

In this section, we prove the version of Egorov's theorem needed for
Theorem 4.  For standard finite-time versions and references see
\cite[Chapter 11]{z12}, and for some recent advances, \cite{bon} and
\cite{pro}.  The point here is that, for the symbol classes used in this
paper, the conjugated operator remains pseudodifferential for all
times, with symbol seminorms allowed to grow exponentially.

Recalling that $ a \in S_{(k)} $ if $ \partial^\alpha a = \mathcal O ( 1 ) $ for $ |\alpha| \geq k $, we assume that $p\in S_{(2)}$ is real valued and define $P:=p^{\rm w}$. We will show that for $b\in S_{(1)} $, and $t\in\mathbb{R}$, 
$
e^{itP}b^{\rm w}e^{-itP}
$
is a pseudodifferential operator and obtain quantitative control on the derivatives of its symbol. More precisely, we prove the following:
\begin{theo}
\label{t:egorov}
Suppose that $b\in S_{(1)}$. Then for all $k\geq 0$ there are $C_k>0$ such that for all $t\in\mathbb{R}$ there is $r_t\in S_{(0)}$ satisfying
$$
e^{itP}b^{\rm w}e^{-itP}=(b\circ\varphi_t)^{\rm w}+r_t^{\rm w}
$$
and
$$
\|r_t\|_{C^k}\leq C_ke^{C_k|t|}. 
$$
\end{theo}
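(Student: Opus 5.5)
The plan is the standard Duhamel argument for Egorov's theorem, with every error term estimated through flow bounds that hold uniformly for all times up to exponential growth. Since $p\in S_{(2)}$, the Hamiltonian vector field $H_p$ has globally Lipschitz coefficients: $\partial^\alpha H_p=\mathcal O(1)$ for $|\alpha|\geq1$. Hence $\varphi_t$ exists for all $t$. Differentiating the flow equation and applying Gr\"onwall inductively gives
\[
\|\partial^\alpha\varphi_t\|_{L^\infty}\leq C_\alpha e^{C_\alpha|t|},\qquad |\alpha|\geq1 .
\]
Consequently, for $b\in S_{(1)}$ the derivatives of $b\circ\varphi_t$ of order $\geq1$ obey the same exponential bounds. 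So $b_t:=b\circ\varphi_t\in S_{(1)}$ with seminorms $\mathcal O(e^{C|t|})$.

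Next I would set $U(t):=e^{-itP}\,(b_t)^{\rm w}\,e^{itP}$ and differentiate:
\[
U'(t)=e^{-itP}\bigl(\partial_t b_t^{\rm w}-i[P,b_t^{\rm w}]\bigr)e^{itP}.
\]
Here $\partial_t b_t=\{p,b_t\}$, with the sign convention for $\varphi_t$ chosen to match. For the Weyl calculus, the expansion of $i[p^{\rm w},a^{\rm w}]$ has only odd terms: the first is $\{p,a\}^{\rm w}$, and the next is cubic in $\sigma(D_{\rho_1},D_{\rho_2})$. In that cubic term $p$ carries at least three derivatives, and $a$ carries at least three, in particular at least one. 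So
\[
e_t:=\partial_t b_t-\bigl(\text{symbol of }i[P,b_t^{\rm w}]\bigr)\in S_{(0)} .
\]
Its $C^k$ norms are controlled by finitely many seminorms of $p$ (derivatives of order $\geq2$, which are bounded) and of $b_t$ (derivatives of order $\geq1$). This gives $\|e_t\|_{C^k}\leq C_ke^{C_k|t|}$. The remainder term in the composition formula, written via $e^{itA(D)}$ as in the proof of Lemma~\ref{l:funct1}, only uses finitely many derivatives and maps products of $S_{(0)}$-type classes correctly, so this step is routine.

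Integrating from $0$ to $t$ and conjugating back gives
\[
e^{itP}b^{\rm w}e^{-itP}=b_t^{\rm w}-\int_0^t e^{i(t-s)P}e_s^{\rm w}e^{-i(t-s)P}\,ds .
\]
(Up to signs, this holds as operators $\mathscr S\to\mathscr S'$, justified because $e^{itP}$ preserves $H(m^r)$ spaces.) It remains to show that $e^{i\tau P}c^{\rm w}e^{-i\tau P}=\tilde c_\tau^{\rm w}$ with
\[
\|\tilde c_\tau\|_{C^k}\leq C_ke^{C_k|\tau|}\,\|c\|_{C^{N_k}}\qquad\text{for } c\in S_{(0)} .
\]
This is the main obstacle, because without the $\hbar\to0$ structure no asymptotic expansion is available. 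I would use Beals's theorem (\cite[Theorem 8.1]{z12}, as invoked for \eqref{eq:res2a}): it suffices to bound $\ad_{\ell_1^{\rm w}}\cdots\ad_{\ell_M^{\rm w}}(e^{i\tau P}c^{\rm w}e^{-i\tau P})$ on $L^2$ for linear $\ell_j$. Now
\[
\ad_{\ell^{\rm w}}\bigl(e^{i\tau P}c^{\rm w}e^{-i\tau P}\bigr)=e^{i\tau P}\,\ad_{L_\tau^{\rm w}}c^{\rm w}\,e^{-i\tau P},
\]
where $L_\tau^{\rm w}:=e^{-i\tau P}\ell^{\rm w}e^{i\tau P}$. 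Applying the Duhamel identity above to $\ell\in S_{(1)}$ gives $L_\tau=\ell\circ\varphi_{-\tau}+\text{(a $\Psi_{(0)}$-type error)}$. This is circular, so I would instead run an induction on $M$ with a Gr\"onwall inequality in $\tau$. Set
\[
N_M(\tau):=\sup_{\ell_1,\dots,\ell_M}\bigl\|\ad_{\ell_1^{\rm w}}\cdots\ad_{\ell_M^{\rm w}}\bigl(e^{i\tau P}c^{\rm w}e^{-i\tau P}\bigr)\bigr\|_{L^2\to L^2},
\]
with the $\ell_j$ normalized. Differentiating in $\tau$ produces commutators $[P,\ad\cdots]$. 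By the Jacobi identity, $[\ell^{\rm w},P]=(H_\ell p)^{\rm w}$ with $H_\ell p\in S_{(1)}$, which equals a linear part plus a $\Psi_{(0)}$ piece. The derivative of $N_M$ is therefore bounded by $C\,N_M$ plus terms involving lower $N_{M'}$ and bounded operators. Gr\"onwall then gives $N_M(\tau)\leq C_Me^{C_M|\tau|}$. Making the linear-part bookkeeping precise is the delicate point: one should work with the time-dependent linear forms $\ell\circ d\varphi$, or equivalently commute with $(\ell\circ\varphi_{-\tau})^{\rm w}$. Once this is in place, Beals gives symbol bounds, integrating in $s$ over $[0,t]$ preserves the $e^{C|t|}$ growth, and $r_t$ is the resulting integral symbol in $S_{(0)}$.
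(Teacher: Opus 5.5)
Your outer structure matches the paper's. You use flow bounds as in Lemma~\ref{l:flow}, the Duhamel identity with an $S_{(0)}$ error as in Lemma~\ref{l:initialEgorov}, and a reduction to an all-time Egorov statement for $S_{(0)}$ proved via Beals's characterisation, followed by integration in $s$. The gap is in that last step, which is the real content of the proof (Lemma~\ref{l:zeroEgorov}).

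Your Gr\"onwall scheme for $N_M(\tau)$ relies on the claim that $[\ell^{\rm w},P]$ is ``a linear part plus a $\Psi_{(0)}$ piece''. This is false for general $p\in S_{(2)}$. For $P=D_x^2+V(x)$ one has $[D_x,P]=-iV'(x)$. Taking $V'(x)=2x+\langle x\rangle$ gives $V\in S_{(2)}$ with $P$ elliptic, yet $V'$ is not affine plus bounded. So differentiating $N_M$ produces commutators with non-linear $S_{(1)}$ entries, which $N_M$ does not control. Enlarging the class to all time-independent $S_{(1)}$ entries does not help, because then $[b^{\rm w},P]$ leaves $S_{(1)}$. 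For example, $\{\langle x\rangle,\xi^2+x^2\}=-2x\xi/\langle x\rangle$ has $x$-derivative $-2\xi\langle x\rangle^{-3}$, which is unbounded. So the inequality ``$\tfrac{d}{d\tau}N_M\leq CN_M+\dots$'' never closes. Separately, the commutator of $P$ with the iterated commutator is unbounded and cannot be estimated by $CN_M$. It has to be removed by conjugating with $e^{i\tau P}$, i.e.\ via a Duhamel formula rather than a derivative bound.

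The missing idea is the one you dismissed as circular. The paper uses \eqref{e:singleCommutatorForm}, $e^{-itP}[b^{\rm w},A_t]e^{itP}=[(b\circ\varphi_{-t})^{\rm w}-Q(b;-t),a^{\rm w}]$. This needs no pseudodifferential information about $Q(b;-t)$, only $\|Q(b;-t)\|_{L^2\to L^2}=\mathcal O(e^{C|t|})$. That bound follows from Calder\'on--Vaillancourt for $q_s\in S_{(0)}$, unitarity of $e^{isP}$, and integration in $s$.

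Higher commutators then require iterated commutators of $(b_j\circ\varphi_{-t})^{\rm w}$ with $Q(b;-t)$, that is, with conjugated $S_{(0)}$ operators carrying fewer commutators. The paper handles this by the joint induction \eqref{e:multipleAd}--\eqref{e:QCommutator}. There the entries $b_j$ range over \emph{all} of $S_{(1)}$, with bounds in terms of their seminorms, and the Jacobi identity moves the $Q$-terms outward.

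In your differential language the equivalent fix is to commute with the transported symbols $(b_j\circ\varphi_{\tau-T})^{\rm w}$. These satisfy $\partial_\tau(b_j\circ\varphi_{\tau-T})^{\rm w}=i[P,(b_j\circ\varphi_{\tau-T})^{\rm w}]+\Psi_{(0)}$, and you would induct over arbitrary $S_{(1)}$ entries. Your last sentence points in this direction, but the argument as written does not carry it out, and that bookkeeping is the proof.
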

\begin{rem}
It is possible to prove a version of Theorem~\ref{t:egorov} for $b\in S_{(\ell)}$ for any $\ell \geq 0$, using the calculus in $\Psi_{(\ell)}$ to reduce to the case of $S_{(0)}$. However, since we do not need it here, we do not pursue this generalisation.
\end{rem}
Theorem~\ref{t:egorov} allows us to conclude that the jump operators $\hat A_f ( \omega )$ are pseudodifferential when $f$ is super-exponentially decaying and $A=a^{\rm w}$ with $a\in S_{(1)}$. 
The structure of the coherent term is more complicated as $ b_1 ( t ) $ is {\em not} super-exponentially decreasing.

To prove Theorem~\ref{t:egorov},  we start with an estimate on the flow for $p\in S_{(2)}$. For this, we follow~\cite[Lemma 11.1]{z12}.
\begin{lemm}\label{l:flow}
Let $p\in S_{(2)}$ be real valued and $\varphi_t:=\exp(tH_p)$. There exist $\Lambda>0$ and $C_{\alpha}>0$,  such that for all $b\in S_{(1)}$ and all $t\in\mathbb{R}$, 
$$
|\partial^\alpha ( b\circ\varphi_t) |\leq C_\alpha e^{\Lambda|\alpha||t|}\sum_{1\leq |\beta|\leq |\alpha|}\|\partial^\beta b\|_{L^\infty}, \quad |\alpha|\geq 1, \ \ \alpha \in\mathbb{N}^{2d}.
$$
\end{lemm}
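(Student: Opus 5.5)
Since $p\in S_{(2)}$, the Hamilton vector field $H_p=(\partial_\xi p,-\partial_x p)$ has globally bounded derivatives of all orders $\geq 1$: $\partial H_p\in S_{(0)}$. Hence the flow $\varphi_t$ exists for all times. We first establish the bound
\[
|\partial^\alpha\varphi_t(\rho)|\leq C_\alpha e^{\Lambda|\alpha||t|},\qquad |\alpha|\geq 1,
\]
uniformly in $\rho\in\mathbb R^{2n}$, and then apply the chain rule (Fa\`a di Bruno) to $b\circ\varphi_t$.

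For the first step we use the linearised flow. Writing $\rho_t=\varphi_t(\rho)$, the Jacobian $J_t=\partial_\rho\varphi_t(\rho)$ solves
\[
\dot J_t=(\partial H_p)(\rho_t)\,J_t,\qquad J_0=I .
\]
Since $\|\partial H_p\|_{L^\infty}\leq\Lambda$, Gr\"onwall's inequality gives $\|J_t\|\leq e^{\Lambda|t|}$. For higher derivatives we argue by induction on $|\alpha|$, following \cite[Lemma 11.1]{z12}. Differentiating the flow equation, $\partial^\alpha\varphi_t$ satisfies
\[
\tfrac{d}{dt}\partial^\alpha\varphi_t=(\partial H_p)(\rho_t)\,\partial^\alpha\varphi_t+F_\alpha(t).
\]
Here $F_\alpha$ is a sum of products $(\partial^k H_p)(\rho_t)\,\partial^{\beta_1}\varphi_t\cdots\partial^{\beta_k}\varphi_t$ with $k\geq 2$, $\beta_1+\dots+\beta_k=\alpha$ and all $|\beta_j|\geq 1$. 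The coefficients $\partial^k H_p$ are bounded because $p\in S_{(2)}$. By the induction hypothesis $|F_\alpha(t)|\leq C e^{\Lambda|\alpha||t|}$, since the exponents add: $\sum|\beta_j|=|\alpha|$. Duhamel's formula then gives
\[
|\partial^\alpha\varphi_t|\leq \int_0^{|t|}e^{\Lambda(|t|-s)}C e^{\Lambda|\alpha|s}\,ds\leq C' e^{\Lambda|\alpha||t|}
\]
for $|\alpha|\geq 2$, because $|\alpha|>1$ makes the integral dominated by its upper endpoint. This is the only point needing care: the constant $\Lambda$ must be fixed once, as $\sup|\partial H_p|$, so that the induction closes with the same $\Lambda$. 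We may enlarge $\Lambda$ slightly if needed so that the strict inequality $|\alpha|\Lambda>\Lambda$ absorbs the integral.

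Finally, Fa\`a di Bruno expresses $\partial^\alpha(b\circ\varphi_t)$ as a sum of terms $(\partial^\beta b)(\varphi_t)\prod_j\partial^{\gamma_j}\varphi_t$ with $1\leq|\beta|\leq|\alpha|$ and $\sum|\gamma_j|=|\alpha|$, each $|\gamma_j|\geq 1$. Only derivatives of $b$ of order $\geq 1$ appear, so $b\in S_{(1)}$ suffices. Combining this with the flow bound gives
\[
|\partial^\alpha(b\circ\varphi_t)|\leq C_\alpha e^{\Lambda|\alpha||t|}\sum_{1\leq|\beta|\leq|\alpha|}\|\partial^\beta b\|_{L^\infty},
\]
which is the claimed estimate. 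The constants $C_\alpha$ depend only on finitely many seminorms of $p$.
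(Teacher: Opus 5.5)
Your proof is correct and follows essentially the same route as the paper: you bound $\partial^\alpha\varphi_t$ by induction using the linearised flow equation, whose forcing term is built from lower-order derivatives, and close with Gr\"onwall (via Duhamel, where the paper uses a differential inequality for $|\partial^\alpha\varphi_t|^2$), then pass to $b\circ\varphi_t$ by the chain rule. You also make explicit the Fa\`a di Bruno step and the global existence of the flow, both of which the paper leaves implicit.
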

\begin{proof}
The lemma follows from showing that 
\begin{equation}
\label{e:flowEstimate}
|\partial^\alpha \varphi_t(x,\xi)|\leq C_{\alpha}e^{\Lambda|\alpha| |t|},\qquad |\alpha|\geq 1.
\end{equation}
The flow $\rho(t)=\varphi_t((x_0,\xi_0))$ is defined by 
\begin{equation}
\label{e:hamiltonian}
\dot\rho (t)=H_p(\rho(t)),\qquad \rho(0)=(x_0,\xi_0).
\end{equation}
Differentiating~\eqref{e:hamiltonian} once, we obtain for $|\alpha|=1$, 
$$
\frac{d}{dt}(\partial^\alpha \varphi_t)=\partial H_p(\rho(t)) \partial^\alpha \varphi_t.
$$
Therefore, since $\partial H_p\in L^\infty$, we obtain
$$
|\partial ^\alpha \varphi_t|\leq e^{\|\partial H_p\|_{\infty}|t|},\qquad |\alpha|=1, 
$$
which gives \eqref{e:flowEstimate} with $ \Lambda \geq \|\partial H_p\|_{\infty} $. 

Let $\ell\geq 2$ and suppose by induction that~\eqref{e:flowEstimate} holds for all $1\leq |\alpha|\leq \ell-1$.  Then for $ |\alpha | = \ell$,  differentiating the equation~\eqref{e:hamiltonian}, we obtain
\begin{equation}
\label{e:manyDerivatives}
\frac{d}{dt}(\partial^\alpha \varphi_t)=\partial H_p(\rho(t)) \partial^\alpha \varphi_t+\gamma(t),
\end{equation}
where $\gamma(t)$ is a sum of terms of the form
$$g\, \partial^{\alpha_1}\varphi_t\dots \partial^{\alpha_k}\varphi_t,
$$
where $\alpha_1+\dots+\alpha_k=\alpha$ and $0<|\alpha_j|<|\alpha|=\ell$ and $g$ is bounded. The induction hypothesis implies
$$
|\gamma(t)|\leq Ce^{\Lambda |\alpha||t|},
$$
and using this in~\eqref{e:manyDerivatives} gives
$$
\frac{d}{dt}|\partial^\alpha\varphi_t|^2\leq |\partial^\alpha\varphi_t|^2\|\partial H_p\|_{\infty}+Ce^{\Lambda |\alpha||t|}|\partial^\alpha \varphi_t|.
$$
This implies~\eqref{e:flowEstimate} after taking $\Lambda >\|\partial H_p\|_{\infty}$ and an application of Gr\"onwall's inequality.
\end{proof}

Next, we give a preliminary decomposition of the conjugated operator.
\begin{lemm}
\label{l:initialEgorov}
There is $N>0$ such that for all $a\in S_{(1)}$ with
\begin{equation}
\label{e:At}
A_t:= e^{itP}a^{\rm w}e^{-itP},
\end{equation}
and all $t\in\mathbb{R}$, there is $q_t\in S_{(0)}$ satisfying
\begin{equation}
\label{eq:defQ}
A_t=(a\circ\varphi_{t})^{\rm w}-Q(a,t),\qquad Q(a;t):=i\int_0^t e^{i(t-s)P}q_s^{\rm w}e^{i(s-t)P}ds,
\end{equation}
and 
\begin{equation}
\label{e:qEstimates}
|\partial^\alpha q_t|\leq C_\alpha e^{\Lambda (|\alpha| +N)|t|}\sum_{1\leq |\beta|\leq |\alpha|+N}\|\partial^\beta a\|_{\infty}. 
\end{equation}
\end{lemm}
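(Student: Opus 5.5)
The natural route is a Duhamel argument: the classical flow $\varphi_t$ exactly transports $a$, and the quantum defect is measured by the failure of $e^{itP}$ to intertwine the quantisation of $a\circ\varphi_t$ with the flow. I would set $a_t := a\circ\varphi_t$ and compare $A_t$ with $a_t^{\rm w}$ by differentiating
\[
 U(s) := e^{i(t-s)P}\, a_s^{\rm w}\, e^{-i(t-s)P}, \qquad 0\le s\le t .
\]
Then $U(0)=A_t$ and $U(t)=a_t^{\rm w}$, so $A_t = a_t^{\rm w} - \int_0^t U'(s)\,ds$. A direct computation gives
\[
U'(s) = e^{i(t-s)P}\bigl( (\partial_s a_s)^{\rm w} - i[P, a_s^{\rm w}]\bigr)e^{-i(t-s)P}.
\]
Since $\partial_s a_s = H_p a_s = \{p, a_s\}$, the bracket in $U'(s)$ equals $-\,i\bigl([P,a_s^{\rm w}] - \tfrac1i\{p,a_s\}^{\rm w}\bigr)$. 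I would therefore define
\[
q_s := \text{symbol of } [P,a_s^{\rm w}] - \tfrac1i \{p,a_s\}^{\rm w},
\]
up to the sign and factor of $i$ required to match \eqref{eq:defQ}. This yields the representation \eqref{eq:defQ} once the conjugation convention $e^{i(t-s)P}q_s^{\rm w}e^{i(s-t)P}$ is checked against the sign chosen above.

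The key step is to show that $q_s\in S_{(0)}$ with the bound \eqref{e:qEstimates}. For Weyl quantisation the commutator $[p^{\rm w}, b^{\rm w}]$ has symbol $\tfrac1i\{p,b\}$ plus a remainder that is exactly third order. The even-order terms cancel, so the remainder is given by the integral formula in $e^{itA(D)}(iA(D))^3$ applied to $p(\rho_1)b(\rho_2)$, as in the proof of Lemma \ref{l:funct1}, but carried to third order. Every term then involves $\partial^3 p$, which is bounded since $p\in S_{(2)}$, multiplied by $\partial^3 b$. Because $e^{itA(D)}$ preserves $S_{(0)}$ in both variables, using only finitely many derivatives, we obtain
\[
|\partial^\alpha q_s| \le C_\alpha \sum_{3\le|\beta|\le|\alpha|+N}\|\partial^\beta a_s\|_\infty .
\]
Lemma \ref{l:flow} then bounds each $\|\partial^\beta a_s\|_\infty$ by $C e^{\Lambda|\beta||s|}\sum_{1\le|\gamma|\le|\beta|}\|\partial^\gamma a\|_\infty$, which gives \eqref{e:qEstimates}.

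The main obstacle is justifying the exact Moyal remainder formula for symbols that are not in $S_{(0)}$: $p$ grows quadratically and $a_s$ grows linearly, so the standard formula needs care. I would handle this by splitting $p$ into its second-order Taylor part, a quadratic polynomial, and an $S_{(2)}$ remainder with bounded second derivatives, or more simply by noting that the composition formula of \cite[Proposition A.1]{gaz5} for $S_{(k)}$ classes already provides the needed expansion with an $S_{(0)}$ remainder controlled by finitely many seminorms of $\partial^2 p$ and $\partial a$. For the exact quadratic part of $p$ the commutator is exactly $\tfrac1i\{p,b\}^{\rm w}$, so only the bounded-Hessian part contributes to $q_s$. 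The remaining step is to check that $t\mapsto U(s)$ is differentiable on the appropriate spaces, for instance on $\mathscr S$ or $H(m)$, so that the Duhamel formula holds. Here I would use that $e^{-isP}$ preserves $\mathscr S$ and $H(m^r)$, which follows from \eqref{eq:Psr}.
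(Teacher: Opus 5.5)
Your proposal is correct and follows essentially the same route as the paper. The paper also differentiates $s\mapsto e^{i(t-s)P}(a\circ\varphi_s)^{\rm w}e^{i(s-t)P}$, identifies $q_s^{\rm w}=-[P,(a\circ\varphi_s)^{\rm w}]-i(H_p(a\circ\varphi_s))^{\rm w}$ as the Moyal remainder of the commutator, bounds it in $S_{(0)}$ using the $S_{(k)}$ composition lemma of \cite{gaz5} together with Lemma \ref{l:flow}, and then integrates in $s$. Your remainder count $|\beta|\ge 3$ is slightly sharper than the paper's $|\beta|\ge 2$, which is harmless, and the quadratic Taylor splitting is not needed because the order-function calculus already handles the growth of $p$ and $a_s$.
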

\begin{proof}
Observe that
\begin{equation}
\label{eq:Dta} \begin{split}
D_t\big( e^{-itP}(a\circ \varphi_t)^{\rm w} e^{itP} \big)
&= e^{-itP}\big( -[P , (a\circ \varphi_t)^{\rm w}] - i (H_p(a\circ \varphi_{t}))^{\rm w} \big) e^{itP} \\
&= : e^{-itP} q_t^{\rm w} e^{itP}.
\end{split} \end{equation}
This definition of $ q_t$, \cite[Lemma A.1]{gaz5} and Lemma \ref{l:flow} 
show that there is $N>0$ such that for all $ \alpha $, 
\[ \begin{split} 
|\partial^\alpha q_t| & \leq C_\alpha\sum_{2\leq|\beta|\leq |\alpha|+N}\|\partial^\beta (a\circ\varphi_{t})\|_{\infty}  
 \leq C_\alpha e^{\Lambda(|\alpha|+N)|t|}\sum_{1\leq |\beta|\leq |\alpha|+N}\|\partial^\beta a\|_{\infty}.
\end{split}  \]
In particular, $ q_t \in S_{(0)} $. Integrating \eqref{eq:Dta}  in $t$ gives
\begin{align*}
e^{itP}a^{\rm w}e^{-itP}-(a\circ \varphi_t)^{\rm w}
&=-i\int_0^t D_s\big(e^{i(t-s)P}(a\circ\varphi_{s})^{\rm w}e^{i(s-t)P}\big)ds\\
&=-i\int_0^t e^{i(t-s)P}q_s^{\rm w}e^{i(s-t)P}ds
\end{align*}
which is \eqref{eq:defQ}. 
\end{proof}

Before proceeding to the full case of $S_{(1)}$, we prove the Egorov theorem for $S_{(0)}$. 
\begin{lemm}
\label{l:zeroEgorov}
Let $p\in S_{(2)}$ be real valued. Then, for $k\geq 0$, there are $C_k,N_k>0$ such that for all $t\in\mathbb{R}$ and $a\in S_{(0)}$, there is $a_t\in S_{(0)}$ satisfying
$$
A_t:=e^{itP}a^{\rm w}e^{-itP}=a_t^{\rm w},\qquad 
\| a_t\|_{C^k}\leq C_{k} e^{C_{k}|t|}\|a\|_{C^{N_k}}. 
$$
\end{lemm}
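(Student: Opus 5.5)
The natural route is Beals's characterisation of pseudodifferential operators, as in \cite[Theorem 8.1]{z12}, which was already used in the proof of Lemma \ref{l:funct1}. We have $A_t=a_t^{\rm w}$ with $a_t\in S_{(0)}$ and $\|a_t\|_{C^k}$ controlled by finitely many $L^2\to L^2$ norms. Precisely, $A_t\in\Psi_{(0)}$ if and only if for every $M$ and every family $\ell_1,\dots,\ell_M$ of linear functions on $\mathbb R^{2n}$,
\[ \|\ad_{\ell_1^{\rm w}}\cdots\ad_{\ell_M^{\rm w}}A_t\|_{L^2\to L^2}<\infty , \]
and each $C^k$ seminorm of the symbol is bounded by a finite sum of such norms with $M\le N(k)$. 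So it suffices to show
\[ \|\ad_{\ell_1^{\rm w}}\cdots\ad_{\ell_M^{\rm w}}A_t\|\le C_M e^{C_M|t|}\|a\|_{C^{N_M}} . \]
I will prove this by induction on $M$. The case $M=0$ is immediate: $e^{\pm itP}$ is unitary, and $\|a^{\rm w}\|\le C\|a\|_{C^{N}}$ by Calder\'on--Vaillancourt.

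For the inductive step I will commute $\ad_{\ell^{\rm w}}$ past the propagator, using that $P=p^{\rm w}$ with $p\in S_{(2)}$. Write
\[ \ad_{\ell^{\rm w}}A_t=e^{itP}\,\ad_{\ell_t^{\rm w}}(a^{\rm w})\,e^{-itP},\qquad \ell_t^{\rm w}:=e^{-itP}\ell^{\rm w}e^{itP}. \]
Since $\ell\in S_{(1)}$, Lemma \ref{l:initialEgorov} applies (with $t\mapsto -t$). It gives $\ell_t^{\rm w}=(\ell\circ\varphi_{-t})^{\rm w}-Q(\ell,-t)$, where by \eqref{e:qEstimates} the operator $Q(\ell,-t)$ is a time integral of conjugates of operators in $\Psi_{(0)}$ with seminorms $O(e^{C|t|})$.

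This already bounds the first commutator. The term $[(\ell\circ\varphi_{-t})^{\rm w},a^{\rm w}]$ lies in $\Psi_{(0)}$ with seminorms $O(e^{\Lambda|t|})$ by Lemma \ref{l:flow} and the composition formula \cite[Proposition A.1]{gaz5}, because the commutator of an $S_{(1)}$ symbol with an $S_{(0)}$ symbol is governed by first derivatives of the former. The term $[Q(\ell,-t),a^{\rm w}]$ is bounded by $2|t|\sup_s\|q_s^{\rm w}\|\,\|a^{\rm w}\|$. Together these give $M=1$.

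For higher $M$, write $\ad_{\ell_1^{\rm w}}\cdots\ad_{\ell_M^{\rm w}}A_t=e^{itP}\ad_{\ell_{1,t}^{\rm w}}\cdots\ad_{\ell_{M,t}^{\rm w}}(a^{\rm w})e^{-itP}$ and expand each $\ell_{j,t}^{\rm w}$ as above. Terms in which only the $(\ell_j\circ\varphi_{-t})^{\rm w}$ act are fine by the symbolic calculus together with Lemma \ref{l:flow}. Terms involving $Q(\ell_j,-t)$ involve commutators of the remaining $\ell^{\rm w}$'s with $e^{i(s)P}q_s^{\rm w}e^{-isP}$, for $|s|\le|t|$. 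These are exactly the objects the inductive hypothesis controls, since $q_s\in S_{(0)}$ and fewer $\ad$'s remain. To make the induction close I will strengthen the hypothesis to cover mixed expressions. The strengthened statement is: for any $b_1,\dots,b_j\in S_{(0)}$ and $|s_i|\le|t|$, products and iterated commutators with linear $\ell^{\rm w}$'s of conjugated operators $e^{is_iP}b_i^{\rm w}e^{-is_iP}$ satisfy bounds $Ce^{C|t|}\prod\|b_i\|_{C^{N}}$ when the total number of $\ad$'s is at most $M$.

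The main obstacle is bookkeeping. Each use of Lemma \ref{l:initialEgorov} costs $N$ derivatives and a factor $e^{\Lambda(|\alpha|+N)|t|}$, and the recursion nests through the time integrals. I must check that the nesting depth is at most $M$, so the losses stay finite and the constants remain of the form $C_ke^{C_k|t|}$ with $N_k$ finite. The time integrals contribute only polynomial factors in $|t|$, which are absorbed into the exponential.

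An alternative I would try if the bookkeeping becomes messy is a direct Duhamel argument. Differentiate $\ad_{\ell_1^{\rm w}}\cdots\ad_{\ell_M^{\rm w}}A_t$ in $t$ and use $[P,\ell^{\rm w}]=i(H_p\ell)^{\rm w}$, where $H_p\ell\in S_{(1)}$ is linear plus $S_{(0)}$, exactly. This gives a linear system of ODEs for the collection of multi-commutators, with bounded coupling coefficients. Gr\"onwall's inequality then yields the exponential bound, in parallel with the proof of Lemma \ref{l:flow}.
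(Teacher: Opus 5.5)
Your main line is the paper's proof, and it is correct. It uses Beals's criterion, Lemma~\ref{l:initialEgorov} at time $-t$ to write $e^{-itP}\ell^{\rm w}e^{itP}=(\ell\circ\varphi_{-t})^{\rm w}-Q(\ell;-t)$, the symbolic calculus with Lemma~\ref{l:flow} for the flow part, the trivial bound on $[Q(\ell;-t),a^{\rm w}]$, and an induction strengthened to absorb the $Q$-terms.

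The induction is organised differently from the paper's, and your version is lighter, provided you expand only the innermost commutator:
\[
\ad_{\ell^{\rm w}}A_t=e^{itP}c_t^{\rm w}e^{-itP}-i\int_0^{-t}\big[e^{-isP}q_s^{\rm w}e^{isP},A_t\big]\,ds,\qquad c_t^{\rm w}:=[(\ell\circ\varphi_{-t})^{\rm w},a^{\rm w}].
\]
Here $c_t\in S_{(0)}$ has $C^k$ seminorms $O(e^{C_k|t|}\|a\|_{C^{k+N}})$. The outer $\ad_{\ell_j^{\rm w}}$ stay linear. The level-$M$ bound, applied to $c_t$, to $q_s$ at time $-s$, and to $a$, together with the Leibniz rule, then closes the step. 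So your ``strengthened'' hypothesis comes for free.

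The paper instead expands every $\ad$, so the surviving operators are $(b_j\circ\varphi_{-t})^{\rm w}$ rather than linear. That is why it has to prove \eqref{e:multipleAd} for arbitrary $b_j\in S_{(1)}$, together with the companion bound \eqref{e:QCommutator}, after a Jacobi reordering. If you expand all the $\ell_{j,t}^{\rm w}$, as your text literally says, your hypothesis with linear $\ell$'s no longer applies as stated. Keep the one-at-a-time expansion.

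Your fallback Duhamel argument, however, rests on a false claim. For $p\in S_{(2)}$, $H_p\ell$ is a constant-coefficient combination of $\partial p\in S_{(1)}$, and this need not be linear plus $S_{(0)}$. For example, take $p=\xi^2+g(x)$ with $g'(x)=x\sin\log(1+x^2)$. Then $\ad_{(H_p\ell)^{\rm w}}$ leaves the family of multi-commutators with linear symbols, so the ODE system does not close. Enlarging the family to $b\in S_{(1)}$ does not close it either, since $\{b,p\}\notin S_{(1)}$ in general. This is exactly why the argument has to go through the flow-transported decomposition of Lemma~\ref{l:initialEgorov}.
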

\begin{proof}
For this, we employ Beals's characterisation of pseudodifferential operators~\cite[Theorem 8.12]{z12}. In particular, it is enough to show that for any $M\geq 0$ there is $N_M>0$ such that for any fixed linear functions $L_j(x,\xi)$, $j=1,\dots M$, with $\|\nabla L_j\|\leq 1$, there are $C_M>0$  such that 
\begin{equation}
\label{e:beals}
\|\ad_{L_1^{\rm w}}\dots \ad_{L_M^{\rm w}}A_t\|_{L^2\to L^2}\leq C_M e^{C_M|t|}\|a\|_{C^{N_M}}.
\end{equation}

\noindent Step 1: Boundedness of commutators with operators of linear growth.  \ 

In the notation of \eqref{eq:defQ}, and with $ B_t := e^{itP}b^{\rm w}e^{-itP}$, $ b \in S_{(1)} $, 
\begin{equation}
\label{e:singleCommutatorForm}
e^{-itP}[b^{\rm w},A_t]e^{itP}=[B_{-t},a^{\rm w}]= [(b\circ\varphi_{-t})^{\rm w}-Q(b;-t),a^{\rm w}],
\end{equation}
where 
$$
Q(b,-t):=i\int_0^{-t}e^{-i(t+s)P}q_s^{\rm w}e^{i(s+t)P}ds
$$
and $q_s$ satisfies~\eqref{e:qEstimates} with $a$ replaced by $b$. 
Then unitarity of $e^{itP}$ and~\cite[Theorem 4.23]{z12} show that there is $N>0$ such that 
$$
\|Q(b;-t)\|_{L^2\to L^2}\leq C\sum_{0\leq |\alpha|\leq N}\int_{\min(0,-t)}^{\max(0,-t)}\|\partial^\alpha q_s\|_{\infty}\leq C|t|e^{2N\Lambda|t|}\sum_{1\leq |\beta|\leq 2N} \|\partial^\beta b\|_{\infty}
$$
and hence (absorbing the $|t|$ into the exponential factor)
we conclude that for $ N_0 $ large enough, 
\begin{equation}
\label{e:commuteBound}
\|Q(b;-t)\|_{L^2\to L^2}\leq Ce^{\Lambda M_0|t|} \sum_{1\leq |\beta|\leq N_0}\|\partial^\beta b\|_{\infty}.
\end{equation}
Lemma~\ref{l:flow},~\cite[Lemma A.1]{gaz5}, and~\cite[Theorem 4.23]{z12}, combined with~\eqref{e:singleCommutatorForm} and~\eqref{e:commuteBound} then imply that there are $M_0,N_0>0$ such that for $b\in S_{(1)}$,
\begin{equation}
\label{e:singleCommutator}
\|[b^{\rm w},A_t]\|_{L^2\to L^2}\leq Ce^{ \Lambda M_0|t|}\sum_{1\leq |\beta|\leq N_0}\|\partial^\beta b\|_{\infty}\sum_{0\leq |\beta|\leq N_0}\|\partial^\beta a\|_{\infty}.
\end{equation}

\noindent Step 2: Commutators with error terms. \ 

In the inductive process, we also require the estimate: for $b_1, b\in S_{(1)}$,
\begin{equation}
\label{e:commuteBound2}
\|[(b_1\circ\varphi_{-t})^{\rm w},Q(b;-t)]\|_{L^2\to L^2}\leq Ce^{3\Lambda M_0|t|}\sum_{1\leq |\beta|\leq N_0}\|\partial^\beta b_1\|_{\infty} \sum_{1\leq |\beta|\leq N_0}\|\partial^\beta b\|_{\infty} 
\end{equation}
To prove~\eqref{e:commuteBound2}, observe that for $\tilde{b}_1\in S_{(1)}$, 
\begin{align*}
[Q(b;-t),\tilde{b}_1^{\rm w}]&=-i\int_{-t}^{0}[e^{i(-t-s)P}q_s^{\rm w}e^{i(s+t)P},\tilde{b}_1^{\rm w}]ds.
\end{align*}
Now, by~\eqref{e:qEstimates} and~\eqref{e:singleCommutator} (with $ a $ replaced by 
$ q_s $ and $ t $ by $ t + s $)
\begin{align*}
&\|[e^{i(-t-s)P}q_s^{\rm w}e^{i(s+t)P},\tilde{b}_1^{\rm w}]\|_{L^2\to L^2}\\
&\leq Ce^{M_0\Lambda |t+s|}\sum_{1\leq |\beta|\leq N_0}\|\partial^\beta \tilde{b}_1\|_{\infty}\sum_{0\leq |\alpha|\leq N_0}\|\partial^\alpha q_s\|_{\infty}\\
&\leq Ce^{\Lambda(M_0|t+s|+(N_0+N)|s|)}\sum_{1\leq |\beta|\leq N_0}\|\partial^\beta \tilde{b}_1\|_{\infty}\sum_{1\leq |\beta|\leq N_0+N}\|\partial^\beta b\|_{\infty}\\
&\leq Ce^{2\Lambda M_0|t|}\sum_{1\leq |\beta|\leq N_0}\|\partial^\beta \tilde{b}_1\|_{\infty}\sum_{1\leq |\beta|\leq N_0+N}\|\partial^\beta b\|_{\infty},
\end{align*}
where the last line follows if $M_0$ is large enough.

Now, setting $\tilde b_1=b_1\circ\varphi_{-t}$, using Lemma~\ref{l:flow}, enlarging $N_0$, choosing $M_0$ sufficiently large, and absorbing the factor $|t|$ into the exponential, we obtain~\eqref{e:commuteBound2}.

\noindent Step 3: Iterated commutators. \ 

We claim that for all $L\geq 1$, there are $C_L>0$, $N_L$ such that for all $\{b_i\}_{i=1}^L\subset S_{(1)}$, and $b\in S_{(1)}$, all $1\leq j\leq L$, and all $a\in S_{(0)}$ ($ A_t := 
e^{itP} a^{\rm{w}} e^{-i t P } $), 
\begin{equation}
\label{e:multipleAd}
\|\ad_{b_j^{\rm w}}\dots \ad_{b_1^{\rm w}} A_t\|_{L^2\to L^2}\leq C_Le^{C_L|t|}\sum_{0\leq |\alpha|\leq N_L}\|\partial^\alpha a\|_{\infty}\prod_{\ell=1}^j\sum_{1\leq |\beta|\leq N_L}\|\partial^\beta b_\ell\|_{\infty}
\end{equation}
and for $t\in\mathbb{R}$, and $0\leq j\leq L$,
\begin{equation}
\label{e:QCommutator}
\begin{split}
& \|\ad_{(b_j\circ\varphi_{-t})^{\rm w}}\dots \ad_{(b_1\circ \varphi_{-t})^{\rm w}} Q(b;-t)\|_{L^2\to L^2} \\
& \ \ \ \ \ \ \  \leq C_L e^{C_L|t|}\sum_{1\leq |\alpha|\leq N_L}\|\partial^\alpha b\|_{\infty}\prod_{\ell=1}^j\sum_{1\leq |\beta|\leq N_L}\|\partial^\beta b_\ell\|_{\infty}.
\end{split}
\end{equation}
Since~\eqref{e:multipleAd} implies~\eqref{e:beals} the proof of the lemma will be complete once that inequality is established. 

For $ L =1 $ we have proved~\eqref{e:multipleAd} and~\eqref{e:QCommutator} in steps 1 and 2 respectively. Therefore, we suppose by induction that~\eqref{e:multipleAd} and~\eqref{e:QCommutator} hold for some $L\geq 1$. Let 
$\{b_i\}_{i=1}^{L+1}\in S_{(1)}$,  set 
$$B_{i,1}:=(b_i\circ\varphi_{-t})^{\rm w},\qquad B_{i,0}:=-Q(b_i;-t),$$ 
and consider 
\begin{equation}
\label{e:expanded}
\begin{aligned}
e^{-itP}(\ad_{b_{L+1}^{\rm w}}\dots \ad_{b_1^{\rm w}} A_t)e^{itP}&=\ad_{B_{L+1,1}+B_{L+1,0}}\dots \ad_{B_{1,1}+B_{1,0}}a^{\rm w}\\
&=\sum_{\vec{\sigma}\in \{0,1\}^{L+1}}\ad_{B_{L+1,\vec{\sigma}_{L+1}}}\dots \ad_{B_{1,\vec{\sigma}_1}}a^{\rm w},
\end{aligned}
\end{equation}
where we used \eqref{e:singleCommutatorForm} to obtain the first equality. The second equality follows from expanding $ \ad_{ B_{j,1} + B_{j,0} } = \ad_{ B_{j,1} } + \ad_{ B_{j,0} }$. 

The Jacobi identity, $\ad_{A_1}\ad_{A_2}= \ad_{A_2}\ad_{A_1}+\ad_{[A_1,A_2]}$, 
allows us to rewrite this so that all the terms with $ \sigma_j = 1 $ are on the right:
\begin{equation}
\label{e:reordered}
\ad_{B_{L+1,\vec{\sigma}_{L+1}}}\dots \ad_{B_{1,\vec{\sigma}_1}}a^{\rm w}=\sum_{\gamma\in \Gamma_{\vec{\sigma}}} c_{\gamma}\ad_{\widetilde{Q}_{\gamma,1}}\dots \ad_{\widetilde{Q}_{\gamma,R_\gamma}}\ad_{B_{j_{\gamma,1},1}}\dots \ad_{B_{j_{\gamma,J_\gamma},1}}a^{\rm w},
\end{equation}
where for a fixed $ \vec \sigma \in \{ 0, 1 \}^{L+1}$, $ \Gamma_{\vec \sigma } $ is the set of
terms obtained after this reordering, with $ \gamma $ labelling its elements, 
$c_{\gamma}\in\{1,-1\}$, and 
$$
\widetilde{Q}_{\gamma,j}:=\ad_{B_{\ell_{\gamma,j,1},1}}\ad_{B_{\ell_{\gamma,j,2},1}}\dots\ad_{B_{\ell_{\gamma,j,M_{\gamma,j}},1}}Q(b_{i_{\gamma,j}};-t). 
$$
Here $0\leq M_{\gamma,j}$ (with the convention that when $ M_{\gamma, j } = 0$, 
$ \widetilde{Q}_{\gamma,j} = Q( b_{ i_{\gamma, j }} ; -t $) and 
$$R_{\gamma}+\sum_{j=1}^{R_{\gamma}}M_{\gamma,j}+J_{\gamma}=L+1. $$
Each $b_{i}$ appears exactly once in each summand in ~\eqref{e:reordered}.

To estimate the terms in \eqref{e:reordered}, we first see that~\cite[Lemma A.1]{gaz5}, \cite[Theorem 4.23]{z12}, and Lemma \ref{l:flow} imply there is $N_0$ such that 
\[ \begin{split}
& \|\ad_{B_{j_{\gamma,1},1}}\dots \ad_{B_{j_{\gamma,J_\gamma},1}}a^{\rm w}\|_{L^2\to L^2} \leq \sum_{|\beta|\leq J_{\gamma}+N_0}\|\partial^\beta a\|_{\infty}\prod_{\ell=1}^{J_{\gamma}}\sum_{1\leq |\beta|\leq N_0+J_{\gamma}}\|\partial^\beta b_{j_{\gamma,\ell}}\circ\varphi_{-t}\|_{\infty}\\
&\ \ \ \ \ \ \ \ \ \ \ \ \ \ \ \ \ \ \ \ \ \ \ \ \ \ \ \ \ \leq Ce^{J_\gamma(N_0+J_\gamma)\Lambda |t|}\sum_{|\beta|\leq J_{\gamma}+N_0}\|\partial^\beta a\|_{\infty}\prod_{\ell=1}^{J_{\gamma}}\sum_{1\leq |\beta|\leq N_0+J_\gamma}\|\partial^\beta b_{j_{\gamma,\ell}}\|_{\infty}.
\end{split} \]
Then, the inductive hypothesis~\eqref{e:QCommutator} implies that 
\[
\begin{split} 
\|\ad_{\widetilde{Q}_{\gamma,j}}A\|_{L^2\to L^2} & \leq 2 \| \widetilde{Q}_{\gamma,j} \|_{L^2 \to L^2}  \|A \|_{L^2 \to L^2} \\
& \leq 
2C_Le^{C_L|t|}\|A\|_{L^2\to L^2}\sum_{1\leq |\alpha|\leq N_L}\|\partial^\alpha b_{i_{\gamma,j}}\|_{\infty}\prod_{m=1}^{M_{\gamma,j}}\sum_{1\leq |\beta|\leq N_L}\|\partial^\beta b_{\ell_{\gamma,j,m}}\|_{\infty}.
\end{split} \]
Hence using the unitarity of $e^{itP}$, there are $\tilde{C}_{L+1}, \tilde{N}_{L+1}>0$ such that 
\begin{equation}
\label{e:weHaveAlmostInducted}
\begin{split} 
& \|\ad_{b_{L+1}^{\rm w}}\dots \ad_{b_1^{\rm w}} A_t\|_{L^2\to L^2}\\
& \ \ \ \ \ \ \leq \tilde{C}_{L+1}e^{\tilde{C}_{L+1}|t|}\sum_{|\beta|\leq \tilde{N}_{L+1}}\|\partial^\beta a\|_{\infty}\prod_{\ell=1}^{L+1}\sum_{1\leq |\beta|\leq \tilde{N}_{L+1}}\|\partial^\beta b_\ell\|_{\infty}.
\end{split} \end{equation}
In particular~\eqref{e:multipleAd} holds with $L$ replaced by $L+1$.

To prove~\eqref{e:QCommutator} with $L$ replaced by $L+1$, we write
\begin{align*}
&\|\ad_{(b_{L+1}\circ\varphi_{-t})^{\rm w}}\dots \ad_{(b_1\circ \varphi_{-t})^{\rm w}} Q(b;-t)\|_{L^2\to L^2}\\
&\leq \int_{\min(0,-t)}^{\max(-t,0)}\|\ad_{(b_{L+1}\circ\varphi_{-t})^{\rm w}}\dots \ad_{(b_1\circ \varphi_{-t})^{\rm w}} e^{i(-t-s)P}
q_s^{\rm w}e^{i(s+t)P}\|_{L^2\to L^2}ds
\end{align*}
where $q_s$ satisfies~\eqref{e:qEstimates}. 
Hence, applying~\eqref{e:weHaveAlmostInducted} with $a=q_s$, and $b_j$ replaced by $b_j\circ\varphi_{-t}$, we have
\begin{align*}
&\|\ad_{(b_{L+1}\circ\varphi_{-t})^{\rm w}}\dots \ad_{(b_1\circ \varphi_{-t})^{\rm w}} Q(b;-t)\|_{L^2\to L^2}\\
&\leq \int_{\min(0,-t)}^{\max(-t,0)}\tilde{C}_{L+1}e^{\tilde{C}_{L+1}|s+t|}\sum_{0\leq |\alpha|\leq \tilde{N}_{L+1}}\|\partial^\alpha q_s\|_{\infty}\prod_{\ell=1}^{L+1}\sum_{1\leq |\beta|\leq \tilde{N}_{L+1}}\|\partial^\beta b_\ell\circ\varphi_{-t}\|_{\infty}\\
&\leq C_{L+1}e^{C_{L+1}|t|}\sum_{1\leq |\alpha|\leq \tilde{N}_{L+1}+N}\|\partial^\alpha b\|_{\infty}\prod_{\ell=1}^{L+1}\sum_{1\leq |\beta|\leq \tilde{N}_{L+1}}\|\partial^\beta b_\ell\|_{\infty},
\end{align*}
which completes the proof of~\eqref{e:QCommutator} with $L$ replaced by $L+1$.

Taking $N_{L+1}\geq \max(\tilde{N}_{L+1}+N, N_L,N_0+L+1)$ and taking $C_{L+1}\geq \tilde{C}_{L+1}$,~\eqref{e:weHaveAlmostInducted} implies~\eqref{e:multipleAd} with $L$ replaced by $L+1$. Thus we have shown that~\eqref{e:multipleAd} and~\eqref{e:QCommutator} hold for all $L\geq 1$. 
\end{proof}

We now complete the proof of the Egorov theorem by reducing the proof for symbols in $S_{(1)}$ to that of $S_{(0)}$. 
\begin{proof}[Proof of Theorem~\ref{t:egorov}]
Observe that by Lemma~\ref{l:initialEgorov}
$$
e^{itP}b^{\rm w}e^{-itP}=(b\circ\varphi_t)^{\rm w}-i\int_0^t e^{i(t-s)P}q_s^{\rm w}e^{i(s-t)P}ds,
$$
where $q_s$ satisfies~\eqref{e:qEstimates}. By Lemma~\ref{l:zeroEgorov}, there is $q_{t,s}$ such that 
$$
e^{i(t-s)P}q_s^{\rm w}e^{i(s-t)P}=q_{t,s}^{\rm w},
$$
and 
\begin{align*}
\|q_{t,s}\|_{C^k}\leq C_ke^{C_k|t-s|}\|q_s\|_{C^{N_k}}&\leq C_ke^{C_k|t-s|+\Lambda (N_k+N)|s|}\sum_{1\leq |\alpha|\leq N_k+N}\|\partial^\alpha b\|_{\infty}\\
&\leq \tilde{C}_k e^{\tilde{C}_k(|t-s|+|s|)}.
\end{align*}
Hence we get 
\[ 
-i\int_0^t e^{i(t-s)P}q_s^{\rm w}e^{i(s-t)P}ds= r_t^{\rm w},
\ \ \ \ r_t:=-i\int_0^t q_{t,s}ds \]
where $ \|r_{t}\|_{C^k}\leq |t|\tilde{C}_k e^{\tilde{C}_k|t|} $.
This completes the proof.
\end{proof}

Proof of Theorem \ref{t:4} is now immediate: we apply Theorem \ref{t:egorov} and Lemma \ref{l:flow}. 

%


\appendix

\section{Some facts about pseudodifferential operators}
\label{a:a}

We first consider the general assumptions in \S \ref{s:ex1} and
use the results of \cite[\S 8.2]{z12} to see that there exists 
\begin{equation}
\label{eq:defG}  G = e^{ g^{\rm{w}}  ( x, D ) }  ,  \ \   G^s := e^{ s g^{\rm{w}} ( x, D ) } = e_s^{\rm{w}}  ( x, D ), \ \ 
e_s \in S ( m^{s } ), \ \ s \in \mathbb R . 
\end{equation}
From this we conclude that $ P G^{-1} = r ( x, D ) $ , $ r \in S( 1 ) $, and
that $ r(x, D)^{-1} = G P^{-1} : L^2 \to L^2  $ exists (since the domain of $ P $ was assumed to be 
$ H(m) $, $ P^{-1} : L^2 \to H(m) $ and $ G : H(m)  \to L^2 $). 
Beals's Lemma (see \cite[Theorem 8.3]{z12}) shows that
$ r(x,D)^{-1} = \tilde r( x, D) $, $ \tilde r \in S ( 1 ) $, and hence, 
\begin{equation}
\label{eq:Pinv}  P^{-1} = G^{-1} \tilde r ( x, D) = 
q ( x, D ),  \ \ \  q \in S ( m^{-1} ) . \end{equation}
  Then the assumptions in Theorem \ref{t:2} follow from 
\cite[Theorems 4.23 and 8.12]{z12}. 
In Example 3, we can compute $ c^{\rm{w}} ( x, D ) := [ P, A ] $, $ c_1^{\rm{w}} ( x, D ) :=  [ P , [P , A ] ]$ explicitly, and show that $ c \in S ( m^{\frac12} ) $ and $ c_1 \in S ( m^{ 1-\delta} ) $, 
$ m := ( 1 + |x| + |\xi| )^2 $. 

Finally we comment on the condition under which $ e^{-P} $ is of trace class. In the general case of
a self-adjoint 
$ P$ satisfying \eqref{eq:asP1} (with the domain $ H ( m ) $), $ P \geq 1 $, 
we  use \eqref{eq:Pinv} to conclude that 
$ P^{-\ell}  = q_\ell^{\rm{w}} ( x,D ) $, $ q_\ell \in  S( m^{-\ell } ) $. 
If $ m^{-\ell} \in L^2( \mathbb R^{2n} , dx d\xi ) $ then  $ P^{-\ell } $ is a Hilbert--Schmidt operator 
(see \eqref{eq:defHS} below) and consequently 
$ P^{-2\ell } $ is of trace class. Since for $ P \geq 0 $,   $\exp (-P ) \leq (2 \ell ) ! P^{-2\ell} $, it follows that
\(e^{-P}\) is of trace class.

\section{Localised generator for the harmonic oscillator}
\label{a:b}

The appendix shows how the localised generators (and the Davies generator) work for the
harmonic oscillator and $ \mathcal A $ consisting of linear operators. In this case 
general operator-algebraic and quantum-information frameworks collapse
to an explicit phase-space computation: exact Egorov and the exact Weyl
calculus for quadratic and linear symbols give a 
Fokker--Planck operator. For additional simplicity, we present this in the case one dimension (resulting in a two dimensional 
Fokker--Planck operator) but the same methods work in all dimensions. We also stress
that quantum
Ornstein--Uhlenbeck semigroups and harmonic-oscillator Gibbs samplers
have been studied in much greater generality in
\cite{cipr},\cite{beck} and \cite{smid}. Thus the appendix should only be read as a
comparison and normalisation check in an explicit setting.

\subsection{The localised construction}

We take
\begin{equation}
\label{eq:harm} 
 P=D_x^2+x^2 = p^{\rm{w}} ( x, D ) , \ \  p = \xi^2 + x^2, \ \ \
  \mathcal A=\{x,D_x\}.
\end{equation}
For the harmonic oscillator the Hamiltonian flow is given by 
$ x ( t ) = x (0 )\cos 2t + \xi ( 0) \sin 2t $, $ \xi ( t ) = \xi ( 0 ) \cos 2t - x ( 0) \sin 2t $, and 
Egorov's theorem is exact \cite[Lemma 11.8, Theorem 11.9]{z12}. Hence, 
\begin{equation*}
 e^{itP}xe^{-itP}=x\cos 2t+D_x\sin 2t, \ \ \  e^{itP}D_xe^{-itP}=D_x\cos 2t-x\sin 2t .
\end{equation*}
It is convenient to use the annihilation and creation operators
\begin{equation*}
 a=x+iD_x, \ \   a^*=x-iD_x , \ \ \  e^{itP}ae^{-itP}=e^{-2it}a, \ \ 
 e^{itP}a^*e^{-itP}=e^{2it}a^* .
\end{equation*}
Consequently the dissipative part of $\mathcal L_f$ becomes
\begin{equation}
 \mathcal D_f(T)=\kappa_-
 \left(aTa^* -\frac12\{a^*a,T\}\right)
 +\kappa_+
 \left(a^*Ta -\frac12\{aa^*,T\}\right),
\end{equation}
where
\begin{equation}
 \kappa_\pm :=\tfrac12\int_{\RR}\gamma(\omega)|\widehat f(\omega\mp2)|^2\,d\omega.
\end{equation}
The coherent correction
$B$ vanishes for \eqref{eq:harm}. The balance condition which makes $e^{-P}$ stationary gives
\begin{equation}
\label{eq:apba}
 {\kappa_-}/{\kappa_+}=e^2 .
\end{equation}
We should note that in the quadratic case \eqref{eq:harm} the Davies generator has the same structure:
in view of Proposition \ref{p:loc2Da} (or by a direct calculation) we obtain the same formula but now with
$  \kappa_\pm   = \frac 12 \gamma ( \pm 2 ) $. 

\subsection{Action on Weyl symbols}

Let $T=a^w(x,D_x)$ be the Weyl quantization of $a=a(x,\xi)$, with $h=1$.
Since $p$ is quadratic and the symbols of $a_\pm$ are linear, the
composition formula \cite[Theorem 4.18]{z12} has no remainders. In the terminology of 
\cite[\S 1.1]{gaz5} that means that the quantum (Lindblad) evolution agrees with the Fokker--Planck evolution:
\begin{equation}
\label{eq:L2Q} 
 \mathcal L_f(a^w)=(Q_fa)^w ,
\end{equation}
where 
\begin{equation}
\label{eq:Qf}
\begin{gathered}
 Q_f=-\beta^{-1}H_p+
 \delta(x\partial_x+\xi\partial_\xi+2)+D(\partial_x^2+\partial_\xi^2), \\
 \delta:=\kappa_- -\kappa_+, \qquad D:= \tfrac12 ( \kappa_-+\kappa_+) . 
 \end{gathered} 
\end{equation}
The Hamilton vector field, $ H_p = 2 \partial \xi \partial_x - 2 x \partial \xi $, generates
 a rotation, and the other term is an Ornstein--Uhlenbeck operator on $\RR^2_{x,\xi}$.
One checks that
\begin{equation}
 \rho(x,\xi)=C e^{-\alpha(x^2+\xi^2)}, \qquad
 \alpha=\frac{\delta}{2D}=\frac{\kappa_- -\kappa_+}{\kappa_-+\kappa_+} ,
\end{equation}
satisfies $ Q_f \rho = 0 $. 
In view of the balance condition \eqref{eq:apba},  $ \alpha=\tanh 1$, 
and $ \rho ( x, \xi ) $ is the Weyl symbol of the Gibbs state:
\begin{equation}
\label{eq:Meh}
 \rho(x,\xi)= 2 \tanh(1) e^{-\tanh(1)(x^2+\xi^2)} , \ \ \ 
 e^{-P}/ {\rm{tr}}\, e^{-P} = \rho^{\rm{w}} ( x, D )  . 
\end{equation}
(This follows from Mehler's formula and can be seen formally from \cite[Lemma 11.8, Theorem 11.9]{z12}
by taking imaginary time.)  

\subsection{Fokker--Planck operator on weighted spaces}
\label{s:weight}
The non-self-adjoint operator $ Q_f $ (defined in the maximal domain -- see \cite[Proposition A.2]{gaz5}) does not have a spectral gap on $ L^2 ( \mathbb R^2 ) $ -- its spectrum can be shown to be 
$ \{ z \in \mathbb C : \Re z \leq 0  \} $ (it is easiest to see on the Fourier transform side and then in polar coordinates). We note here $ L^2 $ on symbols corresponds the Hilbert--Schmidt norm on operators:
\begin{equation}
\label{eq:defHS}  \| a^{\rm{w}} ( x, D ) \|_{\rm{HS}}^2 = \frac{1}{ 2  \pi} \int_{ \mathbb R^2 } | a ( x, \xi ) |^2 dx d\xi =: \| a \|_{ \rm{HS}} ^2 .\end{equation}
This is the $L^2$-norm we will use in this section. 

To obtain an operator with discrete spectrum 
we conjugate $ Q_f $  by the square root of the equilibrium density:
\begin{equation}
 \widetilde Q_f:=\rho^{-\frac12}Q_f\rho^{\frac12},
\end{equation}
\begin{equation}
 \widetilde Q_f=-\beta^{-1}H_p+D(\partial_x^2+\partial_\xi^2)
 -\frac{\delta^2}{4D}(x^2+\xi^2)+\delta .
\end{equation}
The discreteness of the spectrum comes from the compactness of the resolvent of $ \widetilde Q_f $ --
see \cite[Theorem 4.28]{z12}. 

In polar coordinates, 
$r^2=x^2+\xi^2$, $ \tan \theta = \xi/x $, and with $\alpha=\delta/(2D)$,
\begin{equation}
\label{eq:Qpolar} 
 \widetilde Q_f= 2 \beta^{-1} \partial_\theta - D\left( - \partial_r^2 - \frac1 r  \partial_r - 
 \frac1 {r^2} \partial_\theta^2 +\alpha^2r^2 \right)+\delta .
\end{equation}
This representation immediately shows that $ \widetilde Q_f $ is normal and then that the eigenfunctions are 
\begin{equation}
 \varphi_{q,m}(r,\theta)=C_{q,m}r^{|m|}L_q^{|m|}(\alpha r^2)
 e^{-\alpha r^2/2}e^{im\theta}, \qquad q\in\NN_0,\quad m\in\ZZ,
\end{equation}
where $L_q^{|m|}$ denotes the associated Laguerre polynomial. The corresponding 
eigenvalues are 
\begin{equation}
 \lambda_{q,m}=-\delta(2q+|m|)+2i\beta^{-1}m,
 \qquad q\in\NN_0,\quad m\in\ZZ .
\end{equation}
For finite $\beta$, these eigenvalues are simple. If the rotation term is removed,
then the spectrum collapses to $\{-\delta N:N\in\NN_0\}$, with multiplicity $N+1$.
It follows that $0$ is a simple eigenvalue and the rest of the spectrum satisfies
\begin{equation}
 \operatorname{Re}\lambda\leq -\delta .
\end{equation}
 The $HS$-normalised (see \eqref{eq:defHS}) ground state is given by 
\begin{equation}
\label{eq:ground}  \varphi_{0} ( r ) = \varphi_{0,0} ( r ) = \sqrt{ 2\alpha }   e^{ - \alpha r^2/2 } = 
  \rho, \ \ \ \alpha = \tanh ( 1) . 
\end{equation}
(We will see an easier derivation of the spectral properties  
in \S \ref{s:FBI}.)

The spectral gap is
\begin{equation}
 \delta=\kappa_- -\kappa_+ .
\end{equation}
Since $\widetilde Q_f$ is normal, the spectral theorem gives
\begin{equation}
\label{eq:decaytQ}
 \|e^{t\widetilde Q_f} u - \langle u, \varphi_0 \rangle_{\rm{HS}}  \varphi_0 \|_{L^2(\RR^2)}\leq e^{-\delta t}\|u
 -  \langle u, \varphi_0 \rangle \varphi_0 \|_{L^2(\RR^2)} .
\end{equation}

\subsection{A Fokker--Planck operator as a quantum Gibbs sampler}
We now address \eqref{eq:Lbeta} for $ \mathcal L =  \mathcal L_f $ or equivalently, in view of 
\eqref{eq:L2Q}, for the Fokker--Planck evolution: if $ A = a^{\rm{w}} ( x, D ) $ and 
$ B = b^{\rm{w}} ( x, D ) $ then 
\[  {\rm{tr}} \,  A e^{ t \mathcal L_f } B = {\rm{tr}} \,  a^{\rm{w}} ( x, D ) ( e^{t Q_f} b) ^{\rm{w}} ( x, D ) =
\frac{1}{ 2 \pi} \int_{\mathbb R^2}  a ( x, \xi ) (e^{ t Q_f } b ) ( x, \xi ) d x d \xi ,\]
and, 
\begin{equation}
\label{eq:B2Gibbs}  \begin{split}  {\rm{tr}} \,  A e^{ t \mathcal L_f } B - \langle A \rangle & = 
 {\rm{tr}} \,  ( A ( e^{ t \mathcal L_f } B - e^{-P} /  {\rm{tr}}  e^{-P} ) \\ & = 
 \frac{1}{ 2 \pi} \int_{\mathbb R^2}  a ( x, \xi )(  (e^{ t Q_f } b ) ( x, \xi ) - c(b) \rho ( x, \xi ) ) d x d \xi ,
 \end{split} 
  \end{equation}
 where $ \rho $ was given in \eqref{eq:Meh}.
 
 To analyse convergence we translate \eqref{eq:decaytQ} to an estimate for $ e^{t Q_f } $ on the weighted space $L^2 ( \rho^{-1} dx d \xi ) $. For that we write:
 \[ e^{ t Q_f } b - c(b)  \rho = \rho^{\frac12} ( e^{ t \widetilde Q_f }( \rho^{-\frac12} b) -   c(b)  \varphi_0  ) , \]
 and hence, with 
 \[  c ( b ) =\langle \rho^{-\frac12 } b , \varphi_0 \rangle_{\rm{HS}} =
 \frac{1}{ 2 \pi}  \int_{\mathbb R^2 } b ( x, \xi ) d x d \xi = \tr B . \]
 (Once convergence is established it also follows from the fact $ \tr e^{ t Q_f } B = \tr B $.)
  
Now, from \eqref{eq:decaytQ}:
\begin{equation}
 \| e^{ t Q_f } b - \tr b^{\rm{w}} \,  \rho\|_{L^2(\rho^{-1}\,dx\,d\xi)}
 \leq e^{-\delta t}
 \|b - \tr b^{\rm{w}} \,  \rho\|_{L^2(\rho^{-1}\,dx\,d\xi)} , 
\end{equation}
and hence, returning to \eqref{eq:B2Gibbs},  if $ \tr B = 1 $ then
\begin{equation}
\label{eq:decayQ}  | {\rm{tr}} \,  A e^{ t \mathcal L_f } B - \langle A \rangle | \leq 
e^{- \delta t } \| a \|_{ L^2 (\rho dx d \xi ) } \| b - \rho \|_{ L^2( \rho^{-1} dx d\xi )} .
\end{equation}

We note that $ b ( x, \xi )  $ has to be very localised for it to be square integrable with respect to 
$ \rho ( x, \xi )^{-1} dx d \xi = (2 \tanh(1))^{-1} e^{ \tanh (1) ( x^2 + \xi^2 ) } dx d\xi $. On the other hand observables
$ A = a^{\rm{w}} ( x, \xi ) $ can belong to very general classes as square integrability is now required
only with respect to $  e^{ -\tanh (1) ( x^2 + \xi^2 ) } dx d\xi$. We also remark that we did not insist on $
b ^{\rm w}$ being of trace class: once $ b \in L^2 ( \rho^{-1} dx d\xi ) $ then $ b \in L^1 ( dx d\xi ) $ and the formal 
trace $ ( 2\pi )^{-1} \int b ( x, \xi ) dx d \xi $ is well defined. 

We conclude this section with two remarks:

\noindent 1. The resolvent of 
\[ ( Q_f - z )^{-1}  = \rho^{-\frac12} ( \widetilde Q_f - z )^{-1} 
\rho^{\frac12} : L^2 ( \rho^{-1} dx d\xi ) \to   L^2 ( \rho^{-1} dx d\xi ) , \]
is meromorphic, with poles at the eigenvalues of $ \widetilde Q_f $. The principal pole occurs at 
$ 0 $ and the singular part of the resolvent comes from the corresponding singular part of 
$ (  \widetilde Q_f - z )^{-1} $,  $ v \mapsto  - z^{-1} \langle v, \varphi_0 \rangle_{HS}  \varphi_0 $:
\[  ( Q_f - z )^{-1} = -   z^{-1} \rho \otimes 1   + R ( z ) , \ \ \ \  ( \rho \otimes 1 ) ( v ):= \rho ( x, \xi ) \frac{1}{ 2 \pi }
\int_{\mathbb R^2}  v ( y, \eta ) d y d \eta , \]
where $ R ( z ) $ is holomorphic in $ \Re z > -  \delta $. 

\noindent 2. The $L^2$--norm with respect to $ \rho^{-1} dx d \xi $ has similar features as 
the inverse Kubo--Martin--Schwinger (KMS) norm used in, say, \cite{chen}. For pseudodifferential operators 
it is defined as
\[ \| a^{\rm{w}} ( x, D ) \|_{\rm{KMS},* }^2 := \tr \left( (\rho^{\rm{w}} )^{-\frac12} a^{\rm{w}}  (\rho^{\rm{w}} )^{-\frac12}
( a^{\rm{w}}  )^* \right) = \|  (\rho^{\rm{w}} )^{-\frac14} a^{\rm{w}}  (\rho^{\rm{w}} )^{-\frac14} \|_{\rm{HS}}^2 .  \]
(For classes of $ a's $ with sufficient decay at infinity for this to make sense).

\subsection{The spectral calculation on the FBI side}
\label{s:FBI}

The calculation of the spectrum at the end of \S \ref{s:weight} becomes easier on the FBI
transform side -- see \cite[Chapter 13]{z12}.
To see that we put
\begin{equation}
 y=(x,\xi)\in\RR^2, \qquad z=(z_1,z_2)\in\CC^2 .
\end{equation}
We then choose the Bargmann/FBI transform so that 
the harmonic oscillator
\begin{equation}
 -\Delta_y+\alpha^2 y^2 = \alpha^2 ( - h^2 \Delta + y^2 ) , \ \ \ h = 1/\alpha, 
\end{equation}
takes a simple form  -- see \cite[Example (ii), \S  13.5.3]{z12}. We recall that the transform introduced there 
$ T_\alpha : L^2 ( \mathbb R^2, dy ) \to H_{\Phi} ( \mathbb C^2 )  $, 
\[ 
 H_{\Phi} ( \mathbb C^2 ) :=\left\{u\in \mathscr O (\CC^2):
 \|u\|_{\Phi}^2 :=  \int_{\CC^2}|u(z)|^2e^{-\alpha \Phi( z) } \,dm(z) <\infty\right\},  \
 \ \  \Phi (z):=\tfrac{1}{2} |z|^2 . 
\] 
is unitary. Since, 
\[ 
 T_\alpha(-\Delta_y+\alpha^2|y|^2)T_\alpha ^*
 =2\alpha\langle z,\partial_z\rangle+2\alpha,
\ \ \ \  \langle z,\partial_z\rangle=z_1\partial_{z_1}+z_2\partial_{z_2} .
 , \]
and 
for $H_p$,
\begin{equation*}
 T_\alpha H_p T_\alpha^*=2(z_2\partial_{z_1}-z_1\partial_{z_2}),
\end{equation*}
we have 
\begin{equation*}
 B_f:= T_\alpha \widetilde Q_f T_\alpha^* =  -\delta\langle z,\partial_z\rangle
 -2\beta^{-1}(z_2\partial_{z_1}-z_1\partial_{z_2}) . 
\end{equation*}
To compute the spectrum we write
\begin{equation*}
 z_+=z_1+iz_2, \qquad z_-=z_1-iz_2 , \ \ \ \ 
 \langle z,\partial_z\rangle z_+^jz_-^k=(j+k)z_+^jz_-^k,
\end{equation*}
and
\begin{equation*}
 (z_2\partial_{z_1}-z_1\partial_{z_2})z_+^jz_-^k=-i(j-k)z_+^jz_-^k .
\end{equation*}
Hence
\begin{equation*}
 B_f z_+^jz_-^k=\left(-\delta(j+k)+2i\beta^{-1}(j-k)\right)z_+^jz_-^k,
 \qquad j,k\in\NN_0 .
\end{equation*}
Thus the spectrum is
\begin{equation}
 \lambda_{j,k}=-\delta(j+k)+2i\beta^{-1}(j-k),
 \qquad j,k\in\NN_0 .
\end{equation}
This is the same list as in the polar-coordinate calculation. The translation between
the two parametrisations is
\begin{equation}
 N=j+k=2q+|m|, \qquad m=j-k .
\end{equation}
Thus the Laguerre modes in the real picture are replaced by monomials in the FBI
picture. In particular, $1$ is the unique eigenfunction for the eigenvalue $0$, and all
other eigenvalues satisfy $\operatorname{Re}\lambda\leq -\delta$.

\end{document}